\def\denseformat{
\setlength{\textheight}{9in}
\setlength{\textwidth}{6.9in}
\setlength{\evensidemargin}{-0.2in}
\setlength{\oddsidemargin}{-0.2in}
\setlength{\headsep}{10pt}
\setlength{\topmargin}{-0.3in}
\setlength{\columnsep}{0.375in}
\setlength{\itemsep}{0pt}
}
\newtheorem{theorem}{Theorem}[section]
\newtheorem{claim}[theorem]{Claim}
\newtheorem{lemma}[theorem]{Lemma}
\newtheorem{corollary}[theorem]{Corollary}
\newtheorem{problem}[theorem]{Problem}
\newtheorem{observation}[theorem]{Observation}
\def\boldhead#1:{\par\vskip 7pt\noindent{\bf #1:}\hskip 10pt}
\def\ithead#1:{\par\vskip 7pt\noindent{\it #1:}\hskip 10pt}
\def\inline#1:{\par\vskip 7pt\noindent{\bf #1:}\hskip 10pt}
\def\midinline#1:{\par\noindent{\bf #1:}\hskip 10pt}
\def\dnsinline#1:{\par\vskip -7pt\noindent{\bf #1:}\hskip 10pt}
\def\ddnsinline#1:{\newline{\bf #1:}\hskip 10pt}
\def\largeinline#1:{\par\vskip 7pt\noindent{\large\bf #1:}\hskip 10pt}
\long\def\comment #1\commentend{}
\long\def\commhide #1\commhideend{}
\long\def\commfull #1\commend{#1}
\long\def\commabs #1\commenda{}
\long\def\commtim #1\commendt{#1}
\long\def\commb #1\commbend{}
\long\def\commedit #1\commeditend{} 
\long\def\commB #1\commBend{}       
\long\def\commex #1\commexend{}     
\long\def\commsiena #1\commsienaend{}  
\long\def\commBI #1\commBIend{}  
\long\def\CProof #1\CQED{}
\def\blackslug{\hbox{\hskip 1pt \vrule width 4pt height 8pt
    depth 1.5pt \hskip 1pt}}
\def\QED{\quad\blackslug\lower 8.5pt\null\par}
\def\inQED{\quad\quad\blackslug}
\long\def\PPP#1{\noindent{\bf Proof:}{ #1}{\quad\blackslug\lower 8.5pt\null}}
\long\def\denspar #1\densend
\newif\ifnotesw\noteswtrue
\ifnotesw\marginpar[\hfill\(\top\)]{\(\top\)}\fi}%
\ifnotesw\marginpar[\hfill\(\bot\)]{\(\bot\)}\fi}
\newcommand{\mnote}[1]%
    {\ifnotesw\marginpar%
        [{\scriptsize\it\begin{minipage}[t]{\marginparwidth}
        \raggedleft#1%
                        \end{minipage}}]%
        {\scriptsize\it\begin{minipage}[t]{\marginparwidth}
        \raggedright#1%
                        \end{minipage}}%
    \fi}
\def\tR{{\tilde R}}
\def\MathF{\hbox{\rm I\kern-2pt F}}
\def\MathP{\hbox{\rm I\kern-2pt P}}
\def\MathR{\hbox{\rm I\kern-2pt R}}
\def\MathZ{\hbox{\sf Z\kern-4pt Z}}
\def\MathN{\hbox{\rm I\kern-2pt I\kern-3.1pt N}}
\def\MathC{\hbox{\rm \kern0.7pt\raise0.8pt\hbox{\footnotesize I}
\kern-4.2pt C}}
\def\MathQ{\hbox{\rm I\kern-6pt Q}}
\newsavebox{\ttop}\newsavebox{\bbot}
\def\eps{\epsilon}
\def\nin{{~\not \in~}}
\begin{document}

\def\varrhol{{et al.~}}
\def\Sp{\mathit{Sp}}
\def\ReadDg{\mathit{Read\_Edge}}
\def\wmax{{{\hat \omega}}}
\def\ttl{\mathit{ttl}}
\def\Rnd{\mathit{Round}}
\def\deg{\mathit{deg}}
\def\ctr{\mathit{ctr}}
\def\ScDgs{\mathit{Scanned\_Edges}}
\def\SCD{\mathit{SCANNED}}
\def\NOTSCD{\mathit{NOTSCANNED}}
\def\CRASH{\mathit{CRASH}}
\def\CR{\mathit{CRASH}}
\def\CRASHED{\mathit{CRASHED}}
\def\SyncIncr{\mathit{Sync\_Incr}}
\def\actdeg{\mathit{actdeg}}
\def\OLD{\mathit{OLD}}
\def\NEW{\mathit{NEW}}
\def\mark{\mathit{mark}}
\def\AsyncRnd{\mathit{Async\_Rnd}}
\def\status{\mathit{status}}
\def\scstat{\mathit{scan\_status}}
\def\lab{\mathit{label}}
\def\seclabel{\mathit{sec\_label}}
\def\own{\mathit{own}}
\def\SELF{\mathit{SELF}}
\def\PEER{\mathit{PEER}}
\def\DynRnd{\mathit{Dyn\_Rnd}}
\def\Delete{\mathit{Delete}}
\def\XReplace{\mathit{XReplace}}
\def\UpdLab{\mathit{Update\_Label}}
\def\Crash{\mathit{Crash}}
\def\CrashLoop{\mathit{CrashLoop}}
\def\CrashItern{\mathit{CrashItern}}
\def\fetched{\mathit{fetched}}
\def\TRUE{\mathit{TRUE}}
\def\FALSE{\mathit{FALSE}}
\def\crashstat{\mathit{crash\_status}}
\def\wmax{\hat{\omega}}
\def\END{\mathit{END}}
\def\te{\tilde{e}}
\def\tu{\tilde{u}}
\def\che{\check{e}}
\def\chu{\check{u}}
\def\Old{\mathit{Old}}
\def\tR{\tilde{R}}
\def\eps{{\epsilon}}
\def\UpLab{\mathit{Update\_Label}}
\def\chP{\check{P}}
\def\VOID{\mathit{VOID}}

\newcommand {\ignore} [1] {}

\title{An Optimal-Time Construction of Euclidean \\Sparse Spanners with Tiny Diameter}

\author{
Shay Solomon\thanks{
        Department of Computer Science,
        Ben-Gurion University of the Negev, POB 653, Beer-Sheva 84105, Israel.
         E-mail: {\tt \{shayso\}@cs.bgu.ac.il}
        \newline This research has been supported by the
                 Clore Fellowship grant No.\ 81265410.
        \newline Partially supported by the
       Lynn and William Frankel
        Center for Computer Sciences.}}
\date{\empty}

\date{\empty}

\begin{titlepage}
\def\thepage{}
\maketitle

\begin{abstract}
In STOC'95 \cite{ADMSS95} Arya et al.\ showed that for any set of $n$ points in $\mathbb R^d$, 
a $(1+\epsilon)$-spanner with diameter at most 2 (respectively, 3) and $O(n \log n)$ edges (resp., $O(n \log \log n)$ edges)
can be built in $O(n \log n)$ time. Moreover, it was shown in \cite{ADMSS95,NS07} that for any $k \ge 4$, one can
build in $O(n (\log n) 2^k \alpha_k(n))$ time
a $(1+\epsilon)$-spanner with diameter at most $2k$ and $O(n 2^k \alpha_k(n))$ edges. 
The function $\alpha_k$ is the inverse of a certain function at the $\lfloor k/2 \rfloor$th level of the primitive recursive hierarchy,
where $\alpha_0(n) = \lceil n/2 \rceil, \alpha_1(n) = \left\lceil \sqrt{n} \right\rceil,
\alpha_2(n) = \lceil \log{n} \rceil,
\alpha_3(n) = \lceil \log\log{n} \rceil,
\alpha_4(n) = \log^* n$, 
 \ldots, etc.
It is also known \cite{NS07} that if one allows quadratic time then these bounds can be improved. Specifically, for any $k \ge 4$, a $(1+\epsilon)$-spanner with diameter at most $k$ and 
$O(n k \alpha_k(n))$ edges can be constructed in $O(n^2)$ time \cite{NS07}.

A major open problem in this area is whether one can construct within time $O(n \log n + n k \alpha_k(n))$  
a $(1+\epsilon)$-spanner  with diameter at most $k$
and $O(n k \alpha_k(n))$ edges.
 In this paper we answer this question in the affirmative.
Moreover, in fact, we provide a stronger result. Specifically, we show that for any $k \ge 4$, a $(1+\epsilon)$-spanner with 
diameter at most $k$ and $O(n \alpha_k(n))$ edges can be built in optimal time $O(n \log n)$.
The tradeoff between the diameter and number of edges of our
spanners is tight up to constant factors in the entire range of parameters. 
\end{abstract}
\end{titlepage}

\pagenumbering {arabic} 

\section{Introduction}
\subsection{Euclidean Spanners} \label{s:first}
Consider a set $S$ of $n$ points in $\mathbb R^d$ and a number $t \ge 1$.
A graph $G = (S,E)$ in which the weight $w(x,y)$ of each edge $e =(x,y)\in E$ 
is equal to the Euclidean distance $\|x-y\|$ between $x$ and $y$ is called a \emph{geometric graph}.
We say that the graph $G$ is a \emph{$t$-spanner} for $S$ if for every pair $p,q \in S$ of distinct points,
there exists a path in $G$ between $p$ and $q$ whose weight\footnote{The \emph{weight} of a path is defined to be the sum of all edge weights  in it.} is at most $t$ times the
Euclidean distance $\|p-q\|$ between $p$ and $q$. 
Such a path 
is called a \emph{$t$-spanner path}. 
The problem of constructing
Euclidean spanners 
has been studied intensively
over the past two decades
\cite{Chew86,KG92,ADDJS93,CK93,DN94,ADMSS95,DNS95,AS97,RS98,AWY05,CG06,DES08}. 
(See also the recent book by Narasimhan and Smid \cite{NS07},
and the references therein.) 
Euclidean spanners find applications in geometric approximation algorithms, network topology design, geometric distance oracles, distributed systems, design of parallel machines, and other areas \cite{DN94,LNS98,RS98,GLNS02,GNS05,GLNS08,HP00,MP00}.

Spanners are important geometric structures, since they enable approximation of the complete Euclidean
graph in a much more economical form. First and foremost, a spanner should be \emph{sparse}, meaning that
it can have only a small (ideally, linear) number of edges. However, at the same time, the spanner is required to
preserve some fundamental properties of the underlying complete graph.
In particular, for some practical applications (e.g., in network routing protocols) it is desirable that the spanner achieves a small \emph{diameter}, that is, 
for every pair $p,q \in S$ of distinct points there should be a $t$-spanner path that consists of a small
number of edges \cite{AMS94,AM04,AWY05,CG06,DES08}. 

  In a seminal STOC'95 paper \cite{ADMSS95}, Arya et al.\ showed that for any set of $n$ points in $\mathbb R^d$ one can build
 in $O(n \log n)$ time a
 $(1+\epsilon)$-spanner with diameter
at most 2 and $O(n \log n)$ edges, and another one with diameter at most 3 and $O(n \log \log n)$  
 edges.
 Moreover, it was shown in \cite{ADMSS95,NS07} that for any $k \ge 4$,
one can build in $O(n (\log n) 2^k \alpha_k(n))$ time a $(1+\epsilon)$-spanner with diameter at most $2k$ and $O(n 2^k \alpha_k(n))$ edges.
The function $\alpha_k$ is the inverse of a certain function at the $\lfloor k/2 \rfloor$th level of the primitive recursive hierarchy,
where $\alpha_0(n) = \lceil n/2 \rceil,
\alpha_1(n) = \left\lceil \sqrt{n} \right\rceil,
\alpha_2(n) = \lceil \log{n} \rceil,
\alpha_3(n) = \lceil \log\log{n} \rceil,
\alpha_4(n) = \log^* n, \alpha_5(n) = \lfloor \frac{1}{2} \log^*n \rfloor$,
 $\ldots$, etc. Roughly speaking, for $k \ge 2$ the function $\alpha_{k}$ is close to $\log$ with $\lfloor \frac{k-2}{2} \rfloor$ stars.
(See Section \ref{s:value} for the formal definition of this function.)
It is also known \cite{NS07} that if one allows quadratic time then these bounds can be improved. Specifically, for any $k \ge 4$, a $(1+\epsilon)$-spanner with diameter at most $k$ and 
$O(n k \alpha_k(n))$ edges can be constructed in $O(n^2)$ time \cite{NS07}.

If one wishes to produce spanners with $o(n \log \log n)$ edges
but is willing to spend only $O(n \log n)$ time, 
then none of the above constructions of \cite{ADMSS95,NS07}
is of any help. However, there is another construction of spanners
that can be used \cite{CG06}. 
Specifically, Chan and Gupta \cite{CG06} showed that 
there exists a $(1+\eps)$-spanner 
with diameter $O(\alpha(m,n))$
and $O(m)$ edges\footnote{The construction of \cite{CG06}
applies, in fact, to doubling metrics. This tradeoff translates into a tradeoff of $O(k)$ versus $O(n \alpha_k(n))$ between the diameter and number of edges.}, where $\alpha(\cdot,\cdot)$ is the 
inverse-Ackermann function.
Moreover, the construction of \cite{CG06}
can be implemented in $O(n \log n)$ time.
The drawback of this construction, though, is that the constant factor hidden within the $O$-notation of the diameter bound is large.
In particular, it does not provide a spanner whose diameter is, say, smaller than 50.

A major open problem in this area\footnote{It appears as open problem number 19 in the list of
open problems in the treatise of Narasimhan and Smid \cite{NS07} on Euclidean spanners; see p.\ 481.} is whether one can construct in time $O(n \log n + n k \alpha_k(n))$ 
a $(1+\epsilon)$-spanner  with diameter at most $k$
and $O(n k \alpha_k(n))$ edges,
 for any $k \ge 4$. In this paper we answer this question in the affirmative.
In fact, we provide a stronger result. Specifically, we show that a $(1+\epsilon)$-spanner with 
diameter at most $k$ and $O(n \alpha_k(n))$ edges can be built in $O(n \log n)$ time.

Note that our tradeoff improves all previous results in a number of senses. In comparison to the construction of \cite{NS07}
that requires a quadratic running time, our construction is (1) drastically faster, and (2) produces a spanner that is sparser
by a factor of $k$. In comparison to the result of \cite{ADMSS95,NS07} that for any $k \ge 4$ produces 
in time $O(n (\log n) 2^k \alpha_k(n))$, a $(1+\epsilon)$-spanner
with diameter at most $2k$ and $O(n 2^k \alpha_k(n))$ edges, our construction
has (1) a diameter half as large,
(2) is faster
by a factor of $2^k \alpha_k(n)$, and (3) produces a spanner that is sparser by
a factor of $2^k$. Finally, in comparison to the construction of \cite{CG06}, the diameter of our construction
is smaller by a significant constant factor.
(See Table \ref{tab1} for a concise comparison of previous and new results.)

\begin{table}
\begin{center}
\begin{tabular}{|c|c|c|c|c|}
\hline  & \cite{NS07} &  \cite{ADMSS95,NS07}&  \cite{CG06} & {\bf New} \\
\hline  Diameter & $k$ & $2k$ & $O(k)$ &
{\boldmath $k$} \\
\hline Number of edges & $O(n k \alpha_k(n))$   & $O(n 2^k \alpha_k(n))$ &  $O(n \alpha_k(n))$ & {\boldmath $O(n \alpha_k(n))$} \\
\hline Running time & $O(n^2)$ & $O(n (\log n) 2^k \alpha_k(n))$ & $O(n \log n)$    & {\boldmath $O(n \log n)$} \\
\hline
\end{tabular}
\end{center}
\caption[]{ \label{tab1} \footnotesize A concise comparison of previous and new
results on Euclidean spanners. Our results are indicated by bold fonts.
It is assumed that $n$ and $k$ are arbitrary parameters, with $k \ge 4$.
} 
\end{table}
 
There are two particular values of $k$ that deserve special attention.
First, for $k=4$ our tradeoff shows that one can build in $O(n \log n)$ time a $(1+\epsilon)$-spanner
with diameter at most 4 and $O(n \log^* n)$ edges. 
This result provides the \emph{first subquadratic-time construction} of $(1+\epsilon)$-spanners with diameter at most $7$ and $o(n \log \log n)$ edges. 
Second, for $k = 2\alpha(n)+2$, our tradeoff gives rise to a diameter at most $2\alpha(n)+2$ and $O(n \alpha_{2\alpha(n)+2}(n)) = O(n)$ edges. In all the previous works of \cite{ADMSS95,CG06,NS07}, a construction of $(1+\eps)$-spanners
with diameter $O(\alpha(n))$ and $O(n)$ edges was also provided. However,
the constants hidden within the $O$-notation of the diameter bound  in the corresponding constructions of \cite{ADMSS95,CG06,NS07} are significantly larger than $2$.
Since $\alpha(n) \le 4$ for all practical
applications, this improvement on the diameter bound is of practical importance.  

Our tradeoff is \emph{tight in all respects}.
Indeed, the upper bound of $O(n \log n)$ on the running time of our construction holds in the \emph{algebraic computation-tree model}.
A matching lower bound is given in \cite{CDS01}.
In addition, Chan and Gupta \cite{CG06} 
proved that there exists a set of $n$ points
on the $x$-axis for which any $(1+\epsilon)$-spanner with at most $m$ edges must have a diameter at least $\alpha(m,n) -4$.
This lower bound (cf.\ Corollary 4.1 therein \cite{CG06}) 
implies that our tradeoff of $k$ versus $O(n \alpha_k(n))$ between the diameter and number of edges
cannot be improved\footnote{See also open problem 20 on p.\ 481 in \cite{NS07}, and  the corresponding solution \cite{CG06,Smid10}.}
 by more than constant factors even for 1-dimensional spaces.
 

\ignore{
Finally, note that the lower bound of \cite{CG06} does not preclude the existence
of Euclidean Steiner spanners\footnote{A \emph{Euclidean Steiner spanner} for a point set $S$ is a spanner that may contain additional \emph{Steiner points}, i.e.,
points that do not belong to the original point set $S$.} with diameter $o(\alpha(m,n))$ and $o(m)$ 
edges (or equivalently, with diameter $o(k)$ and $o(n \alpha_k(n))$ edges).
We extend (see Appendix \ref{s:lower}) the lower bound of \cite{CG06} to Euclidean Steiner spanners and show that as far as the diameter and number of edges are concerned, 
\emph{Steiner points do not help}. Consequently, our construction of Euclidean spanners 
cannot be improved
even if one allows the spanner to employ (arbitrarily many) Steiner points. 
}
   
\subsection{1-Spanners for Tree Metrics}
The tree metric induced by an arbitrary $n$-vertex weighted tree $T$ is denoted by $M_T$. 
A spanning subgraph $G$ of $M_T$ is said to be a \emph{1-spanner} for $T$, if for every pair of vertices, their distance in $G$ is equal to their distance in $T$.
Let $P_n$ be the unweighted path graph on $n$ vertices.

In a classical STOC'82 paper \cite{Yao82}, Yao showed that there exists a 1-spanner\footnote{Yao stated this problem in terms of
partial sums.} 
for $P_n$ with 
diameter $O(\alpha(m,n))$ and $O(m)$ edges, for any $m \ge n$. 
Chazelle \cite{Chaz87} extended the result of \cite{Yao82} to arbitrary tree metrics, and presented an $O(m)$-time algorithm for computing a
1-spanner realizing this tradeoff. Thorup \cite{Thor97} provided an alternative algorithmic proof of Chazelle's result \cite{Chaz87}, and,
in addition, 
devised an efficient parallel algorithm for computing such a 1-spanner. 
In all these constructions \cite{Yao82,Chaz87,Thor97}, the constants hidden within the $O$-notation
of the diameter bound are significantly larger than 2. In particular,
none of these constructions can produce a spanner whose diameter is, say, smaller than 25.
  
Alon and Schieber \cite{AS87} and Bodlaender et al.\ \cite{BTS94} 
 independently showed that a 1-spanner for $P_n$ with diameter at most $k$ and $O(n \alpha_k(n))$ edges can be built in $O(n \alpha_k(n))$ time, for any \emph{constant}  $k \ge 2$. 
The constructions of \cite{AS87} and \cite{BTS94} were also extended to arbitrary tree metrics.
Specifically, Alon and Schieber \cite{AS87} showed that for any tree metric, a 1-spanner with diameter at most $2k$ 
(rather than $k$) and $O(n \alpha_k(n))$ edges can be built in $O(n \alpha_k(n))$ time.
Also, they managed to reduce the diameter bound from $2k$ back to $k$ in the particular cases of $k=2$ and $k=3$.
Bodlaender et al.\ \cite{BTS94} devised a construction of 1-spanners for arbitrary tree metrics
with diameter at most $k$ and $O(n \alpha_k(n))$ edges.
However, the question of whether this construction of Bodlaender et al.\ can be implemented efficiently was left open in \cite{BTS94},
and remained open prior to this work. 

Narasimhan and Smid \cite{NS07} extended the constructions 
of \cite{AS87} and \cite{BTS94}
to super-constant values of $k$. 
Specifically, they showed that for any $k \ge 4$ and any tree metric, a 1-spanner 
with diameter at most $2k$ (respectively, $k$) and $O(n 2^k \alpha_k(n))$ (resp., $O(n k \alpha_k(n))$ 
edges can be built in
time $O(n (\log n) 2^k \alpha_k(n))$ (resp., $O(n^2)$). 

On the way to our results for Euclidean spanners, we have improved
the constructions of \cite{AS87,BTS94,NS07} and devised an
$O(n\alpha_k(n))$-time algorithm that builds 1-spanners for arbitrary tree metrics
with diameter at most $k$ and $O(n \alpha_k(n))$ edges, for \emph{any} $k \ge 4$.
(See Table \ref{tab2} for a comparison of previous and new results.)
\begin{table}
\begin{center}
\begin{tabular}{|c|c|c|c|}
\hline  & Diameter & Number of edges &  Running time \\
\hline  \cite{Chaz87,Thor97} & $O(k)$, for any $k$ & $O(n \alpha_k(n))$ & $O(n \alpha_k(n))$ \\
\hline \cite{AS87}& $2k$, for a \emph{constant} $k \ge 4$ & $O(n \alpha_k(n))$ & $O(n \alpha_k(n))$ \\ 
\hline \cite{NS07} & $2k$, for any $k \ge 4$ & $O(n 2^k \alpha_k(n))$ & $O(n (\log n) 2^k \alpha_k(n))$  \\
\hline \cite{BTS94,NS07} & $k$, for any $k \ge 4$ & $O(n k \alpha_k(n))$ & $O(n^2)$  \\
\hline {\bf New} & {\boldmath $k$}{\bf , for any} {\boldmath $k \ge 4$} & {\boldmath $O(n \alpha_k(n))$} & {\boldmath $O(n \alpha_k(n))$}  \\
\hline
\hline \cite{AS87} & 7 &  $O(n \log \log n)$ & $O(n \log \log n)$  \\
\hline \cite{BTS94,NS07} & 4 & $O(n \log^* n)$ & $O(n^2)$  \\
\hline {\bf New} & {\boldmath $4$} & {\boldmath $O(n \log^* n)$} & {\boldmath $O(n \log^* n)$}  \\
\hline
\hline \cite{Chaz87,Thor97,AS87,NS07} & $O(\alpha(n))$ & $O(n)$ & $O(n)$  \\
\hline {\bf New} & {\boldmath $2\alpha(n)+2$} & {\boldmath$O(n)$} & {\boldmath $O(n)$}  \\
\hline
\end{tabular}
\end{center}
\caption[]{ \label{tab2} \footnotesize A comparison of previous and new
results on 1-spanners for arbitrary tree metrics. Our results are indicated by bold fonts.
\ignore{The next
three rows indicate the resulting degree ($\Delta$), diameter
$(\Lambda)$ and lightness $(\Psi)$. The number of edges used in all
constructions is $O(n)$. To save space, the $O$-notation is omitted
everywhere except for the exponents. The letters $\delta$ and
$\zeta$ stand for arbitrarily small positive constants.}
} 
\end{table}
The running time of our algorithm is \emph{linear} in the number of edges of the resulting spanners.
Also, it was proved in \cite{AS87} that any 1-spanner for $P_n$ with diameter at most $k$ must have at least $\Omega(n \alpha_k(n))$ edges. This lower bound implies that the tradeoff 
between the diameter and number of edges of our spanners is tight in the entire range of parameters.

The problem of constructing 1-spanners for tree metrics is a natural one, and, not surprisingly, has also been studied in more
general settings, such as planar
metrics \cite{Thor95}, general metrics \cite{Thor92}, and general graphs \cite{BGJRW09}. (See
also Chapter 12 in \cite{NS07} for an excellent survey on this
problem.)
This problem is also closely related to the extremely well-studied problem of
computing partial-sums. (See the papers of Tarjan \cite{Tarj79}, Yao \cite{Yao82},
Chazelle and Rosenberg \cite{CR91}, ~P\u{a}tra\c{s}cu and Demaine
\cite{PD04}, and the references therein.) For a discussion about the
relationship between these two problems see the introduction
of \cite{AWY05}.
 We demonstrate that our construction of 1-spanners for tree metrics is useful for improving key results 
in the context of Euclidean spanners. We anticipate that this construction 
would be found useful in the context of partial sums problems, and
for other
applications such as those discussed in \cite{BTS94,BGJRW09}. 
Finally, we believe that regardless of its
applications, this construction is of
independent interest.

\subsection{Lower Bounds for Euclidean Steiner Spanners}
The lower bound of Chan and Gupta \cite{CG06} on the tradeoff between the diameter and number of edges
of Euclidean spanners was mentioned in Section \ref{s:first}.
Formally, it states that for any $\eps > 0$,
there exists a set of $n$ points on the $x$-axis, where $n$ is an arbitrary power of two,
for which any $(1+\eps)$-spanner
with at most $m$ edges has diameter at least $\alpha(m,n) -4$.
Consequently, the corresponding upper bound construction of \cite{CG06} is optimal up to constant factors.
However, the lower bound of Chan and Gupta \cite{CG06} does not preclude the existence
of Euclidean Steiner spanners\footnote{A \emph{Euclidean Steiner spanner} for a point set $S$ is a spanner that may contain additional \emph{Steiner points}, i.e.,
points that do not belong to the original point set $S$.} with diameter $o(\alpha(m,n))$ and $o(m)$ edges.

In this paper we demonstrate that as far as the diameter and number of edges are concerned, 
\emph{Steiner points do not help}. Consequently, the upper bound construction of \cite{CG06}, 
as well as the constructions of Euclidean spanners and spanners for tree metrics that are provided in the current paper,
cannot be improved 
even if one allows the spanner to employ (arbitrarily many) Steiner points. 

\subsection{Our and Previous Techniques} \label{overview}
Arya et al.\ \cite{ADMSS95} demonstrated that it is possible to represent Euclidean $(1+\epsilon)$-spanners as a forest $\mathcal F$ that consists of a constant number of \emph{dumbbell trees}, such that for any pair $p,q$ of distinct points, there exists a dumbbell tree
$T \in \mathcal F$, which satisfies that the path between $p$ and $q$ in $T$ is a $(1+\epsilon)$-spanner path. 
This remarkable property provides a powerful tool for reducing problems on general graphs to similar problems on trees.
Indeed, both constructions of Euclidean sparse spanners with bounded diameter of \cite{ADMSS95,NS07} employ the following four-step scheme. First, build a Euclidean $(1+\epsilon)$-spanner with linear number of edges (and possibly huge diameter) as done, e.g., in \cite{CK93}. Second,
decompose the spanner into a constant number of dumbbell trees as mentioned above. Third, 
build a sparse 1-spanner with bounded diameter for each of these dumbbell trees. 
Finally,
 return the 
union of these 1-spanners as the ultimate spanner. 
Chan and Gupta \cite{CG06} employ a similar approach for building their spanners for doubling metrics. Roughly speaking, instead of working with \emph{dumbbell trees}, Chan and Gupta use 
\emph{net trees} 
that share similar properties.

Our construction of Euclidean spanners also follows the above four-step scheme. 
\ignore{In other words,
to produce our ultimate construction of Euclidean spanners, we devise a construction of 1-spanners for tree metrics and plug it on top of the dumbbell trees of \cite{ADMSS95}. 
We recall that for each parameter $k \ge 2$, our construction of 1-spanners for tree metrics should produce 
within $O(n \alpha_k(n))$ time a
1-spanner with diameter at most $k$ and $O(n \alpha_k(n))$ edges. 
}
Next, we discuss the technical challenges we faced on the way to achieving an optimal-time construction
of 1-spanners for arbitrary $n$-point tree metrics with diameter at most $k$ and $O(n \alpha_k(n))$ edges,
where $n \ge 0$ and $k \ge 2$ are arbitrary integers. 


A central ingredient in the constructions of 1-spanners for tree metrics of \cite{Chaz87,AS87,BTS94,NS07}
is a tree decomposition procedure. Given an $n$-vertex rooted tree $(T,rt)$ and a parameter $\ell$, this procedure
computes a set $CV_\ell$ of at most $O(n/\ell)$ cut-vertices whose removal from the tree
decomposes $T$ into a collection of subtrees of size at most $\ell$ each. 
For our purposes, it is crucial that the running time of this procedure will be $O(n)$. Equally important, the 
size of the set $CV_\ell$ must not exceed $n/\ell$. None of the decomposition procedures of \cite{Chaz87,AS87,BTS94,NS07}
satisfies these two requirements simultaneously. 
The decomposition procedure of \cite{NS07}, for example, requires
time $O(n \log n)$ rather than $O(n)$, and the size bound of $CV_\ell$ is $2(n/\ell)$ rather than $n/\ell$.  We remark that the slack of two on the size bound of $CV_\ell$ contributes a factor of $2^k$ to the number of edges and the running time of the spanner construction
of \cite{NS07}. Also, the slack of $\log n$ on the running time of this  procedure contributes an additional factor of $\log n$ to the the
running time of the construction of \cite{NS07}.
Consequently, the  number of edges and the running time of the spanner construction of \cite{NS07} are bounded above by $O(n 2^k \alpha_k(n))$
and  $O(n (\log n) 2^k \alpha_k(n))$, respectively.
The decomposition procedure
of \cite{BTS94} is the only one in which the size bound of $CV_\ell$ is no greater than $n/\ell$, but it is unclear whether this procedure can be implemented
efficiently. In this paper we provide a decomposition procedure that satisfies both these requirements. Our procedure
is, in addition, surprisingly simple. 

Special attention should be given to determining an optimal value for the parameter $\ell$. In particular, 
the value of $\ell$ was set to be $\alpha_{k-2}(n)$ in both \cite{AS87} and \cite{NS07}. 
In this paper we define a variant $\alpha'_k$ of the function $\alpha_k$, such that $\alpha_k(n) \le \alpha'_k(n) \le 2\alpha_k(n)+4$,
for all $k,n \ge 0$, 
and demonstrate that $\alpha'_{k-2}(n)$ is a much better choice for $\ell$ than $\alpha_{k-2}(n)$ is.
(See Section \ref{s:value} for the formal definitions of the functions $\alpha_k$ and $\alpha'_k$.) In particular,
this optimization 
enables us to ``shave'' a factor of $k$ from both the number of edges and the running time of our construction,
thus proving that this choice of $\alpha'_{k-2}(n)$ for $\ell$ is, in fact, optimal.

Another key ingredient in the constructions of \cite{Chaz87,AS87,BTS94,NS07} is the computation of
an edge set $E'$ that connects the cut vertices of $CV_\ell$. 
Alon and Schieber \cite{AS87} and Narasimhan and Smid \cite{NS07} 
employed a natural yet inherently suboptimal approach.
First, construct a tree $T'$
on the vertex set $CV_\ell \cup \{rt\}$ that ``inherits''
 the tree structure of the original tree $(T,rt)$, by making each vertex $u$ of $CV_\ell$ a child of the first vertex of $CV_\ell \cup \{rt\}$ on the path in $T$
from $u$ to $rt$. Second, recursively compute a sparse 1-spanner for $T'$ with diameter at most $k$.
Note that every 1-spanner path for $T'$ between a pair $u,v$ of vertices, such that $u$ is
an ancestor of $v$ in $T'$, is also a 1-spanner path for $T$.
However, this property does not necessarily hold for a general pair of vertices in $T'$, since
their \emph{least common ancestor} might not be in $T'$.
Consequently, a 1-spanner for $T'$ with diameter at most $k$ provides a 1-spanner for $T$
with diameter at most $2k$ rather than $k$. (See Chapter 12 in \cite{NS07} for further detail.)
To overcome this obstacle, Chazelle \cite{Chaz87} suggested connecting the vertices of $CV_\ell$
into a \emph{Steiner tree} $\mathcal T^*$
using as many additional \emph{Steiner vertices} as needed to guarantee that every 1-spanner path for $\mathcal T^*$ will also be a 1-spanner path for the
original tree $T$. 
\ignore{Chazelle \cite{Chaz87} observed that one can do just fine with only $|CV_\ell|$ Steiner vertices at hand, and so 
the size of the resulting tree $\tilde T$ need not grow beyond twice the size of $CV_\ell$. However, this small slack
of two itself contributes a factor of $2^k$ to the bounds on the number of edges and the running time of the construction.
Subsequently, }
Bodlaender et al.\ \cite{BTS94} took this idea of \cite{Chaz87} one step further
and studied a generalized problem of constructing 1-spanners for arbitrary \emph{Steiner tree metrics}.\footnote{Bodlaender et al.\ \cite{BTS94} referred to this problem as the \emph{Restricted Bridge Problem}.}  
Specifically, suppose that in $T$, a subset $R(T) \subseteq V(T)$ of the vertices are colored black,
and the rest of the vertices in $S(T) = V(T) \setminus R(T)$ are colored white.
The black (respectively, white) vertices are called the \emph{required vertices} (resp., \emph{Steiner
vertices}) of $T$.     
We say that a 1-spanner $H$ for $T$ has diameter at most $k$ if
$H$ contains a 1-spanner path for $T$ 
that consists of at most $k$ edges, for every pair
of \emph{required} vertices in $T$. Bodlaender et al.\ \cite{BTS94} provided a construction
of 1-spanners
for arbitrary Steiner tree metrics with diameter at most $k$ and $O(n \alpha_k(n))$ edges, for any \emph{constant} $k \ge 4$.
However, it was 
unclear prior to this work whether this construction of \cite{BTS94} can be implemented in subquadratic time \cite{NS07}.
In this paper we combine some ideas of \cite{Chaz87,AS87,BTS94,NS07} with numerous
new ideas to produce an algorithm that implements 
the construction of Bodlaender et al.\ \cite{BTS94} in \emph{optimal-time}, and, in addition, extends it
to super-constant values of $k$.
In particular, we devise a linear time procedure for \emph{pruning} the \emph{redundant} vertices of a Steiner tree,
while preserving its basic structure  and intrinsic properties.
Our algorithm makes an extensive use of this pruning procedure, 
e.g., for pruning the initial Steiner tree from its redundant vertices, and for computing the edge set $E'$
that connects the cut vertices of $CV_\ell$.



Finally, our extension of the lower bound of \cite{CG06} to Euclidean Steiner spanners employs a direct combinatorial argument,
which shows that any Steiner spanner can be ``pruned'' from Steiner points, while increasing the number of edges by a small factor
and preserving the same diameter.
More specifically, we demonstrate that it is possible to replace each edge of the original Steiner spanner with
a constant number of edges--none of which is incident on a Steiner point, so that for every pair of required points and 
any path $P$ in the original Steiner spanner between them, there would be a path $P'$ in the resulting graph between
these points that
consists of the same number of edges as $P$ and 
whose weight is no greater than that of $P$.

\subsection{Structure of the Paper}
In Section \ref{s:value} we present some very slowly growing functions that are used throughout
the paper, and analyze their properties. The technical proofs involved in this analysis are relegated to Appendices \ref{acker} and \ref{acker2}.
Section \ref{1span} is devoted to our construction of 1-spanners for tree metrics.
Therein we start (Section \ref{s:scheme}) with outlining our basic scheme. We proceed (Section \ref{s:prune}) with
presenting the pruning procedure   
and providing a few useful properties of the resulting \emph{pruned} trees. The decomposition procedure is given in Section \ref{decompose}. Finally, in Section \ref{s:sol} we provide an optimal-time algorithm
for computing 1-spanners for tree metrics and analyze its performance.
In Section \ref{s:upper} we derive our construction of Euclidean spanners. 
Our lower bounds for Euclidean Steiner spanners are established in Section \ref{s:lower}. 

\subsection{Definitions and Notation}
The \emph{size} of a tree $T$,
denoted $|T|$, is the number of vertices in $T$.
The number of edges in a path $P$ is denoted by $|P|$, and the weight of $P$ is denoted by $w(P)$.
For a  tree $T$ and a subset $C$ of $V(T)$, we denote by $T
\setminus C$ the forest obtained from $T$ by removing all
vertices in $C$ along with the edges that are incident to
them.
For a positive integer $n$, we denote the set $\{1,2,\ldots,n\}$ by
$[n]$. In what follows all logarithms are in base 2.


\ignore{
\subsection{Preliminaries}
For a rooted tree $(T,rt)$ and a subset $C$ of $V(T)$, we denote by $T \setminus C$ the forest obtained from $T$ by removing all the vertices of $C$ together
with their incident edges.

Weight of a path, length of a path, distance of a path...
}

\section{Some Very Slowly Growing Functions} \label{s:value}
In this section we present a number of very slowly growing functions that are used throughout. 

Following \cite{Tarj75,AS87,NS07}, we define the following two very rapidly growing functions $A_k$ and $B_k$:
\begin{itemize}
\item $A_k(n) = A_{k-1}(A_k(n-1))$, for all $k,n \ge 1$;
$A_0(n) = 2n$, for all $n \ge 0$; $A_k(0) = 1$, for all $k \ge 1$.
\item $B_k(n) = B_{k-1}(B_k(n-1))$, for all $k,n \ge 1$;
$B_0(n) = n^2$, for all $n \ge 0$; $B_k(0) = 2$, for all $k \ge 1$.
\end{itemize}
We define the functional inverses of the functions $A_k$ and $B_k$ in the following way:
\begin{itemize}
\item $\alpha_{2k}(n) = \min\{s \ge 0: A_k(s) \ge n\}$, for all $k,n \ge 0$.
\item $\alpha_{2k+1}(n) = \min\{s \ge 0: B_k(s) \ge n\}$, for all $k,n \ge 0$.
\end{itemize}

For technical convenience, we define $\log 0 = 0$.
Observe that for all $n \ge 0$:
$\alpha_0(n) = \lceil n/2 \rceil$,
$\alpha_1(n) = \left\lceil \sqrt{n} \right\rceil$,
$\alpha_2(n) = \lceil \log{n} \rceil$,
$\alpha_3(n) = \lceil \log\log{n} \rceil$,
$\alpha_4(n) = \log^* n$,
$\alpha_5(n) = \lfloor \frac{1}{2}\log^* n \rfloor$, \ldots, etc.

The following lemma can be easily verified. 
\begin{lemma} \label{basereal}
(1) For all $k \ge 0$,
the function $\alpha_k = \alpha_k(n)$ is monotone non-decreasing with $n$.
~(2) For all $k \ge 2$ and $n \ge 1$, $\alpha_{k}(n) < n$. Also, for all $n \ge 2$ (respectively, $n \ge 3$),
we have $\alpha_0(n) < n$ (resp., $\alpha_1(n) < n$).
~(3) For all $k \ge 0$ and $n \ge 0$, $\alpha_{k+2}(n) \le \alpha_{k}(n)$.
\end{lemma}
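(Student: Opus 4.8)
\textbf{Proof plan for Lemma \ref{basereal}.}

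The plan is to verify the three parts in order, exploiting the fact that all the functions involved are defined as minima of witness sets, and that the base functions $A_k$ and $B_k$ are monotone in both arguments.

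For part (1), I would first observe that, for fixed $k$, the predicate ``$A_k(s) \ge n$'' (resp.\ ``$B_k(s) \ge n$'') becomes only easier to satisfy as $n$ decreases, so the witness set defining $\alpha_{2k}(n)$ (resp.\ $\alpha_{2k+1}(n)$) can only grow as $n$ decreases; hence the minimum is monotone non-decreasing in $n$. This requires no properties of $A_k,B_k$ beyond the shape of the definition. (One should note the degenerate small-$n$ cases, e.g.\ $\alpha_k(0)=0$, which are consistent with monotonicity.)

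For part (2), the point is a strict inequality $\alpha_k(n)<n$. For $k\ge 2$ and $n\ge 1$ I would argue that the relevant base function already dominates the identity after few iterations: for even $k=2j$ with $j\ge 1$, $A_j(s)\ge A_1(s)\ge 2^s > s$ for all $s\ge 0$ (since $A_1(s)=A_0(A_1(s-1))=2A_1(s-1)$ with $A_1(0)=1$, giving $A_1(s)=2^s$ — wait, $A_1(0)=1$, $A_1(1)=A_0(A_1(0))=A_0(1)=2$, so $A_1(s)=2^{s}$... actually $A_1(s)=2^s$ only if $A_1(0)=1$; checking: $A_1(1)=2, A_1(2)=A_0(2)=4$, yes $A_1(s)=2^s$ for $s\ge 0$), hence $A_j(s)\ge 2^s\ge n$ already for $s=\lceil\log n\rceil<n$ whenever $n\ge 2$, and the case $n=1$ gives $\alpha_k(1)=0<1$. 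For odd $k=2j+1$ with $j\ge 1$ a similar bound via $B_1(s)\ge 2^{2^s}$ works, and $\alpha_3(n)=\lceil\log\log n\rceil<n$ for $n\ge 1$ is immediate. The cases $k=2$ and $k=3$ follow directly from $\alpha_2(n)=\lceil\log n\rceil<n$ for $n\ge 1$ and $\alpha_3(n)=\lceil\log\log n\rceil<n$ for $n\ge 1$. For $k=0$ and $k=1$ I would just use the closed forms $\alpha_0(n)=\lceil n/2\rceil<n$ for $n\ge 2$ and $\alpha_1(n)=\lceil\sqrt n\rceil<n$ for $n\ge 3$ (with equality at the excluded small values, which is why the thresholds $2$ and $3$ appear). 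These are all elementary estimates.

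For part (3), the claim $\alpha_{k+2}(n)\le\alpha_k(n)$ should follow from the nesting $A_{k+1}(s)\ge A_k(s)$ and $B_{k+1}(s)\ge B_k(s)$ for all $s$ — i.e.\ each level of the hierarchy dominates the previous one pointwise — which one proves by induction on $s$ using the recurrences $A_{k+1}(s)=A_k(A_{k+1}(s-1))$ together with $A_{k+1}(s-1)\ge s-1$ (so the inner argument is at least $s-1$, and monotonicity of $A_k$ plus $A_k(s-1)\ge \dots$ pushes it above $A_k(s)$... the standard argument). Given $A_{k+1}(s)\ge A_k(s)$, any $s$ witnessing $A_k(s)\ge n$ also witnesses $A_{k+1}(s)\ge n$, so the minimum can only decrease when passing from $\alpha_{2k}$ to $\alpha_{2k+2}$; the odd case is identical with $B$. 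The one subtlety is relating the even-indexed and odd-indexed families to get \emph{all} $k$, but since $\alpha_{k+2}$ and $\alpha_k$ always have the same parity, one never needs to compare an $A$-inverse with a $B$-inverse, so the two parities are handled separately and the statement follows.

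The main obstacle is part (2): it is the only strict inequality, and it forces one to track carefully the exact small-$n$ thresholds (why $n\ge 2$ for $\alpha_0$, $n\ge 3$ for $\alpha_1$, $n\ge 1$ for $k\ge 2$), rather than the more mechanical monotonicity arguments of parts (1) and (3). Everything rests on the standard fact that $A_k$ and $B_k$ grow at least geometrically once $k\ge 1$, which gives plenty of slack; the work is purely in the bookkeeping of base cases.
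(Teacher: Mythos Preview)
Your plan is correct. The paper itself provides no proof of this lemma --- it simply states that it ``can be easily verified'' --- so there is nothing to compare against; your sketch via monotonicity of the witness sets for (1), the closed forms $\alpha_2(n)=\lceil\log n\rceil$, $\alpha_3(n)=\lceil\log\log n\rceil$ together with part (3) for (2), and pointwise domination $A_{k+1}\ge A_k$, $B_{k+1}\ge B_k$ for (3), is exactly the routine verification the paper has in mind.
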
 

The following lemma 
from \cite{NS07} provides a useful characterization of the function $\alpha_{k}$.
\begin{lemma} [Lemma 12.1.16 in \cite{NS07},  p.\ 230] \label{NSreal}
Let $k \ge 2$ be an arbitrary integer. Then
$\alpha_{k}(n) = 1+ \alpha_{k}(\alpha_{k-2}(n))$, for all $n \ge 3$, and
 $\alpha_{k}(0) = \alpha_{k}(1) = 0$. Also, $\alpha_k(2) = 0$ if $k$ is odd, and $\alpha_k(2) = 1$ if $k$ is even. 
\end{lemma}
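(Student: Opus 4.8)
The final statement to prove is Lemma~\ref{NSreal}, which characterizes $\alpha_k$ via the recursion $\alpha_k(n) = 1 + \alpha_k(\alpha_{k-2}(n))$ for $n \ge 3$, together with small base cases. Although the excerpt attributes this to \cite{NS07}, I should supply a self-contained argument from the definitions. The plan is to split into the even and odd cases, since $\alpha_{2k}$ inverts $A_k$ and $\alpha_{2k+1}$ inverts $B_k$, and in both cases the rapidly growing function satisfies the same kind of nesting identity $F_j(s) = F_{j-1}(F_j(s-1))$. So the whole proof is really one argument parametrized by the family $(F_j, G_j) \in \{(A_j, \alpha_{2j}), (B_j, \alpha_{2j+1})\}$, where $G_j$ inverts $F_j$ in the sense $G_j(n) = \min\{s \ge 0 : F_j(s) \ge n\}$, and where $G_{j}$ here plays the role of $\alpha_k$ and $G_{j-1}$ (for the $A$ case: $\alpha_{2j-2}$; for the $B$ case: $\alpha_{2j-1}$) plays the role of $\alpha_{k-2}$.

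\textbf{Main argument.} Fix $k \ge 2$ and write $k = 2j$ or $k = 2j+1$; let $F = F_j$ be the corresponding fast-growing function ($A_j$ or $B_j$) and $F' = F_{j-1}$ the one at the level below. First I would record the monotonicity of $F$ in $n$ (immediate from the recursion $F(n) = F'(F(n-1))$ and monotonicity of $F'$, by induction on levels and on $n$) and the inversion duality: $G_j(n) = s$ iff $F(s) \ge n$ and ($s = 0$ or $F(s-1) < n$); equivalently $F(s) \ge n \iff s \ge G_j(n)$. Now the heart of the matter: I claim $F(s) = F'(F(s-1))$ for $s \ge 1$ implies, upon inverting, that $G_j(n) = 1 + G_j(\alpha_{k-2}(n))$ for $n$ large enough. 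Concretely, $G_j(n) \ge 1$ for $n \ge 2$ (since $F(0) = 1 < n$), so write $G_j(n) = s+1$ with $s \ge 0$. By minimality, $F(s+1) \ge n > F(s)$, i.e.\ $F'(F(s)) \ge n > F(s)$. Applying $G_{j-1} = \alpha_{k-2}$ and using its inversion duality: $F'(F(s)) \ge n$ gives $\alpha_{k-2}(n) \le F(s)$, and $n > F(s)$ — here I need to transfer this to $\alpha_{k-2}(n) > F(s-1)$; this is where the careful step lies. One has $F(s-1) < F(s)$, and I want $\alpha_{k-2}(n) > F(s-1)$, equivalently $F'(F(s-1)) = F(s) < n$, which is exactly the minimality fact $F(s) < n$. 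So $F(s-1) < \alpha_{k-2}(n) \le F(s)$, which by the inversion duality for $G_{j-1}$ says precisely $G_{j-1}(\alpha_{k-2}(n)) \in \{s\}$... wait, more precisely it gives $G_j(\alpha_{k-2}(n)) = s$ — no, careful: I need $G_j$ (the same level!) applied to $\alpha_{k-2}(n)$. Since $F(s-1) < \alpha_{k-2}(n) \le F(s)$, the inversion duality for $G_j$ yields $G_j(\alpha_{k-2}(n)) = s$. Hence $G_j(n) = s+1 = 1 + G_j(\alpha_{k-2}(n))$, as desired.

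\textbf{Base cases.} For the small-$n$ claims: $\alpha_k(0)$: since $F(0) = 1 \ge 0$ is false only if... actually $F(0) = 1 \ge 0$ so $\min\{s : F(s) \ge 0\} = 0$, giving $\alpha_k(0) = 0$; similarly $F(0) = 1 \ge 1$ gives $\alpha_k(1) = 0$. For $\alpha_k(2)$: if $k = 2j+1$ is odd, $F = B_j$ and $B_j(0) = 2 \ge 2$, so $\alpha_k(2) = 0$; if $k = 2j$ is even, $F = A_j$ and $A_j(0) = 1 < 2 \le 2 = A_0(1) \le A_j(1)$ (using $A_j(1) = A_{j-1}(A_j(0)) = A_{j-1}(1)$ and an easy induction that $A_i(1) \ge 2$), so $\alpha_k(2) = 1$. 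These are the kind of routine verifications I would state briefly rather than belabor. I also need $n \ge 3$ for the recursion so that $\alpha_{k-2}(n)$ is genuinely smaller (Lemma~\ref{basereal}(2) gives $\alpha_{k-2}(n) < n$, ensuring the recursion terminates), and so the argument above has $\alpha_{k-2}(n) \ge 1$, keeping $G_j$ in range.

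\textbf{Anticipated obstacle.} The one place demanding care is the bidirectional transfer between the inequality $F(s) < n \le F(s+1) = F'(F(s))$ and the inequality $F(s-1) < \alpha_{k-2}(n) \le F(s)$ — i.e.\ checking that applying $\alpha_{k-2}$ respects both the strict lower bound and the weak upper bound, which relies on $F'$ being the right function at the right level and on handling the boundary $s = 0$ (where $F(s-1) = F(-1)$ is undefined, so I must separately confirm $\alpha_{k-2}(n) \le F(0) = 1$ forces $G_j(\alpha_{k-2}(n)) = 0$ whenever $n$ is small enough that $s = 0$; but $n \ge 3$ and the recursion's consistency rule that out). Once the inversion duality is cleanly stated as a lemma, the rest is bookkeeping. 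Everything else — monotonicity in $n$, the explicit small values $\alpha_0,\dots,\alpha_5$ — follows by the same inversion duality plus the explicit forms of $A_0(n) = 2n$, $B_0(n) = n^2$ and unwinding one or two levels of recursion.
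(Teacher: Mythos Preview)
The paper does not prove this lemma at all: it is quoted verbatim as Lemma~12.1.16 from \cite{NS07} and used as a black box. So there is no ``paper's own proof'' to compare against; you have supplied the missing argument.

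Your argument is correct. The core step --- from $F_j(s) < n \le F_j(s+1) = F_{j-1}(F_j(s))$ deduce $F_j(s-1) < \alpha_{k-2}(n) \le F_j(s)$, hence $G_j(\alpha_{k-2}(n)) = s$ --- is exactly the inversion duality one expects, and your handling of the $s=0$ boundary (where the lower inequality is vacuous and only $\alpha_{k-2}(n) \le F_j(0)$ is needed) is fine. A couple of cosmetic points: in the base-case paragraph you write ``$F(0)=1$'' even in the odd case where $F(0)=B_j(0)=2$; and the ``wait'' self-corrections should be cleaned out in a final write-up. Also, the remark that $n\ge 3$ is needed ``so that $\alpha_{k-2}(n)$ is genuinely smaller'' slightly misidentifies the reason: the real role of $n\ge 3$ is to guarantee $F_j(0)<n$ (i.e.\ $G_j(n)\ge 1$) uniformly in the parity of $k$, since $B_j(0)=2$ forces $n\ge 3$ in the odd case. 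With those edits, this is a clean self-contained proof.
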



Next, we define a variant $\alpha'_k(n)$ of the function $\alpha_{k}$.
\begin{itemize}
\item 
$\alpha'_{0}(n) = \alpha_0(n) = \lceil n/2 \rceil$, for all $n \ge 0$;
$\alpha'_{1}(n) = \alpha_1(n) = \left\lceil \sqrt{n} \right\rceil$, for all $n \ge 0$
\item $\alpha'_{k}(n) = 
2+\alpha'_{k}(\alpha'_{k-2}(n))$, for all $k \ge 2$ and $n \ge k+2$;
\\$\alpha'_{k}(n) = \alpha_{k}(n)$, for all $k \ge 2$ and $n \le k+1$.
\end{itemize}

The following lemma, whose proof appears in Appendix \ref{acker}, is analogous to Lemma \ref{basereal}.
It establishes key properties of the function $\alpha'_k$ that will be used in the sequel.
\begin{lemma} \label{base'real}
(1) For all $k \ge 2$, the function $\alpha'_k = \alpha'_k(n)$ is monotone non-decreasing with $n$.
~(2) For all $k \ge 2$ and $n \ge 1$, $\alpha'_{k}(n) < n$. Also, for all $n \ge 2$ (respectively, $n \ge 3$),
we have $\alpha'_0(n) < n$ (resp., $\alpha'_1(n) < n$).
~(3) For all $k \ge 2$ and $n \ge 0$, $\alpha'_{k+2}(n) \le \alpha'_{k}(n)$.
\end{lemma}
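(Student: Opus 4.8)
\textbf{Proof plan for Lemma \ref{base'real}.}

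The plan is to mimic, part by part, the proof of Lemma \ref{basereal}, exploiting the close relationship between $\alpha'_k$ and $\alpha_k$ (eventually to be recorded as $\alpha_k(n)\le\alpha'_k(n)\le 2\alpha_k(n)+4$), but being careful that $\alpha'_k$ is defined by a recursion with ``$+2$'' instead of ``$+1$'' and with a boundary cutover at $n=k+1$. All three parts will be proved by induction on the pair $(k,n)$, using Lemma \ref{NSreal} to handle the base segment $n\le k+1$ (where $\alpha'_k(n)=\alpha_k(n)$) and the defining recursion $\alpha'_k(n)=2+\alpha'_k(\alpha'_{k-2}(n))$ for the inductive segment $n\ge k+2$; note that the recursion is well-founded because part (2) gives $\alpha'_{k-2}(n)<n$, so the argument strictly decreases. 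For $k=0,1$ there is nothing to prove beyond what is already known for $\alpha_0,\alpha_1$, so throughout we take $k\ge 2$.

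For part (2), I would first establish $\alpha'_k(n)<n$ for $n\ge 1$. On the base segment $n\le k+1$ this is exactly part (2) of Lemma \ref{basereal}. For $n\ge k+2$, write $m=\alpha'_{k-2}(n)$; by the inductive hypothesis on $k-2$ (or directly from Lemma \ref{basereal} if $k-2\le 1$) we have $m<n$, hence $m\le n-1$, and then $\alpha'_k(n)=2+\alpha'_k(m)\le 2+\alpha'_k(n-1)$; a secondary induction on $n$ together with a direct check that $\alpha'_k(k+2)$ is small enough (using the explicit small values from Lemma \ref{NSreal}) closes this. An alternative, cleaner route is to iterate the recursion: each application subtracts at least $1$ from the argument (since $\alpha'_{k-2}(n)\le n-1$) while adding $2$ to the value, so after reaching the base segment at some argument $n_0\le k+1$ we get $\alpha'_k(n)\le 2(n-n_0)+\alpha_k(n_0)$, which one then checks is $<n$ for all $n\ge 1$ — but the iteration count needs care, so the straightforward double induction is probably the safer write-up.

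For part (1), monotonicity, I would again split on the boundary. Within the base segment $n\le k+1$ it follows from monotonicity of $\alpha_k$ (part (1) of Lemma \ref{basereal}); within the inductive segment it follows from the recursion, since $\alpha'_{k-2}$ is monotone (inductive hypothesis, part (1)) and $\alpha'_k$ is monotone on smaller arguments (inner induction), so $n\le n'$ gives $\alpha'_{k-2}(n)\le\alpha'_{k-2}(n')$ and hence $\alpha'_k(n)=2+\alpha'_k(\alpha'_{k-2}(n))\le 2+\alpha'_k(\alpha'_{k-2}(n'))=\alpha'_k(n')$. The only genuinely delicate point is to check monotonicity \emph{across} the boundary, i.e.\ that $\alpha'_k(k+1)\le\alpha'_k(k+2)$; here I would compute $\alpha'_k(k+2)=2+\alpha'_k(\alpha'_{k-2}(k+2))$ and bound it below using the small-value facts of Lemma \ref{NSreal}, comparing with $\alpha_k(k+1)$. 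Part (3), $\alpha'_{k+2}(n)\le\alpha'_k(n)$, then follows by the same case split: for small $n$ it reduces to part (3) of Lemma \ref{basereal}, and for large $n$ one unwinds one step of each recursion and applies the inductive hypotheses $\alpha'_k(n)\le\alpha'_{k-2}(n)$ and the just-proved monotonicity to compare $2+\alpha'_{k+2}(\alpha'_k(n))$ with $2+\alpha'_k(\alpha'_{k-2}(n))$.

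I expect the main obstacle to be bookkeeping at the cutover $n=k+1\leftrightarrow n=k+2$: because the definition of $\alpha'_k$ switches from ``$=\alpha_k$'' to the ``$+2$'' recursion there, every one of the three parts needs a small separate verification that the two pieces glue together consistently, and these verifications rely on having the exact small values of $\alpha_k$ near its base (from Lemma \ref{NSreal}) at hand. Everything else is a routine nested induction of the same shape as the proof of Lemma \ref{basereal}.
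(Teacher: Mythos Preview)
Your overall architecture (double induction on $(k,n)$, using Lemma~\ref{basereal} on the base segment $n\le k+1$ and the defining recursion on $n\ge k+2$) is the same as the paper's, and your outlines for parts~(1) and~(3) are essentially right.  The real problem is your plan for part~(2).  From $m=\alpha'_{k-2}(n)\le n-1$ and monotonicity you get $\alpha'_k(n)=2+\alpha'_k(m)\le 2+\alpha'_k(n-1)$, and then the secondary induction $\alpha'_k(n-1)<n-1$ only yields $\alpha'_k(n)\le 2+(n-2)=n$, not a strict inequality; no amount of checking the base value $\alpha'_k(k+2)$ repairs this, because the induction step itself loses one unit every time (the ``$+2$'' in the recursion beats the ``$-1$'' decrease of the argument).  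Your ``alternative cleaner route'' has the same defect: bounding the number of iterations by $n-n_0$ and the value by $2(n-n_0)+\alpha_k(n_0)$ gives a quantity that is \emph{larger} than $n$, not smaller.

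The paper avoids this trap by \emph{not} trying to prove (2) in isolation.  In the simultaneous induction it derives (2) for the pair $(k,n)$ in one line from the induction hypotheses (2) and (3) for the pair $(k-2,n)$:
\[
\alpha'_k(n)\ \le\ \alpha'_{k-2}(n)\ <\ n,
\]
the first inequality being assertion~(3) at level $k-2$ and the second being assertion~(2) at level $k-2$.  This is why the three parts must really be proved together; the paper also anchors the induction with an explicit base claim at $k=2$ (monotonicity of $\alpha'_2$, the bound $\alpha'_2(n)\le n-1$, and the inequalities $\alpha'_2\le\alpha'_0$ for large $n$ and $\alpha'_4\le\alpha'_2$), which is more than the ``direct check'' you allude to.  If you reorganize your argument so that (2) is obtained from (3) at level $k-2$ rather than by a separate induction on $n$, the rest of your plan goes through.
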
 

Observe that for all $k \ge 0$ and $n \ge 0$, $\alpha'_k(n) \ge \alpha_k(n)$.
The following lemma, whose proof appears in Appendix \ref{acker2}, shows that $\alpha'_k(n)$ is not much greater than $\alpha_k(n)$.
\begin{lemma} \label{finallyreal}
For all $k,n \ge 0$, 
$\alpha'_{k}(n) \le  2 \alpha_k(n) + 4$. 
\end{lemma}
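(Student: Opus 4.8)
The plan is to prove the bound $\alpha'_k(n) \le 2\alpha_k(n) + 4$ by induction on $k$, and then (for each fixed $k$) by induction on $n$. The base cases $k=0$ and $k=1$ are immediate since $\alpha'_0 = \alpha_0$ and $\alpha'_1 = \alpha_1$, so the inequality holds with room to spare. For the inductive step, fix $k \ge 2$ and assume the claim holds for $k-2$, i.e., $\alpha'_{k-2}(m) \le 2\alpha_{k-2}(m) + 4$ for all $m$. I then argue by induction on $n$. For the small-$n$ regime $n \le k+1$ we have $\alpha'_k(n) = \alpha_k(n)$ by definition, so there is nothing to prove. For $n \ge k+2$ we unfold the recursion $\alpha'_k(n) = 2 + \alpha'_k(\alpha'_{k-2}(n))$.

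The key step is to control the argument $\alpha'_{k-2}(n)$ that appears inside. Using the outer inductive hypothesis (on $k$), $\alpha'_{k-2}(n) \le 2\alpha_{k-2}(n) + 4$. The idea is then to peel off the additive constants by repeatedly applying Lemma~\ref{basereal}(2): since $\alpha_{k-2}$ (for $k-2 \ge 2$; the cases $k-2 \in \{0,1\}$ needing a small separate check) strictly decreases arguments that are at least some small threshold, applying $\alpha_k$ to $2\alpha_{k-2}(n)+4$ costs only a bounded number of extra ``$+1$''s compared to applying it to $\alpha_{k-2}(n)$. Concretely, I would show $\alpha_k(2\alpha_{k-2}(n)+4) \le \alpha_k(\alpha_{k-2}(n)) + c$ for a small absolute constant $c$, by noting that passing from $\alpha_{k-2}(n)$ up to $2\alpha_{k-2}(n)+4$ and then applying $\alpha_k$ can be absorbed using Lemma~\ref{NSreal} (the identity $\alpha_k(m) = 1 + \alpha_k(\alpha_{k-2}(m))$) together with monotonicity. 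Combining this with the characterization $\alpha_k(n) = 1 + \alpha_k(\alpha_{k-2}(n))$ from Lemma~\ref{NSreal}, the chain becomes
\[
\alpha'_k(n) = 2 + \alpha'_k(\alpha'_{k-2}(n)) \le 2 + \alpha'_k(2\alpha_{k-2}(n)+4),
\]
and then applying the inner inductive hypothesis (on $n$) to the strictly smaller argument $2\alpha_{k-2}(n)+4$ — which is $< n$ for $n$ large enough, again by Lemma~\ref{base'real}(2) and Lemma~\ref{basereal}(2) — yields $\alpha'_k(2\alpha_{k-2}(n)+4) \le 2\alpha_k(2\alpha_{k-2}(n)+4) + 4 \le 2\alpha_k(\alpha_{k-2}(n)) + 4 + 2c$. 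Feeding back $\alpha_k(\alpha_{k-2}(n)) = \alpha_k(n) - 1$ gives $\alpha'_k(n) \le 2 + 2(\alpha_k(n)-1) + 4 + 2c = 2\alpha_k(n) + 4 + 2c$, and the whole point of the argument is that the bookkeeping forces $c = 0$, i.e. the ``$+4$'' slack in $\alpha'$ exactly absorbs the distortion.

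The main obstacle, and where I expect the real work to be, is making the constant come out to exactly $4$ rather than something larger: naively the doubling inside the recursion ($\alpha'_{k-2}(n) \le 2\alpha_{k-2}(n)+4$ feeding into another doubling) threatens to blow the constant up geometrically with the recursion depth. The fix has to be that $\alpha_k$ contracts fast enough — applying $\alpha_k$ to an argument that has been at most doubled-plus-constant changes the value by at most an absolute additive amount, not a multiplicative one — so that the ``$\times 2$'' never compounds. This likely requires a careful auxiliary lemma bounding $\alpha_k(2m+4) - \alpha_k(m)$ by a small constant (probably $1$ or $2$), proved via Lemma~\ref{NSreal} and monotonicity, with the boundary values $\alpha_k(0) = \alpha_k(1) = 0$, $\alpha_k(2) \in \{0,1\}$ handled by hand. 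I would also need to treat the low levels $k \in \{2,3\}$ (where $\alpha_{k-2}$ is $\alpha_0$ or $\alpha_1$, which shrink less aggressively and for which Lemma~\ref{basereal}(2) has the extra hypotheses $n \ge 2$ resp. $n \ge 3$) as explicit base cases of the outer induction, checking the stated inequality directly against the closed forms $\alpha'_2, \alpha'_3$ near their small arguments.
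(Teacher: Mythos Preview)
Your outline correctly identifies the skeleton of a double induction and the place where the difficulty lies, but the closing step does not work and the fix you suggest is not available. Concretely: after unfolding and applying both induction hypotheses you arrive at
\[
\alpha'_k(n) \;\le\; 2 + 2\alpha_k\bigl(2\alpha_{k-2}(n)+4\bigr) + 4,
\]
and to close you need $\alpha_k(2m+4) \le \alpha_k(m)$ with $m=\alpha_{k-2}(n)$, i.e.\ your constant $c$ must be $0$. But this is simply false already for $k=2$: $\alpha_2(2m+4)=\lceil\log(2m+4)\rceil$ exceeds $\alpha_2(m)=\lceil\log m\rceil$ by at least $1$ for all $m\ge 1$. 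So the ``bookkeeping forces $c=0$'' hope cannot be realized, and with any $c\ge 1$ your bound becomes $2\alpha_k(n)+4+2c$, which does not yield the stated lemma. You yourself anticipate $c\in\{1,2\}$ in the auxiliary lemma, which is incompatible with needing $c=0$.

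The paper resolves exactly this obstruction by \emph{strengthening the induction hypothesis}: instead of proving $\alpha'_k(n)\le 2\alpha_k(n)+4$ directly, it proves the stronger-looking statement $\alpha'_k\bigl(2(n+2)\bigr)\le 2\bigl(\alpha_k(n)+2\bigr)$ (Lemma~\ref{finally}), from which the desired bound follows by monotonicity of $\alpha'_k$. The point of the shape $2(n+2)$ on the left is that when you unfold $\alpha'_k(2(n+2))=2+\alpha'_k(\alpha'_{k-2}(2(n+2)))$ and apply the outer hypothesis, the inner argument is already of the form $2(\alpha_{k-2}(n)+2)$, so the inner hypothesis applies verbatim and yields $2(\alpha_k(\alpha_{k-2}(n))+2)=2(\alpha_k(n)+1)$; adding the leading $2$ closes the induction exactly. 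The companion Claim~\ref{auxim}, namely $\alpha_k(2(n+2))<2(\alpha_k(n)+2)$, handles the base cases where $\alpha'_k$ coincides with $\alpha_k$. This strengthening is the missing idea in your proposal.
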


The \emph{Ackermann function} is defined by $A(n) = A_n(n)$, for all $n \ge 0$,
and the one-parameter \emph{inverse-Ackermann function} is defined by $\alpha(n) = \min\{s \ge 0:A(s) \ge n\}$,
for all $n \ge 0$.
In \cite{NS07} it is shown that
 $\alpha_{2\alpha(n) + 2}(n) \le 4$.
(A similar bound was established in \cite{LaPo90}.)
By Lemma \ref{finallyreal}, we get that $\alpha'_{2\alpha(n)+2}(n) \le 12$.
Finally, the two-parameter \emph{inverse Ackermann function} is defined by $\alpha(m,n) = \min\{s \ge 1: A(s,4\lceil m/n \rceil) \ge \log n\}$, for all $m,n \ge 0$. \section{1-Spanners for Tree Metrics} \label{1span}
In this section we present our construction of 1-spanners for tree metrics.

\subsection{The Basic Scheme} \label{s:scheme}
Let $(T,rt) = (V,E,w)$ be an arbitrary $n$-vertex weighted rooted tree, and let $M_T$
be the tree metric induced by $T$. Our goal is to compute a sparse 1-spanner for
$T$ with bounded diameter. Clearly, $T$ is already a sparse 1-spanner for itself, but its diameter may be huge.
We would like to reduce the diameter of $T$ by adding to it a small number of edges.

For a pair $u,v$ of vertices in $T$, we denote by $P_T(u,v)$ the unique path in $T$ between $u$ and $v$.
Let $H$ be an arbitrary \emph{unweighted} graph on the vertex set $V$ of $T$.   
A path $P$ in $H$ between $u$ and $v$ is called $T$-monotone if it is a sub-path of $P_T(u,v)$, i.e.,
if we write $P_T(u,v) = (u = v_0, v_1, \ldots, v = v_t)$, then $P$ can be written as $P = (u = v_{i_0}, v_{i_1}, \ldots,
v_{i_q})$, where $0 = i_0 < i_1 < \ldots < i_q = t$. 
The \emph{$T$-monotone distance} between a pair $u,v$ of vertices in $H$ is defined as the minimum
number of edges in a $T$-monotone path in $H$ connecting them. The \emph{$T$-monotone diameter}
of $H$, denoted $\Lambda_T(H) = \Lambda(H)$, is defined as the maximum $T$-monotone distance between any pair of vertices in $H$.
(If $T$ is clear from the context, we may write diameter
instead of $T$-monotone diameter.)
By the triangle inequality, for any $T$-monotone path in $H$, the corresponding weighted path in $M_T$ provides a 1-spanner
path for $T$. Hence, $H$ translates into a 1-spanner for $T$ with diameter $\Lambda(H)$, and this holds 
true regardless of the actual weight function $w$ of $T$. 
We henceforth restrict attention to unweighted trees in the sequel.

Following \cite{BTS94}, we study a generalization of the problem for \emph{Steiner trees}, where
there is a designated subset $R(T) \subseteq V$ of \emph{required vertices}, and the diameter of a 1-spanner 
for $T$ is defined as the maximum $T$-monotone distance between any pair of \emph{required vertices}. 
The \emph{required-size} of a Steiner tree is defined as the number of required vertices in it.
Also, the remaining vertices in $S(T) = V \setminus R(T)$ are called the \emph{Steiner vertices} of $T$.
If the number
of Steiner vertices in $T$ is (much) larger than the number of required
vertices, it might be possible to \emph{prune} some \emph{redundant} Steiner vertices from $T$
while preserving its basic structure and intrinsic properties.
A Steiner rooted tree $(T',rt')$ is called \emph{$T$-monotone preserving}, if
(1) $R(T') = R(T)$, and (2)
for every pair $u,v$ of required vertices, the unique path $P_{T'}(u,v)$ between $u$ and $v$ in $T'$ is
$T$-monotone. 
Consider a $T$-monotone preserving tree $(T',rt')$, and let
$u,v$ be an arbitrary pair of required vertices. Note that any $T'$-monotone path
between $u$ and $v$ is also $T$-monotone.
Thus any 1-spanner $H'$ for $T'$ with $T'$-monotone diameter $k$
is also a 1-spanner for $T$ with $T$-monotone diameter $k$. 

Our algorithm for constructing 1-spanners for Steiner tree metrics employs the following recursive scheme.
We start by \emph{pruning} the redundant vertices of $T$, thus transforming $T$ into a $T$-monotone
preserving tree $T'$ that does not contain too many 
Steiner vertices. We then select a set $CV_\ell$ of at most $n/\ell$ cut-vertices whose removal from the
tree decomposes it into a collection of subtrees of required-size at most $\ell$ each, for some parameter
 $\ell$.
Next, we would like to connect the cut-vertices using a small number of edges, so that the $T$-monotone
distance between any pair of cut-vertices will be small. To this end we (1) compute a copy $\tau$ of $T$ in
which the designated set of required vertices is $CV_\ell$, (2) prune $\tau$ from
its redundant vertices, and (3) call the algorithm recursively on the resulting pruned tree.
We then add a small number of edges to connect between cut vertices and subtrees in the spanner. This step is simple and does not involve a recursive call of the algorithm. Finally, we prune each of the subtrees from redundant vertices, 
and then call the algorithm recursively for each of them.

\ignore{
Our solution to Problem \ref{shortcuttingst} starts (Section \ref{s:prune}) with presenting a linear time
procedure for removing the redundant Steiner vertices of $T$, thus transforming $T$ into a $T$-monotone
preserving tree $T'$ that does not contain too many 
Steiner vertices. 
We proceed  with proving a number of key properties of the resulting \emph{pruned} tree $T'$. 
In particular, we show therein that the number of Steiner vertices of $T'$ is smaller than the number of required vertices.
Next, we present (Section \ref{decompose}) a linear time procedure that, given a Steiner tree $T$, computes a set ${CV}_\ell$ of at most $n/\ell$ cut vertices 
whose removal from $T$ partitions it into subtrees of required-size at most $\ell$ each. 
Determining an optimizing value for $\ell$ plays a major role in most of the previous solutions ...]
to  Problem \ref{shortcutting}, as well as in the current solution.
In \cite{AS87} and \cite{NS07}, for example, $\ell$ was set to be $\alpha_{k-2}(n)$, where $n$ is the size of the tree,
$k$ is the desired bound on the diameter, and $\alpha_k(n)$ is the inverse of a certain function at the
$\lceil k/2 \rceil$th level of the primitive recursive hierarchy. 
In other works ] $\ell$ was set to some value very close to $\alpha_{k-2}(n)$.
In Section \ref{s:value} we define a variant $\alpha'_k(n)$ of the function $\alpha_k(n)$, 
and later demonstrate that 
the optimizing value for $\ell$ should be $\alpha'_{k-2}(n)$ rather than $\alpha_{k-2}(n)$. S: very bad]]
Finally, in Section \ref{s:sol} we present an optimal solution to Problem \ref{shortcuttingst}. 

A central ingredient of previous constructions of 1-spanners for tree metrics ...] 
is the computation
of a set $CV_\ell$ of $O(|T|/\ell)$ cut vertices whose removal from $T$
partitions it into subtrees of size $O(\ell)$ each, where $\ell$ is some parameter to be discussed later.
Having computed $CV_\ell$, it is required to 
connect the cut vertices using a minimal number of edges, 
so that the monotone distance between any pair of cut vertices will be small. 
To this end, Bodlaender et al.\ \cite{BTS94} suggested to study 
the following generalization 
of Problem 
\ref{shortcutting} 
for \emph{colored} or \emph{Steiner} trees, in which we no longer
need to bound the hop-distances for \emph{all} vertices, but
rather only for vertices that belong to a \emph{designated
subset} of $V$, e.g., $CV_\ell$.
Specifically, suppose that in $T$, a subset $R(T) \subseteq V$ of the vertices are colored black,
and the rest of the vertices in $S(T) = V \setminus R(T)$ are colored white.
The black (respectively, white) vertices are called the \emph{required vertices} (resp., \emph{Steiner
vertices}) of $T$. The number $|R(T)|$ of required vertices in $T$ is called the \emph{required-size}
of $T$.
Let $H$ be a spanning subgraph of $M_T$.
The \emph{$T$-monotone diameter} (or \emph{hop-diameter}) S: or \emph{diameter}]] of $H$ is defined
to  be the maximum hop-distance between any two \emph{required} vertices in $H$.
\begin{problem}  \label{shortcuttingst}
Given a weighted Steiner tree $T$ and an integer $k \ge 2$, 
efficiently compute a sparsest possible
1-spanner $H$ for $T$, $E(T) \subseteq E(H)$, with hop-diameter at most $k$.
\end{problem}
}
\subsection{The Pruning Procedure} \label{s:prune}
In this section we devise a procedure for pruning the redundant
vertices of a Steiner tree 
while preserving its basic structure. 
In addition, we provide a few useful properties of pruned trees.

For a Steiner rooted tree $(T,rt)$ and a pair $u,v$ of vertices in $T$,
we denote by $LCA_T(u,v)$ the \emph{least common ancestor} (henceforth, LCA)
of $u$ and $v$ in $T$.
A Steiner vertex $x \in S = S(T)$ in $T$
is called \emph{useful} if it is the LCA
of some pair of required vertices $u,v \in R = R(T)$. Otherwise it is called \emph{redundant}.
We denote by $LCA(T)$ the set of all 
useful vertices in $T$, i.e., $LCA(T) =  \{x \in S ~\vert~ \exists u,v \in R: x = LCA_T(u,v)\}$.
A Steiner rooted tree with no redundant vertices is called \emph{pruned}.

We denote the children of the root vertex $rt$ in a Steiner rooted tree $(T,rt)$ by $c_1,\ldots,c_{ch(rt)}$, where $ch(rt)$ denotes
the number of children of $rt$ in $T$. For each index $i \in [ch(rt)]$, let $T_{(i)}$ be the subtree of $T$
rooted at $c_i$. We say that the subtree $T_{(i)}$ is \emph{required} if it contains at least one
required vertex, i.e., if $R_{(i)} = R \cap V(T_{(i)})$ is non-empty. Otherwise we say that it is \emph{redundant}.
Notice that all vertices in a redundant subtree are redundant.
Denote by $I = I(T)$ the set of all indices $i$, such that $i \in [ch(rt)]$ and $T_{(i)}$ is a required subtree.

Next, we present a linear time procedure $Prune = Prune((T,rt))$
that accepts as input a Steiner rooted tree $(T,rt)$,
and transforms it into a pruned $T$-monotone preserving tree $(T',rt')$.

If $T$ consists of just the single vertex $rt$, then the procedure either leaves $T$ intact if $rt \in R$,
or it transforms $T$ into an empty tree if $rt \nin R$.
Otherwise, $|T| \ge 2$.
For each index $i \in [ch(rt)]$, the tree $(T_{(i)},c_i)$ is recursively transformed into a pruned $T_{(i)}$-monotone preserving tree $(T'_{(i)},c'_i)$.
Observe that for each index $i \in [ch(rt)] \setminus I$, $R_{(i)} = \emptyset$ and $T_{(i)}$ is a redundant subtree, 
and so $T'_{(i)}$ is empty. Also, for each index $i \in I$,
the subtree $T'_{(i)}$ is non-empty. 
The procedure removes all $[ch(rt)]$ edges connecting the root vertex $rt$ of $T$ with its children. 
The execution of the procedure then splits into four cases.
\\\emph{Case 1: $rt \in R$.} The root vertex $rt$ of $T$ remains the root vertex of $T'$, and for each
index $i \in [I]$, an edge connecting $rt$ with the root $c'_i$ of $T'_{(i)}$ is added.
\\\emph{Case 2: $rt \nin R$ and $I = \emptyset$.} Hence, $R = \emptyset$, and the procedure transforms $T$ into an
empty tree.
\\\emph{Case 3: $rt \nin R$ and $|I| = 1$.} In this case $rt$ is redundant, and there
is a single non-empty subtree $T'_{(p)}$, i.e., $I = \{p\}$, for some index $p \in [ch(rt)]$.
Hence, the procedure removes $rt$ and sets $T' = T'_{(p)}$.
\\\emph{Case 4: $rt \nin R$ and $|I| \ge 2$.} In this case $rt$ is useful. As in
case 1, the root $rt$ of $T$ remains the root vertex of $T'$, and for each
index $i \in [I]$, an edge connecting $rt$ with the root $c'_i$ of $T'_{(i)}$ is added.
\\(See Figure \ref{prun} for an illustration.)

\begin{figure*}[htp]
\begin{center}
\begin{minipage}{\textwidth} 
\begin{center}
\setlength{\epsfxsize}{6.8in} \epsfbox{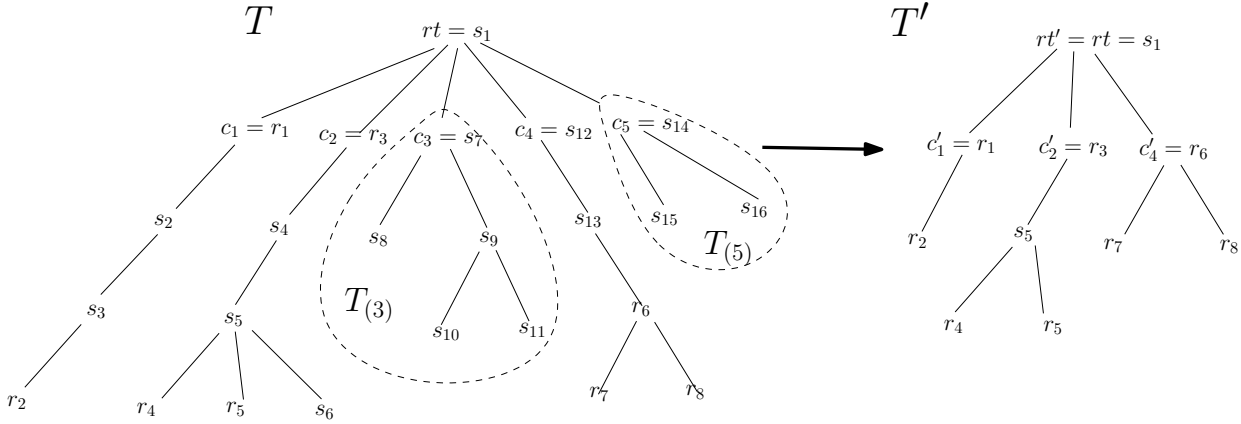}
\end{center}
\end{minipage}
\caption[]{ \label{prun} \sf \footnotesize A rooted Steiner tree $(T,rt)$
is depicted on the left, having 
8 required vertices $r_1,\ldots,r_8$ and 16
Steiner vertices $rt = s_1,\ldots,s_{16}$. The two subtrees
$T_{(3)}$ and $T_{(5)}$ of $T$ that are depicted within dashed lines are redundant, whereas the
other three subtrees $T_{(1)},T_{(2)}$, and $T_{(4)}$ of $T$ are required.
The pruned $T$-monotone preserving tree $(T',rt')$ that is depicted on the right
is obtained as a result of the invocation of the procedure $Prune$ on $T$.}
\end{center}
\end{figure*}

It is easy to verify that the procedure $Prune$ can be implemented in linear time.

Next, we analyze the properties of the resulting tree $T'$.

The following lemma follows easily from the description of the procedure. 
\begin{lemma} \label{description}
$(T',rt')$ is a Steiner rooted tree over $V(T') = R(T) \cup LCA(T)$, and $R(T') = R(T)$.
Also, for each index $i \in I$, $(T'_{(i)},c'_i)$ is a Steiner rooted tree over $V(T'_{(i)}) = R(T_{(i)}) \cup LCA(T_{(i)})$,
and $R(T'_{(i)}) = R(T_{(i)})$.
\end{lemma}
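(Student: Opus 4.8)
The plan is to prove both assertions simultaneously by induction on $|T|$, with the case analysis mirroring the four cases in the description of $Prune$. The second (``also'') assertion will come for free: for each $i \in I$, the tree $(T'_{(i)},c'_i)$ is by definition the output of the recursive call $Prune((T_{(i)},c_i))$, which runs on a tree of strictly smaller size, so the induction hypothesis applied to $T_{(i)}$ directly gives $V(T'_{(i)}) = R(T_{(i)}) \cup LCA(T_{(i)})$ and $R(T'_{(i)}) = R(T_{(i)})$. Hence the real content is the first assertion about $(T',rt')$ itself.

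Before the case analysis I would record two elementary structural facts about how the required and the useful vertices of $T$ distribute over the subtrees $T_{(1)},\ldots,T_{(ch(rt))}$. Since these subtrees partition $V(T)\setminus\{rt\}$, the required vertices satisfy $R(T) = (R(T)\cap\{rt\}) \cup \bigcup_{i\in I} R(T_{(i)})$, using that $R(T_{(i)})=\emptyset$ precisely for $i\notin I$. For the useful Steiner vertices: a Steiner vertex $x\neq rt$ lying in $T_{(i)}$ equals $LCA_T(u,v)$ for required $u,v$ iff it equals $LCA_{T_{(i)}}(u,v)$ (both $u,v$ must be descendants of $x$, hence lie in $T_{(i)}$, and least common ancestors are computed identically inside $T_{(i)}$), so $x\in LCA(T)$ iff $x\in LCA(T_{(i)})$; and the root $rt$, when it is a Steiner vertex, is useful iff some pair of required vertices lies in two distinct subtrees, i.e.\ iff $|I|\ge 2$. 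Since $i\notin I$ forces $R(T_{(i)})=\emptyset$ and hence $LCA(T_{(i)})=\emptyset$, this yields $LCA(T) = \bigcup_{i\in I} LCA(T_{(i)})$ together with $\{rt\}$ exactly when $rt\notin R$ and $|I|\ge 2$.

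Then I would dispatch the base case $|T|=1$ (in both sub-cases $LCA(T)=\emptyset$, and the claim is immediate) and run through the four cases of the inductive step. In Case 1 ($rt\in R$) and Case 4 ($rt\notin R$, $|I|\ge 2$), $T'$ is formed by attaching the non-empty $T'_{(i)}$, $i\in I$, under the retained root $rt$, so $V(T')=\{rt\}\cup\bigcup_{i\in I}V(T'_{(i)})$; substituting the induction hypothesis for each $T'_{(i)}$ and the two structural facts gives $V(T')=R(T)\cup LCA(T)$, with $rt$ landing on the $R$-side in Case 1 and on the $LCA$-side in Case 4, and $R(T')=(R(T)\cap\{rt\})\cup\bigcup_{i\in I}R(T_{(i)})=R(T)$. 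In Case 2 ($rt\notin R$, $I=\emptyset$) we have $R(T)=LCA(T)=\emptyset$ and $T'$ empty, so both equalities hold vacuously. In Case 3 ($rt\notin R$, $I=\{p\}$) we set $T'=T'_{(p)}$; since $rt\notin R$, $rt\notin LCA(T)$ (because $|I|<2$), and the other subtrees carry no required and hence no useful vertices, the induction hypothesis for $T_{(p)}$ gives $V(T')=R(T_{(p)})\cup LCA(T_{(p)})=R(T)\cup LCA(T)$ and $R(T')=R(T_{(p)})=R(T)$. In each case one checks directly from the construction that the output is a genuine rooted tree with $R(T')\subseteq V(T')$, i.e.\ a Steiner rooted tree.

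Nothing here is delicate; the only point requiring care is the bookkeeping of where the root $rt$ sits in the union $R(T)\cup LCA(T)$ across the four cases, together with the observation that a redundant subtree contains no useful vertices either — which is exactly why it is worth isolating the structural fact about useful vertices in advance. So the ``main obstacle'' is really just keeping the case analysis organized rather than any genuine difficulty.
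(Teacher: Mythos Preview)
Your proposal is correct and is essentially the approach the paper has in mind: the paper does not give a proof at all, stating only that the lemma ``follows easily from the description of the procedure,'' and your induction on $|T|$ with the four-case analysis is precisely the natural unwinding of that description. The structural facts you isolate about how $R(T)$ and $LCA(T)$ decompose over the subtrees are exactly what is needed, and your case bookkeeping is accurate.
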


\begin{lemma} \label{hierarchy}
For any pair $u,v$ of vertices in $T'$, $u$ is an ancestor of $v$ in $T'$
iff it is its ancestor in $T$.
\end{lemma}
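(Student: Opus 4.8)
\textbf{Proof plan for Lemma \ref{hierarchy}.}
The plan is to prove the ancestor relation is preserved by induction on the structure of the recursion defining $Prune$, tracking carefully which vertices of $T$ survive into $T'$ and how edges are re-attached. By Lemma \ref{description} we already know $V(T') = R(T) \cup LCA(T)$, so every vertex of $T'$ is a vertex of $T$; the content of the lemma is that the partial order ``is an ancestor of'' restricted to this vertex set coincides in the two trees. Note that ``ancestor'' is the reflexive-transitive closure of the parent relation, so it suffices to understand how parents in $T'$ relate to ancestry in $T$.

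First I would record the one structural fact that drives everything: in the construction, whenever an edge is added between $rt$ (or the vertex that becomes $rt'$) and the root $c'_i$ of a pruned subtree $T'_{(i)}$, the vertex $c'_i$ is, by induction, a vertex lying inside $T_{(i)}$, hence a (strict) descendant of $rt$ in $T$; and in Cases 1 and 4 the root of $T'$ is literally $rt$, while in Case 3 the root of $T'$ is the root of the unique nonempty pruned child subtree $T'_{(p)}$, which again lies in $T_{(p)} \subseteq T$. So I would prove, as the inductive claim, the slightly stronger statement: $(T',rt')$ is obtained from $T$ by (i) deleting some vertices and (ii) contracting, for each maximal deleted ``run,'' a path in $T$ down to its first surviving descendant — equivalently, the parent in $T'$ of any non-root vertex $v$ is the nearest proper ancestor of $v$ in $T$ that belongs to $V(T')$. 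Granting this characterization, the lemma is immediate: if $u$ is an ancestor of $v$ in $T'$, then following parent pointers in $T'$ from $v$ up to $u$ and concatenating the corresponding $T$-ancestor relations shows $u$ is a $T$-ancestor of $v$; conversely, if $u$ is a $T$-ancestor of $v$ and both lie in $V(T')$, then the nearest-surviving-ancestor characterization shows the $T'$-parent chain from $v$ passes through $u$ (it can never ``skip over'' $u$, since $u$ itself survives), so $u$ is a $T'$-ancestor of $v$.

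The inductive step then just checks the four cases of the procedure against the nearest-surviving-ancestor characterization. In Case 1 and Case 4, $rt \in V(T')$ is the root and is attached to each $c'_i$; a vertex $v \ne rt$ lies in exactly one $T'_{(i)}$, and by induction its $T'_{(i)}$-parent chain realizes nearest-surviving-ancestor within $T_{(i)}$, after which its parent is $c'_i$ and then $rt$ — and since $rt \notin V(T_{(i)})$ for $i \in I$ and (in Case 4) $rt$ is the only surviving vertex strictly above all the $c'_i$ that we might have passed, this is exactly the nearest surviving proper ancestor in $T$ of $c'_i$. In Case 3, $rt$ is deleted and $T' = T'_{(p)}$, so no vertex of $T'$ had $rt$ as a surviving ancestor anyway, and the claim reduces to the inductive hypothesis for $T_{(p)}$. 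Case 2 is vacuous ($T'$ is empty). The base case ($|T| = 1$) is trivial.

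The main obstacle — really the only delicate point — is making the ``nearest surviving proper ancestor'' bookkeeping airtight across the recursive attachment of the $T'_{(i)}$: one must be sure that when the procedure discards $rt$ in Case 3 and promotes the root of $T'_{(p)}$, or discards intermediate Steiner vertices deeper in the recursion, no surviving vertex ever gets a $T'$-parent that is not a genuine $T$-ancestor, and, conversely, that two vertices $u, v \in V(T')$ with $u$ a $T$-ancestor of $v$ never end up in ``incomparable'' positions of $T'$ because the path between them in $T$ was entirely deleted — but that cannot happen precisely because $u$ itself survives and sits on that path. Once the strengthened inductive claim is stated correctly, each of the four cases is a one-line verification.
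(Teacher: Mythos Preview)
Your proposal is correct and follows essentially the same inductive skeleton as the paper: induction along the recursion of $Prune$, followed by a case split according to whether $rt$ survives (Cases 1 and 4) or is discarded (Case 3). The one substantive difference is that the paper argues the ancestor relation directly --- given $u$ a $T'$-ancestor of $v$, it locates both in the same $T'_{(i)}$ (or identifies $u$ with $rt$) and invokes the induction hypothesis, with the ``if'' direction declared symmetric --- whereas you interpose the stronger claim that the $T'$-parent of every non-root vertex is its nearest surviving proper $T$-ancestor, and then read off both directions of the lemma from that. Your route is slightly longer but yields a reusable structural description of $Prune$ (indeed it essentially gives Lemma~\ref{sub-p} for free); the paper's route is more economical. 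One small wording issue in your Case 1/4 check: the reason $rt$ is the nearest surviving proper $T$-ancestor of $c'_i$ is that, by the induction hypothesis for $T_{(i)}$, no vertex of $V(T'_{(i)})$ is a proper $T_{(i)}$-ancestor of the root $c'_i$; your sentence there is a bit garbled, but the argument is sound.
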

\begin{proof}
The proof is by induction on $n' = |T'|$.
The basis $n' \le 1$ holds vacuously.
\\\emph{Induction Step:} We assume the correctness of the statement for all smaller values of $n'$, $n' \ge 2$,
and prove it for $n'$. Since $n' \ge 2$, it must hold that $|I| \ge 1$. Next, we prove the ``only if'' part.
The argument for the ``if'' part is similar. 
Consider an arbitrary pair $u,v$ of vertices in $T'$, such that $u$ is an ancestor of $v$ in $T'$.
Next, we show that $u$ is also an ancestor of $v$ in $T$.
By Lemma \ref{description}, for each index $i \in I$, $V(T'_{(i)}) \subseteq V(T_{(i)})$.
The analysis splits into two cases.
\\\emph{Case 1: $|I| =1$ and $rt \nin R$}. In this case $T' = T'_{(p)}$, with $I = \{p\}$.
Notice that $u$ and $v$ belong to $T_{(p)}$.
By the induction hypothesis for $T'_{(p)}$, $u$ is an ancestor of $v$ in $T_{(p)}$, and thus also in $T$.
\\\emph{Case 2: Either $rt \in R$ or $|I| \ge 2$}. 
In both cases $rt(T') = rt$, and for each index $i \in I$,
the root $c'_i$ of the subtree $T'_{(i)}$ is a child of $rt$ in $T'$.
\\If both $u$ and $v$ belong to the same subtree $T'_{(i)}$, for 
some index $i \in I$, then they both belong to $T_{(i)}$. Hence, by the induction hypothesis for $T'_{(i)}$, $u$ 
is an ancestor of $v$ in $T_{(i)}$, and thus also in $T$.
\\Since $u$ is an ancestor of $v$ in $T'$, $u$ and $v$ cannot belong to different subtrees $T'_{(j)}$ and $T'_{(k)}$ of $T'$,
$j,k \in I$.  
Hence, the remaining case is that $u = rt(T') = rt$. Clearly, $rt$ is an ancestor of
$v$ in $T$, and we are done.\QED 
\end{proof}

\begin{lemma} \label{lca}
For any pair $u,v$ of required vertices, $LCA_{T'}(u,v) = LCA_T(u,v)$.
\end{lemma}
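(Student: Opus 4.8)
The plan is to derive the identity directly from Lemmas \ref{description} and \ref{hierarchy}, with no further structural analysis of the procedure $Prune$. Write $w = LCA_T(u,v)$. The first thing I would establish is that $w$ belongs to $V(T')$. Since $u$ and $v$ are required vertices, there are two possibilities: either $w \in R = R(T)$, in which case $w \in R(T') \subseteq V(T')$ by Lemma \ref{description}; or $w \in S(T)$, in which case $w$ is, by definition, the LCA of a pair of required vertices and therefore a useful vertex, so $w \in LCA(T) \subseteq V(T')$, again by Lemma \ref{description}. Hence in either case $w \in V(T')$, and we also record that $V(T') \subseteq V(T)$.

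Next I would prove $w = LCA_{T'}(u,v)$ by a two-sided ancestry argument. On one side, $w$ is an ancestor of (or equal to) both $u$ and $v$ in $T$, so by Lemma \ref{hierarchy} it is an ancestor of (or equal to) both $u$ and $v$ in $T'$ as well; thus $w$ is a common ancestor of $u$ and $v$ in $T'$, which forces $w$ to be an ancestor of (or equal to) $LCA_{T'}(u,v)$ in $T'$. On the other side, set $z = LCA_{T'}(u,v)$. By Lemma \ref{description}, $z \in V(T') \subseteq V(T)$, and $z$ is a common ancestor of $u$ and $v$ in $T'$, so Lemma \ref{hierarchy} gives that $z$ is a common ancestor of $u$ and $v$ in $T$; since $w$ is the \emph{lowest} common ancestor of $u,v$ in $T$, the vertex $z$ must be an ancestor of (or equal to) $w$ in $T$, and applying Lemma \ref{hierarchy} once more yields that $z$ is an ancestor of (or equal to) $w$ in $T'$.

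Combining the two directions, $w$ and $z$ are each an ancestor-or-equal of the other in $T'$; since ancestry in a tree is a partial order, this is possible only if $w = z$, i.e., $LCA_T(u,v) = LCA_{T'}(u,v)$, as claimed. The only point that requires genuine care is the membership claim $w \in V(T')$ in the first step; everything after that is a routine application of the characterization of ancestors in $T'$ supplied by Lemma \ref{hierarchy}, so that membership argument is where I would concentrate the attention.
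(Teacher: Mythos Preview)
Your proof is correct and follows essentially the same approach as the paper: first use Lemma~\ref{description} to see that $LCA_T(u,v)$ lies in $V(T')$ (being either required or useful), and then use Lemma~\ref{hierarchy} in both directions to obtain the two-sided ancestry argument forcing $LCA_T(u,v) = LCA_{T'}(u,v)$. The only cosmetic difference is that the paper runs the two ancestry directions in the opposite order.
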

\begin{proof}
Write $l' = LCA_{T'}(u,v)$ and $l = LCA_{T}(u,v)$.
First, notice that $l$ is either a required vertex or a useful vertex. By Lemma \ref{description}, we get that $l$ belongs to $T'$.
By definition, $l'$ is the LCA of $u$ and $v$ in $T'$.
By Lemma \ref{hierarchy}, it follows that $l'$ is a common ancestor of $u$ and $v$ in $T$,
and so it must be an ancestor of their LCA $l$ in $T$.
Lemma \ref{hierarchy} implies that $l'$ is an ancestor of $l$ also in $T'$.
However, by applying Lemma \ref{hierarchy} again, we get that $l$ is a common ancestor of $u$ and $v$ in $T'$,
and so it must be an ancestor of their LCA $l'$ in $T'$. It follows that $l' = l$.\QED
\end{proof}

Lemmas \ref{description} and \ref{lca} yield the following corollary.
\begin{corollary}
$(T',rt')$ is pruned. 
\end{corollary}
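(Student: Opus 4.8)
The plan is to unwind the definition of \emph{pruned}: a Steiner rooted tree has no redundant vertices precisely when every one of its Steiner vertices is \emph{useful}, i.e., is the LCA of some pair of required vertices of that same tree. So it suffices to prove $S(T') \subseteq LCA(T')$ (the reverse inclusion being immediate from the definition of $LCA(T')$).

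First I would pin down the Steiner vertex set of $T'$. By Lemma \ref{description}, $V(T') = R(T) \cup LCA(T)$ and $R(T') = R(T)$. Since $LCA(T) \subseteq S(T)$ by definition and $S(T)$ is disjoint from $R(T)$, we obtain $S(T') = V(T') \setminus R(T') = LCA(T)$.

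Next I would fix an arbitrary $x \in S(T') = LCA(T)$. By definition of $LCA(T)$ there is a pair $u,v \in R(T)$ of required vertices with $x = LCA_T(u,v)$. Since $R(T') = R(T)$, these $u,v$ are also required in $T'$, and Lemma \ref{lca} gives $LCA_{T'}(u,v) = LCA_T(u,v) = x$. Hence $x$ is useful in $T'$, i.e., $x \in LCA(T')$. As $x$ was arbitrary, $S(T') \subseteq LCA(T')$, so $T'$ contains no redundant vertex and is therefore pruned.

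I do not anticipate any real obstacle here; the argument is essentially a one-line combination of Lemmas \ref{description} and \ref{lca}. The only point that needs a moment's care is the identification $S(T') = LCA(T)$, and this follows at once from $R(T') = R(T)$ together with the fact that $LCA(T)$ consists of Steiner vertices of $T$, hence is disjoint from $R(T)$.
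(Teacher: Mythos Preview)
Your proof is correct and essentially identical to the paper's: both identify $S(T') = LCA(T)$ via Lemma~\ref{description}, then use Lemma~\ref{lca} to show each such vertex is the LCA in $T'$ of a pair of required vertices, hence useful. The paper phrases this as proving $LCA(T') = LCA(T)$, but the content is the same.
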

\begin{proof}
We argue that $LCA(T') = LCA(T)$.
Indeed, by Lemma \ref{description}, $V(T') = R(T) \cup LCA(T)$ and
$R(T') = R(T)$. Hence, $S(T') = LCA(T)$, and so $LCA(T') \subseteq S(T') = LCA(T)$.
To see why $LCA(T) \subseteq LCA(T')$ holds true as well, consider a vertex $l \in LCA(T)$. By definition, $l \nin R(T)$, and there exists a pair of required vertices 
$u,v \in R(T)$, such that $l = LCA_T(u,v)$. Hence, $l \nin R(T')$, and by Lemma \ref{lca}, $l = LCA_{T'}(u,v)$.
It follows that $l \in LCA(T')$.

Consequently, $V(T') = R(T') \cup LCA(T')$, and so there are no redundant vertices in $T'$. \QED
\end{proof}

\begin{lemma} \label{sub-p}
For any pair $u,v$ of vertices in $T'$, such that $u$ is an ancestor of $v$ in $T'$,
$P_{T'}(u,v)$ is $T$-monotone.
\end{lemma}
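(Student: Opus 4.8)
The statement to prove is Lemma \ref{sub-p}: for any pair $u,v$ of vertices in $T'$ with $u$ an ancestor of $v$ in $T'$, the path $P_{T'}(u,v)$ is $T$-monotone. The natural approach is induction on $n' = |T'|$, mirroring the structure of the proofs of Lemmas \ref{hierarchy} and \ref{lca}, and following the four-case split in the description of the procedure $Prune$. The base case $n' \le 1$ is vacuous since there is no pair of distinct vertices with an ancestor relation.

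For the induction step, suppose $n' \ge 2$, so $|I| \ge 1$. Write $P_{T'}(u,v) = (u = v_0, v_1, \ldots, v_q = v)$, the path in $T'$ from $u$ down to $v$. By Lemma \ref{hierarchy}, every $v_j$ is an ancestor of $v$ in $T$ and a descendant of $u$ in $T$; I need to show these vertices actually appear in order along $P_T(u,v)$, i.e., each $v_j$ is an ancestor of $v_{j+1}$ in $T$. But consecutive vertices $v_j, v_{j+1}$ on $P_{T'}(u,v)$ are parent and child in $T'$, and $v_j$ is an ancestor of $v_{j+1}$ in $T'$, hence by Lemma \ref{hierarchy} also in $T$. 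Thus $u = v_0, v_1, \ldots, v_q = v$ is a chain of ancestors in $T$, which means all $v_j$ lie on $P_T(u,v)$ and occur in that order. That is exactly the definition of $T$-monotone. In fact this shows the lemma follows almost immediately from Lemma \ref{hierarchy} alone, without needing the case analysis: the key observation is that $P_{T'}(u,v)$ consists precisely of $u$, then the parent-chain down to $v$, so adjacent vertices on it are related as ancestor/descendant in $T'$, and Lemma \ref{hierarchy} transfers this to $T$.

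To be careful about the one subtlety: I should confirm that when $u$ is an ancestor of $v$ in a rooted tree, the unique path $P_{T'}(u,v)$ is monotone in $T'$ itself — i.e., it goes $u \to \text{child} \to \cdots \to v$ with each step going from parent to child. This is a standard fact about rooted trees (the unique path between an ancestor and a descendant is the downward path), so it needs at most a sentence. Then each edge $\{v_j, v_{j+1}\}$ of this path has $v_j$ the parent of $v_{j+1}$ in $T'$, so $v_j$ is an ancestor of $v_{j+1}$ in $T'$, hence in $T$ by Lemma \ref{hierarchy}. Composing, $v_0, v_1, \ldots, v_q$ are successive ancestors in $T$, so $P_T(u,v)$ passes through all of them in this order, and $P_{T'}(u,v)$ is a sub-path of $P_T(u,v)$.

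I do not anticipate a real obstacle here; the lemma is essentially a corollary of Lemma \ref{hierarchy}. The only thing to get right is phrasing the "sub-path with increasing indices" argument cleanly so that it matches the definition of $T$-monotone given earlier in Section \ref{s:scheme}. If the authors instead want an induction paralleling the $Prune$ case analysis (perhaps for uniformity of exposition), I would split on Case 1 ($|I|=1$, $rt \notin R$, so $T' = T'_{(p)}$ and apply the induction hypothesis directly) versus Case 2 (either $rt \in R$ or $|I| \ge 2$, where if $u,v$ are in a common $T'_{(i)}$ apply the hypothesis, and otherwise $u = rt$ and one reduces to a path within $T'_{(i)}$ prefixed by the edge from $rt$), but the direct argument via Lemma \ref{hierarchy} is cleaner and I would present that.
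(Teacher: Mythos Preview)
Your proposal is correct and takes essentially the same approach as the paper: write $P_{T'}(u,v) = (u=v_0,\ldots,v_q=v)$, observe that each $v_{i-1}$ is an ancestor of $v_i$ in $T'$, apply Lemma~\ref{hierarchy} to conclude the same holds in $T$, and deduce that $P_{T'}(u,v)$ is a sub-path of $P_T(u,v)$. The paper's proof is exactly this three-line argument, so your instinct that the inductive case analysis is unnecessary is correct.
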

\begin{proof}
Write $P_{T'}(u,v) = (u=v_0,v_1,\ldots,v=v_q)$. By Lemma \ref{hierarchy}, for each index $i \in [q]$,
$v_{i-1}$ is an ancestor of $v_i$ in $T$. Hence, $P_{T'}(u,v)$ is a sub-path of $P_T(u,v)$, i.e.,
it is $T$-monotone. \QED
\end{proof}
We conclude that $T'$ is $T$-monotone preserving.
\begin{corollary}
For any pair $u,v$ of required vertices, $P_{T'}(u,v)$ is $T$-monotone.
\end{corollary}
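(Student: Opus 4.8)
The plan is to derive this corollary directly from Lemma~\ref{lca} (preservation of least common ancestors under pruning) together with Lemma~\ref{sub-p} (monotonicity of ancestor--descendant paths in $T'$), by splitting an arbitrary path into two ancestor--descendant pieces at its apex and gluing the results back together.

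First I would fix a pair $u,v$ of required vertices and set $l = LCA_T(u,v)$ and $l' = LCA_{T'}(u,v)$. By Lemma~\ref{lca} we have $l' = l$; in particular $l$ is a vertex of $T'$, which is also visible from Lemma~\ref{description} since $l$ is either a required vertex or a useful one. Then I would write $P_{T'}(u,v)$ as the concatenation of $P_{T'}(l',u)$ and $P_{T'}(l',v)$, the latter traversed from $v$ up to $l'$. This decomposition is legitimate precisely because $l'$, being the LCA of $u$ and $v$ in $T'$, is a common ancestor of $u$ and $v$ in $T'$ and hence lies on the unique $T'$-path between them.

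Next I would apply Lemma~\ref{sub-p} twice: to the ancestor--descendant pair $(l',u)$ it yields that $P_{T'}(l',u)$ is a sub-path of $P_T(l',u)$, and to $(l',v)$ that $P_{T'}(l',v)$ is a sub-path of $P_T(l',v)$. Since $l' = l = LCA_T(u,v)$, the unique path $P_T(u,v)$ in $T$ is itself exactly the concatenation of $P_T(l,u)$ and $P_T(l,v)$, these two pieces meeting only at $l$. Concatenating the two sub-paths obtained above therefore produces a sub-path of $P_T(u,v)$ that passes through $l' = l$; that is, $P_{T'}(u,v)$ is $T$-monotone, which is the claim.

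I do not expect a genuine obstacle here: the statement is essentially a repackaging of the two earlier lemmas. The one point warranting a moment's care is the gluing step --- one must know that the apex of $P_{T'}(u,v)$ is literally the \emph{same} vertex $l$ as the apex of $P_T(u,v)$, so that joining a sub-path of $P_T(l,u)$ to a sub-path of $P_T(l,v)$ stays inside $P_T(u,v)$ rather than ``overshooting'' $l$. That is exactly what Lemma~\ref{lca} supplies, so once it is in hand the corollary is immediate.
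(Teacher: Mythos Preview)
Your proposal is correct and follows essentially the same approach as the paper: decompose $P_{T'}(u,v)$ at the LCA, invoke Lemma~\ref{lca} to identify $l'=l$, and apply Lemma~\ref{sub-p} to each ancestor--descendant piece. The paper separates out the case where one of $u,v$ is an ancestor of the other (handled directly by Lemma~\ref{sub-p}), but your uniform treatment covers this as a degenerate case and is otherwise identical.
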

\begin{proof}
If $u$ is either an ancestor or a descendant of $v$ in $T'$, then the statement follows from Lemma \ref{sub-p}.

We may henceforth assume that $LCA_{T'}(u,v) \ne u,v$. Write $l' = LCA_{T'}(u,v)$ and $l = LCA_{T}(u,v)$.
By Lemma \ref{lca}, $l' = l$. Observe that $P_{T'}(u,v)$ is a concatenation of the two paths
$P_{T'}(u,l)$ and $P_{T'}(l,v)$, i.e., 
$P_{T'}(u,v) = P_{T'}(u,l) \circ P_{T'}(l,v)$.
Similarly, we have $P_T(u,v) = P_T(u,l) \circ P_T(l,v)$.
Lemma \ref{sub-p} implies that both $P_{T'}(u,l)$ and $P_{T'}(l,v)$ are $T$-monotone,
or equivalently, $P_{T'}(u,l)$
is a sub-path of $P_{T}(u,l)$ 
and $P_{T'}(l,v)$ is a sub-path of $P_{T}(l,v)$.
It follows that $P_{T'}(u,v) = P_{T'}(u,l) \circ P_{T'}(l,v)$ is a sub-path of $P_{T}(u,v)
= P_{T}(u,l) \circ P_{T}(l,v)$, i.e.,
$P_{T'}(u,v)$ is $T$-monotone. \QED
\end{proof}

Having proved that $T'$ is a pruned $T$-monotone preserving tree, we now turn to establish a number of basic properties of pruned trees that will be of use in
the sequel.

A Steiner tree in which the number of Steiner vertices is smaller than the number of required vertices is called \emph{compact}.
Note that in any (non-empty) pruned tree $T$, $R \ne \emptyset$ and $S = LCA(T)$.
The next lemma implies that any non-empty pruned tree is compact.
\begin{lemma} \label{compa}
For any Steiner rooted tree $(T,rt)$ (not necessarily pruned), 
$|LCA(T)| \le \max\{0,|R|-1\}$.
\end{lemma}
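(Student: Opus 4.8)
The plan is to prove the bound $|LCA(T)| \le \max\{0, |R|-1\}$ by induction on the size $|T|$ of the Steiner rooted tree, using the recursive decomposition of $T$ into its root $rt$ and the subtrees $T_{(1)}, \ldots, T_{(ch(rt))}$ hanging off the children of $rt$. The base case is when $T$ is empty or a single vertex: then $LCA(T) = \emptyset$ and the inequality holds trivially (if $R = \emptyset$ then $\max\{0,|R|-1\} = 0$; if $|R| = 1$ then again $0 \le 0$).

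For the induction step, the key observation is how $LCA(T)$ relates to the $LCA(T_{(i)})$. Every useful vertex $x \in LCA(T)$ is either (a) strictly inside some subtree $T_{(i)}$, in which case $x \in LCA(T_{(i)})$, or (b) equal to $rt$ itself — and $rt$ is useful precisely when it is the LCA of some pair of required vertices, which happens iff at least two distinct required subtrees exist, i.e.\ $|I| \ge 2$, OR $rt \notin R$ and... wait, more carefully: $rt \in LCA(T)$ requires $rt \notin R$ and $rt = LCA_T(u,v)$ for some required $u,v$, which forces $u,v$ to lie in two different required subtrees, hence $|I| \ge 2$. So $|LCA(T)| \le \sum_{i \in I} |LCA(T_{(i)})| + [\,rt \in LCA(T)\,]$, where the indicator is at most $1$ and is $0$ unless $|I| \ge 2$ and $rt \notin R$.

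Now I would apply the induction hypothesis to each $T_{(i)}$ for $i \in I$: $|LCA(T_{(i)})| \le \max\{0, |R_{(i)}| - 1\} = |R_{(i)}| - 1$, since each required subtree has $|R_{(i)}| \ge 1$. Writing $r = |R| = \sum_{i \in I} |R_{(i)}|$ and $t = |I|$, the bound becomes $|LCA(T)| \le \sum_{i \in I}(|R_{(i)}| - 1) + [\,rt \in LCA(T)\,] = r - t + [\,rt \in LCA(T)\,]$. I then split into cases on the structure of the root. If $rt \in R$, the indicator is $0$, so $|LCA(T)| \le r - t \le r - 1$ (using $t \ge 1$ when $R \ne \emptyset$; if $R = \emptyset$ the bound is $0$ and there is nothing inside either). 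If $rt \notin R$ and $|I| = 1$, the indicator is $0$ and $|LCA(T)| \le r - 1$. If $rt \notin R$ and $|I| \ge 2$, the indicator is at most $1$, so $|LCA(T)| \le r - t + 1 \le r - 1$ since $t \ge 2$. In all cases $|LCA(T)| \le \max\{0, |R| - 1\}$, with the max handling the empty-$R$ case uniformly.

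The main (mild) obstacle is making the case analysis airtight: one must be careful that a useful vertex of $T$ that lies inside $T_{(i)}$ is genuinely a useful vertex \emph{of $T_{(i)}$} — this holds because if $x = LCA_T(u,v)$ with $x \in T_{(i)}$, then both $u$ and $v$ must lie in $T_{(i)}$ (otherwise their LCA would be $rt$ or an ancestor of $x$ on the $rt$-side), so $x = LCA_{T_{(i)}}(u,v)$. The other delicate point is the accounting of the slack: the whole inequality is tight, so one cannot afford to lose even one unit, which is why the case $|I| \ge 2$ must exploit $t \ge 2$ to absorb the $+1$ contributed by $rt$. Everything else is bookkeeping.
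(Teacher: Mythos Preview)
Your proof is correct and follows essentially the same approach as the paper: induction on $|T|$, decomposition into the subtrees $T_{(i)}$ hanging off $rt$, application of the induction hypothesis to each required subtree, and a case split on whether $rt$ is required, redundant, or useful. One small slip: the identity $r = |R| = \sum_{i\in I}|R_{(i)}|$ holds only when $rt\notin R$; when $rt\in R$ you have $|R| = 1 + \sum_{i\in I}|R_{(i)}|$, so your bound in that case is actually $|LCA(T)| \le |R|-1-t \le |R|-1$ for all $t\ge 0$, and the clause ``$t\ge 1$ when $R\ne\emptyset$'' is both unnecessary and false (take $R=\{rt\}$, $I=\emptyset$).
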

\begin{proof}
The proof is by induction on $n = |T|$. The basis $n=0$ is trivial.
\\\emph{Induction Step:} We assume the correctness of the statement for all smaller values of $n$, $n \ge 1$,
and prove it for $n$. 
If $R = \emptyset$, then by definition $LCA(T) = \emptyset$
as well, and we are done. \\We henceforth assume that $R$ is non-empty, and so $\max\{0,|R|-1\} = |R|-1$.
By definition, for each index $i \in I$, $R_{(i)} \ne \emptyset$, and for each index $i \in [ch(rt)] \setminus I$,
$R_{(i)} = \emptyset$.
Hence, by the induction hypothesis, for each index $i \in I$, $|LCA(T_{(i)})| \le  \max\{0,|R_{(i)}|-1\} = |R_{(i)}|-1$,
and for each index $i \in [ch(rt)] \setminus I$, $LCA(T_{(i)}) = \emptyset$. 
Clearly, the sets $\{R_{(i)}\}_{i \in I}$ and $\{LCA(T_{(i)})\}_{i \in I}$ are pairwise disjoint.
The analysis splits into three cases.
\\\emph{Case 1: $rt \in R$.} 
In this case $R = \bigcup_{i\in I} R_{(i)} \cup \{rt\}$ and $LCA(T) = \bigcup_{i \in I} LCA(T_{(i)})$, implying
that $|R| = \sum_{i\in I} |R_{(i)}| + 1$ and $|LCA(T)| = \sum_{i \in I}  |LCA(T_{(i)})|$.
Altogether, \begin{eqnarray*}
|LCA(T)| &=& \sum_{i \in I} |LCA(T_{(i)})| ~\le~ \sum_{i \in I} \left(|R_{(i)}| -1\right)
~=~ \sum_{i \in I} |R_{(i)}| -|I| ~\le~ \sum_{i=1} |R_{(i)}| ~=~ |R|-1.
\end{eqnarray*} 
\\\emph{Case 2: $rt$ is redundant, i.e., $rt \in S \setminus LCA(T)$.}
Since $R \ne \emptyset$ and $rt$ is redundant, it must hold that $|I| = 1$, i.e., $I = \{p\}$,
for some index $p \in [ch(rt)]$.
Hence, $R = R_{(p)}$ and $LCA(T) = LCA(T_{(p)})$, implying that
$|LCA(T)| ~=~ |LCA(T_{(p)})| ~\le~  |R_{(p)}| -1 ~=~ |R|-1.$
\\\emph{Case 3: $rt$ is useful, i.e., $rt \in LCA(T)$.}
In this case there must be at least two different required subtrees $T_{(j)}$ and $T_{(k)}$, $j,k \in I$, and 
 so $|I| \ge 2$.
Observe that $R = \bigcup_{i \in I} R_{(i)}$ and $LCA(T) = \bigcup_{i \in I} LCA(T_{(i)}) \cup \{rt\}$, implying
that $|R| = \sum_{i \in I} |R_{(i)}|$ and $|LCA(T)| = \sum_{i \in I} |LCA(T_{(i)})|+1$. 
It follows that
\begin{eqnarray*}
|LCA(T)| &=& \sum_{i \in I} |LCA(T_{(i)})| + 1 ~\le~ \sum_{i \in I} \left(|R_{(i)}| -1\right) + 1
~=~ \sum_{i\in I} |R_{(i)}| -|I| +1 \\ &\le& \sum_{i \in I} |R_{(i)}| -1 ~=~ |R|-1.
\end{eqnarray*}
\QED
\end{proof}

\begin{lemma} \label{hop}
For a non-empty pruned tree $(T,rt)$, its depth $h(T)$ is at most $|R|-1$ and its diameter $\Lambda(T)$ is at most $|R|$. 
Moreover, $\Lambda(T)$ is equal to $|R|$ only if the following conditions
hold: 
(1) $rt \nin R$,  
~(2) $rt$ has exactly two children, and 
~(3) For any pair $u,v$ of vertices in $T$ for which $|P_T(u,v)| = |R|$, 
$u,v \ne rt = LCA_T(u,v)$.
\begin{proof}
The proof is by induction on $n = |T|$. The basis $n=1$ is trivial.
\\\emph{Induction Step:} We assume the correctness of the statement for all smaller values of $n$, $n \ge 2$, and prove it
for $n$. 
Since $T$ is pruned, 
all the subtrees $T_{(1)},\ldots,T_{(ch(rt))}$ of $T$ are pruned as well, and so 
the induction hypothesis applies to every one of them.

Fix an arbitrary index $i \in [ch(rt)]$.
Since $T_{(i)}$ is pruned, we have $|R_{(i)}| \ge 1$. We argue that $|R_{(i)}| \le |R|-1$.
This is clearly the case if $rt \in R$. Otherwise, $rt$ must be useful, and so 
it must have at least two children in $T$.
Hence, there
is another index $j \in [ch(rt)]$, such that $|R_{(j)}| \ge 1$. Since $R_{(i)} \cup R_{(j)} \subseteq R$,
we get that $|R_{(i)}| \le |R| - |R_{(j)}| \le |R|-1$.

By the induction hypothesis, for each index $i \in [ch(rt)]$, $h(T_{(i)}) \le |R_{(i)}|-1 \le |R|-2$.
Hence, $$h(T) ~=~ \max\{h(T_{(i)}) ~\vert~ i \in [ch(rt)]\} + 1 ~\le~ |R|-2 + 1 = |R|-1.$$

To bound the diameter $\Lambda(T)$ of $T$, consider a pair $u,v$ of vertices in $T$
for which $|P_T(u,v)| = \Lambda(T)$. 
If $u$ and $v$ belong to the same subtree $T_{(i)}$ of $T$,
for some index $i \in [ch(rt)]$, then $\Lambda(T) = \Lambda(T_{(i)})$, and by the induction hypothesis
for $T_{(i)}$, we get that $\Lambda(T) = \Lambda(T_{(i)}) \le |R_{(i)}| \le |R|-1$.
Otherwise, $rt = LCA_T(u,v)$. If either $u$ or $v$ is the root vertex $rt$, then
$\Lambda(T) \le h(T) \le |R|-1$. 
\\So far we have proved that in order to obtain $\Lambda(T) \ge |R|$, it must hold that
$u,v \ne rt = LCA_T(u,v)$.
We may henceforth assume that $u,v \ne rt = LCA_T(u,v)$. In other words,
$u$ and $v$ belong to different subtrees $T_{(i)}$ and $T_{(j)}$
of $T$, respectively, for some indices $i,j \in [ch(rt)]$. Observe that $|P_T(u,v)| \le h(T_{(i)}) + h(T_{(i)}) + 2$.
By the induction hypothesis for $T_{(i)}$ and $T_{(j)}$,
$h(T_{(i)}) \le |R_{(i)}| -1$ and $h(T_{(j)}) \le |R_{(j)}| -1$, and so $|P_T(u,v)| \le |R_{(i)}| + |R_{(j)}|$.
It follows that $\Lambda(T) = |P_T(u,v)| \le |R_{(i)}| + |R_{(j)}| \le |R|$.
Moreover, one can have $\Lambda(T) = |R|$ only 
if $|R_{(i)}| + |R_{(j)}| = |R|$, in which case both
$rt \nin R$ and $ch(rt) = 2$ must hold.
\QED
\end{proof}
\end{lemma}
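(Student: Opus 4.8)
The plan is to prove both the depth bound and the diameter bound (including the ``moreover'' clause) simultaneously by induction on $n=|T|$. The base case $n=1$ is immediate: a non-empty pruned tree on a single vertex must have that vertex required (a lone Steiner vertex would be redundant), so $|R|=1$, $h(T)=0=|R|-1$, $\Lambda(T)=0<|R|$, and the ``moreover'' clause holds vacuously. For the inductive step ($n\ge 2$) I would first record two structural observations. First, every child-subtree $T_{(i)}$ of the pruned tree $T$ is again pruned: a vertex that is useful in $T$, being $LCA_T(u,v)$ for some required pair, has $u,v$ as descendants and hence remains the LCA of the same pair inside whichever $T_{(i)}$ contains it; thus each $T_{(i)}$ is non-empty with $|R_{(i)}|\ge 1$ and the induction hypothesis applies to it. Second, I would establish the counting inequality $|R_{(i)}|\le |R|-1$ for every $i\in[ch(rt)]$: if $rt\in R$ this is clear since $R_{(i)}\subseteq R\setminus\{rt\}$; and if $rt\notin R$ then $rt$ is useful (it cannot be redundant in a pruned tree), so it is $LCA_T$ of two required vertices lying in two \emph{distinct} child-subtrees, which produces an index $j\ne i$ with $|R_{(j)}|\ge 1$, whence $|R_{(i)}|\le |R|-|R_{(j)}|\le |R|-1$ by disjointness of $R_{(i)},R_{(j)}\subseteq R$.

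The depth bound then follows at once: $h(T)=1+\max_i h(T_{(i)})\le 1+\max_i(|R_{(i)}|-1)\le |R|-1$. For the diameter I would fix a pair $u,v$ realizing $|P_T(u,v)|=\Lambda(T)$ and split into three exhaustive, mutually exclusive cases according to $\ell:=LCA_T(u,v)$. (A) If $u,v$ lie in a common $T_{(i)}$, the path stays inside it, so $\Lambda(T)=|P_{T_{(i)}}(u,v)|\le \Lambda(T_{(i)})\le |R_{(i)}|\le |R|-1$. (B) If $\ell=rt$ and $rt\in\{u,v\}$, then $\Lambda(T)\le h(T)\le |R|-1$. (C) Otherwise $\ell=rt$ with $u\in T_{(i)}$, $v\in T_{(j)}$ for distinct $i,j$, so $|P_T(u,v)|\le h(T_{(i)})+h(T_{(j)})+2\le |R_{(i)}|+|R_{(j)}|\le |R|$. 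Taking the maximum over cases gives $\Lambda(T)\le |R|$.

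The ``moreover'' part is then squeezed out of the same case analysis. Suppose $\Lambda(T)=|R|$. Cases (A) and (B) each force $\Lambda(T)\le |R|-1$, so \emph{every} pair $u,v$ with $|P_T(u,v)|=|R|$ is of type (C): it has $\ell=rt$ with $u,v$ in different child-subtrees, hence $u,v\ne rt=LCA_T(u,v)$, which is condition (3). For such a pair the chain $|P_T(u,v)|\le |R_{(i)}|+|R_{(j)}|\le |R|$ must collapse to equalities, so $R=R_{(i)}\sqcup R_{(j)}$; since $rt\in R$ would give $|R|\ge |R_{(i)}|+|R_{(j)}|+1$, we get $rt\notin R$ (condition (1)), and since every child-subtree of the pruned tree $T$ contributes a non-empty block of $R$ and these blocks are pairwise disjoint, a third child is impossible, so $ch(rt)=2$ (condition (2)). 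I do not anticipate a genuine obstacle here; the only two points requiring a little care are verifying that subtrees of a pruned tree are pruned — so that the induction hypothesis is legitimately available — and making sure the ``only if'' conclusions are drawn for \emph{all} length-$|R|$ pairs rather than a single diameter-maximizing one, both of which are handled by the case analysis above.
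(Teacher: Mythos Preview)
Your proposal is correct and follows essentially the same approach as the paper's proof: induction on $|T|$, the same counting inequality $|R_{(i)}|\le|R|-1$ via the two cases $rt\in R$ versus $rt$ useful, the identical derivation of the depth bound, and the same three-case analysis for the diameter. Your treatment of the ``moreover'' clause is in fact slightly more careful than the paper's (you explicitly argue that conditions (1)--(3) follow for \emph{every} length-$|R|$ pair, and you spell out why child-subtrees of a pruned tree are pruned), but the underlying argument is the same.
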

\begin{corollary} \label{corhop}
Let $(T,rt)$ be a pruned tree, such that $rt$ has exactly two children $c_1$ and $c_2$,
and let $\tilde T$ be the graph obtained from $T$ by adding to it the edge $(c_1,c_2)$.
Then the $T$-monotone diameter $\Lambda(\tilde T)$ of $\tilde T$ is at most $|R|-1$.
\end{corollary}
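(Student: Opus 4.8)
The plan is to prove the bound directly: I would show that for every pair $u,v$ of vertices of $\tilde T$ there is a $T$-monotone path in $\tilde T$ joining them that uses at most $|R|-1$ edges. Write $c_1,c_2$ for the two children of $rt$, let $T_{(1)},T_{(2)}$ be the corresponding subtrees, and set $R_i = R\cap V(T_{(i)})$. Since $T$ is pruned, so are $T_{(1)}$ and $T_{(2)}$, and both are non-empty; hence Lemma~\ref{hop} applies to $T$, to $T_{(1)}$ and to $T_{(2)}$, giving $h(T)\le |R|-1$, $h(T_{(i)})\le |R_i|-1$ and $\Lambda(T_{(i)})\le |R_i|$. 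Because $R_1$ and $R_2$ are disjoint non-empty subsets of $R$ we have $|R_1|+|R_2|\le |R|$, and in particular $|R_i|\le |R|-1$ for $i=1,2$.

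The key observation is that the added edge $(c_1,c_2)$ lets us bypass $rt$ whenever $rt$ lies strictly inside the tree path. Concretely, fix $u,v$ and consider $P_T(u,v)$. If $rt$ is an interior vertex of $P_T(u,v)$, then since the only edges of $T$ incident to $rt$ are $(c_1,rt)$ and $(rt,c_2)$, the path must traverse the segment $c_1,rt,c_2$; replacing this length-two segment by the single edge $(c_1,c_2)\in E(\tilde T)$ produces a subsequence of $P_T(u,v)$, i.e., a $T$-monotone path in $\tilde T$ that does not pass through $rt$. Thus every pair $u,v$ is joined in $\tilde T$ by a $T$-monotone path $Q$ that either has an endpoint equal to $rt$, or avoids $rt$ altogether.

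It then remains to bound $|Q|$ by $|R|-1$ in each situation. If one of $u,v$ equals $rt$, take $Q=P_T(u,v)$, whose length is at most $h(T)\le |R|-1$. Otherwise $Q$ avoids $rt$, so $u$ and $v$ lie in $T_{(1)}\cup T_{(2)}$ and, being simple, $Q$ stays inside one subtree or crosses once via $(c_1,c_2)$. If $u,v$ lie in the same subtree $T_{(i)}$, then the extra edge is irrelevant and the $T$-monotone distance equals $|P_{T_{(i)}}(u,v)|\le \Lambda(T_{(i)})\le |R_i|\le |R|-1$. If $u\in T_{(1)}$ and $v\in T_{(2)}$ (say), the path $P_{T_{(1)}}(u,c_1)\circ (c_1,c_2)\circ P_{T_{(2)}}(c_2,v)$ is $T$-monotone in $\tilde T$ and has length at most $h(T_{(1)})+1+h(T_{(2)})\le (|R_1|-1)+1+(|R_2|-1)=|R_1|+|R_2|-1\le |R|-1$. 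Maximizing over all pairs yields $\Lambda(\tilde T)\le |R|-1$.

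I do not expect a real obstacle here; the argument is essentially bookkeeping layered on top of Lemma~\ref{hop}. The one point needing care is verifying that the ``shortcut'' path obtained by replacing $c_1,rt,c_2$ with the edge $(c_1,c_2)$ is genuinely $T$-monotone, i.e.\ literally a subsequence of $P_T(u,v)$; this is immediate once one notes that in the crossing case $rt=LCA_T(u,v)$ and $c_1$ (resp.\ $c_2$) is the unique neighbour of $rt$ on the $u$-side (resp.\ $v$-side) of $P_T(u,v)$. The degenerate instances ($u=v$, or $\{u,v\}=\{c_1,c_2\}$, or a subtree of depth $0$) are absorbed by the same bounds, using $|R|\ge 2$.
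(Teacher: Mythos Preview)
Your proof is correct and takes essentially the same approach as the paper: case-analyse on the position of $u,v$ relative to $rt$ and, when $rt$ is interior to $P_T(u,v)$, shortcut via the edge $(c_1,c_2)$, bounding all cases with the depth and diameter estimates of Lemma~\ref{hop}. The only cosmetic difference is that the paper invokes the ``moreover'' clause of Lemma~\ref{hop} to dispose of every pair with $|P_T(u,v)|\le|R|-1$ in one stroke before applying the shortcut, whereas you do a direct three-way split and need only the basic bounds $h(T)\le|R|-1$, $h(T_{(i)})\le|R_i|-1$, and $\Lambda(T_{(i)})\le|R_i|$.
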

\begin{proof}
Consider a pair $u,v$ of vertices in $\tilde T$
for which their $T$-monotone distance $\delta$
satisfies $\Lambda(\tilde T) = \delta$. 
Since $\tilde T$ contains all edges of $T$, we have $\delta \le |P_{T}(u,v)|$.
If $|P_{T}(u,v)| \le |R|-1$, then we are done. Otherwise,
by Lemma \ref{hop}, $|P_{T}(u,v)| = |R|$ and $u,v \ne rt = LCA_T(u,v)$. 
Hence, either $u$ belongs to $T_{(1)}$  and $v$ belongs to $T_{(2)}$, or vice versa.
Suppose without loss of generality that $u$ belongs to $T_{(1)}$ and $v$ belongs to 
$T_{(2)}$, and write $P_T(u,v) = (u = v_0,v_1,\ldots,c_1 = v_{j-1},rt = v_j,c_2 = v_{j+1},v_{j+2},\ldots,v=v_{|R|})$.
Notice that $\tilde T$ contains all edges of $P_T(u,v)$, and, in addition, the edge $(c_1,c_2)$,
which can be used as a shortcut to avoid the detour $(c_1,rt,c_2)$ around $rt$.
Hence, $\tilde T$ contains the $T$-monotone path
$\tilde P =  (u = v_0,v_1,\ldots,c_1 = v_{j-1},c_2 = v_{j+1},v_{j+2},\ldots,v=v_{|R|})$
that consists of $|R|-1$ edges, and so $\Lambda(\tilde T) = \delta \le |\tilde P| = |R|-1$. \QED
\end{proof}

\subsection{Tree Decomposition Procedure} \label{decompose}
In this section we devise a procedure $Decomp$ for decomposing a Steiner tree into subtrees
in an optimal way.

Let $n$ be an arbitrary positive integer. 
The procedure $Decomp = Decomp((T,rt),\ell)$ accepts as input a Steiner rooted tree $(T,rt)$
with required-size  $n$ 
and a positive integer $\ell$, and returns as output a  set $CV_\ell \subseteq V(T)$ 
of \emph{cut vertices}.
We do not require that 
a cut vertex would belong to $R = R(T)$.

For each vertex $v$ in $T$ we hold a variable $size(v)$.
Also, we initialize the set $CV_\ell$ to $\emptyset$.
The procedure visits the vertices of $T$ in a post-order manner, so that a
vertex $v$ is visited only after all its children have been visited.
For each visited vertex $v$, the procedure 
assigns $size(v) = 1 + \sum_{u \in Ch(v)} size(u)$ if $v \in R$,
and $size(v) = \sum_{u \in Ch(v)} size(u)$ otherwise,
 where $Ch(v)$ denotes the set of children of $v$ in (the current tree) $T$. 
(If $v$ is a leaf, then $Ch(v) = NULL$, and so $size(v) = 1$ if $v \in R$, and $size(v) = 0$
otherwise.)
Also, if $size(v)
> \ell$, the procedure designates $v$ as a cut vertex by inserting it to $CV_\ell$,
and then removes the subtree $T_v$ of $T$ rooted at $v$ from 
$T$. (See Figure \ref{f:decom} for an illustration.)

\begin{figure*}[htp]
\begin{center}
\begin{minipage}{\textwidth} 
\begin{center}
\setlength{\epsfxsize}{4.6in} \epsfbox{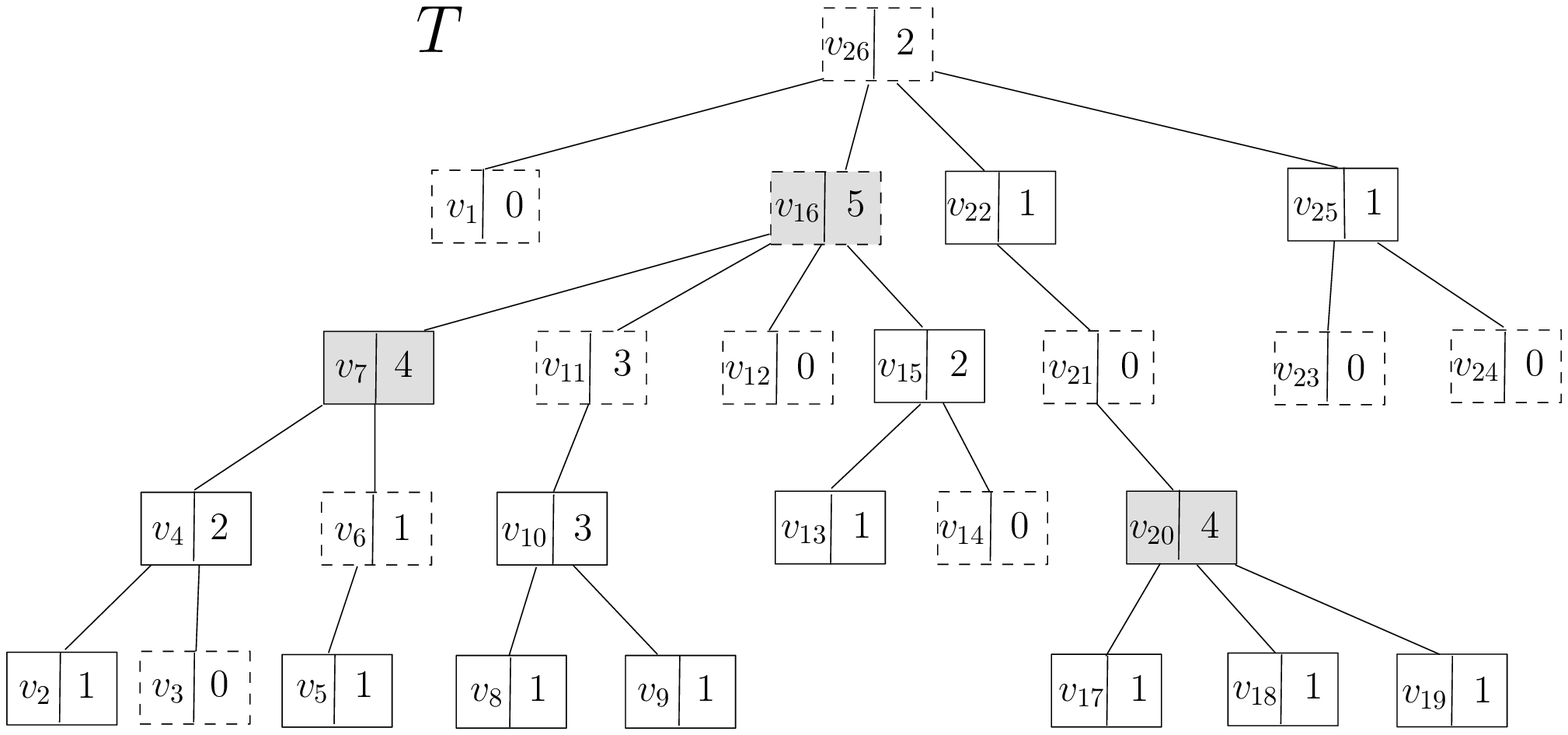}
\end{center}
\end{minipage}
\caption[]{ \label{f:decom} \sf \footnotesize An illustration of a rooted Steiner tree $(T,rt)$ over the vertices $v_1,v_2,\ldots,v_{26}$,
where $rt = v_{26}$, and for each $i \in [26]$, $v_i$ is the $i$th visited vertex in a (left-to-right) post-order traversal.
Each vertex $v_i$ in the tree is represented as a two-cell rectangle, with the left cell holding its name $v_i$,
and the right one holding the value of $size(v_i)$.
The 15 required vertices of the tree are depicted within solid lines, whereas the 11 Steiner vertices are depicted
within dashed lines. The three vertices whose bounding rectangles are filled, i.e., $v_7,v_{16}$ and $v_{20}$, comprise the set 
$CV_3$ of cut-vertices that is returned as output of the call $Decomp((T,rt),\ell=3)$.}
\end{center}
\end{figure*}

First, notice that the running time of the procedure $Decomp$ is linear in the number of vertices in $T$. 
In particular, if $T$ is pruned, then the running time of this procedure is $O(n)$.

We proceed by making the following observation.
\begin{observation} \label{ob:si}
At
the end of the execution of the procedure $Decomp$, for any subtree $\tau \in T \setminus CV_\ell$ and any vertex $w \in \tau$,
 $size(w)$ holds the required-size of the subtree $\tau_w$ of $\tau$ rooted at $w$, i.e., 
$size(w) = |R(\tau_w)|$. 
\end{observation}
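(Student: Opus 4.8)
\textbf{Proof proposal for Observation \ref{ob:si}.}

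The plan is to argue by induction on the structure of the tree, processing vertices in the post-order in which \textit{Decomp} visits them, but stated as a claim about the \textit{final} state of every surviving subtree. The key observation is that when a vertex $v$ with $size(v) > \ell$ is removed, the entire subtree $T_v$ rooted at $v$ (with respect to the current tree at that moment) is deleted, so $v$ and all its current descendants leave $T$ together; consequently the forest $T \setminus CV_\ell$ decomposes along exactly these cut-points. First I would fix a subtree $\tau \in T \setminus CV_\ell$ and a vertex $w \in \tau$, and note that $w$ survived to the end, which means that at the moment \textit{Decomp} visited $w$ it did not get $size(w) > \ell$, and moreover every descendant of $w$ in $\tau$ (equivalently, every vertex of $\tau_w$) also survived. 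Crucially, the set $Ch(w)$ \textit{at the time $w$ is visited} is precisely the set of children of $w$ in the final surviving subtree $\tau$: any child $u$ of $w$ that was designated a cut vertex would have been removed together with its subtree \textit{before} $w$ is visited (post-order), so such a $u$ contributes nothing to the sum computed for $w$ and is not in $\tau$ either; conversely, any child of $w$ that survives is present both in $Ch(w)$ at visit-time and in $\tau$.

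With that alignment in hand, the induction is straightforward. The base case is a leaf $w$ of $\tau_w$: then $Ch(w) = NULL$ at visit-time, so $size(w) = 1$ if $w \in R$ and $size(w) = 0$ otherwise, which is exactly $|R(\tau_w)|$ since $\tau_w = \{w\}$. For the inductive step, let $w$ be an internal vertex of $\tau_w$ with children $u_1, \ldots, u_m$ in $\tau$. By the induction hypothesis applied to each $u_j$ (which is a smaller surviving subtree $\tau_{u_j}$ with root $u_j$ inside the same surviving subtree), the value $size(u_j)$ recorded at the end equals $|R(\tau_{u_j})|$; and since $size(u_j)$ is never modified after $u_j$ is visited, this is also its value when $w$ is visited. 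The tree $\tau_w$ is the disjoint union of $\{w\}$ and the $\tau_{u_j}$'s, so
\[
|R(\tau_w)| = [w \in R] + \sum_{j=1}^m |R(\tau_{u_j})| = [w \in R] + \sum_{u \in Ch(w)} size(u),
\]
which is exactly the value assigned to $size(w)$ by the procedure. Hence $size(w) = |R(\tau_w)|$ for every $w \in \tau$, as claimed.

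The main subtlety — what I expect to be the only point requiring care — is the identification of $Ch(w)$ at visit-time with the children of $w$ in the final forest $T \setminus CV_\ell$. This rests on two facts about the order of operations: (i) \textit{Decomp} is a post-order traversal, so all children of $w$ are visited and fully processed before $w$ is, and a removal of a cut vertex $u$ triggered while processing $u$ (or one of its descendants) happens strictly before $w$ is touched; and (ii) removing a cut vertex removes its \textit{entire current subtree}, so once a descendant of $w$ is cut, $w$ never "sees" it again and it is absent from $\tau$. One should also check that $w$ itself is not a cut vertex (it survives, being in $\tau \subseteq T \setminus CV_\ell$), so its own $size$ computation is the final word. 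With these points made explicit, the induction goes through with no further obstacles.
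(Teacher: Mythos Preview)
Your proposal is correct. The paper does not actually supply a proof of this statement---it is presented as a bare observation and the text moves directly to Lemma~\ref{cen}---so there is nothing to compare against; your inductive argument, hinging on the identification of $Ch(w)$ at visit-time with the children of $w$ in the surviving subtree $\tau$, is precisely the natural justification one would write if asked to spell it out.
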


Next, we obtain upper bounds on the maximum required-size of a subtree in $T \setminus CV_\ell$
and the size of the set $CV_\ell$ of cut vertices that is returned  by the procedure $Decomp$. 
\begin{lemma} \label{cen}
(1) The required-size $|R(\tau)|$ of any subtree $\tau \in T \setminus CV_\ell$ is at most $\ell$.
(2) $|CV_\ell| \le \lfloor \frac{n}{\ell+1} \rfloor$. 
\end{lemma}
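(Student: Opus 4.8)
The plan is to prove the two parts separately, both by analyzing what the procedure $Decomp$ does as it visits vertices in post-order.

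For part (1), I would argue as follows. Fix a subtree $\tau \in T \setminus CV_\ell$ and an arbitrary vertex $w \in \tau$; I want to show $|R(\tau)| \le \ell$, and it suffices to consider $w$ the root of $\tau$. By Observation \ref{ob:si}, at the end of the execution $size(w) = |R(\tau_w)|$, so the root $r_\tau$ of $\tau$ satisfies $size(r_\tau) = |R(\tau)|$. Now I would recall that $r_\tau$ was never designated as a cut vertex (otherwise it would have been removed together with its subtree, contradicting $r_\tau \in \tau \in T \setminus CV_\ell$). At the moment $r_\tau$ was visited, its $size$ value was computed and found to be $\le \ell$ (else it would have been inserted into $CV_\ell$). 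The only subtlety is that the $size$ value of $r_\tau$ computed during the traversal might differ from its final value, because some descendant might have been removed as a cut vertex before $r_\tau$ was visited. But removing a descendant $v$ (a cut vertex) only \emph{decreases} the contribution passed up to $r_\tau$; more precisely, the value $size(r_\tau)$ computed at visit time already reflects all removals of cut vertices strictly below $r_\tau$, and these removals can only have lowered it. So $|R(\tau)| = size(r_\tau)$ (final) $\le size(r_\tau)$ (at visit time) $\le \ell$. I should state this monotonicity carefully, perhaps as a short intermediate claim: when a vertex $v$ is visited, the computed $size(v)$ equals $|R(\tau_v)|$ where $\tau_v$ is the connected component of $v$ in the current tree, and this never exceeds $|R(T_v)|$; moreover after $v$ is processed it is a fixed quantity.

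For part (2), I would use a charging/counting argument: each cut vertex $v$ inserted into $CV_\ell$ has $size(v) > \ell$ at the time of insertion, i.e., $size(v) \ge \ell+1$, and $size(v)$ counts $|R|$-vertices in the component rooted at $v$ \emph{at that time}. When $v$ is designated a cut vertex, its whole subtree $T_v$ (in the current tree) is removed; hence the sets of required vertices counted by the successive cut vertices are pairwise disjoint — each required vertex is "consumed" by at most one cut vertex, namely the first (lowest, in the post-order sense) cut vertex above it that is processed. Since there are exactly $n = |R(T)|$ required vertices in total and each cut vertex accounts for at least $\ell+1$ of them, we get $|CV_\ell| \cdot (\ell+1) \le n$, i.e., $|CV_\ell| \le \lfloor n/(\ell+1) \rfloor$. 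The disjointness is the heart of this part: I would justify it by noting that once a vertex is removed it is never visited again, so the "current tree" components processed at successive cut-vertex designations are vertex-disjoint (each removal deletes a whole current subtree), and the required vertices counted toward $size(v)$ at the designation of $v$ all lie in the component that is then deleted.

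The main obstacle I anticipate is being precise about the distinction between "$size(v)$ at the time $v$ is visited" versus "$size(v)$ at the end," and the fact that the tree $T$ is being mutated during the traversal — so that $\tau_v$, $T_v$, "the current tree," and "children of $v$ in the current tree" all need to be pinned down unambiguously. Once the invariant "at the moment $v$ is processed, $size(v) = |R(C_v)|$ where $C_v$ is the connected component containing $v$ in the tree as it stands then, and $C_v$ is precisely the set of vertices still present that are descendants of $v$ in the original $T$" is established (essentially a restatement/strengthening of Observation \ref{ob:si}), both parts follow quickly: part (1) from "$r_\tau$ was not designated, so $size(r_\tau) \le \ell$ at visit time, and nothing below it is removed afterward", and part (2) from the disjointness of the $C_v$'s over cut vertices $v$ together with $|R(C_v)| \ge \ell+1$.
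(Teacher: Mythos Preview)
Your proposal is correct and follows essentially the same approach as the paper: for part (1) you use Observation~\ref{ob:si} on the root $r_\tau$ of $\tau$ together with the fact that $r_\tau$ was not designated a cut vertex, and for part (2) you use the disjointness of the removed subtrees plus $size(v) \ge \ell+1$ at designation time, which the paper phrases equivalently as ``the required-size of $T$ decreases by at least $\ell+1$ with each cut vertex.'' Your extra care about ``$size$ at visit time'' versus ``at the end'' is harmless but unnecessary: since $size(r_\tau)$ is assigned exactly once (at visit time) and never updated, Observation~\ref{ob:si} already guarantees these coincide, so your inequality $|R(\tau)| \le size(r_\tau)$ is in fact an equality.
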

\begin{proof}
(1) Consider an arbitrary subtree $\tau \in T \setminus CV_\ell$, and let $x$ be the root vertex of $\tau$. By the description of the procedure and Observation \ref{ob:si}, we have $|R(\tau)| = |R(\tau_x)| 
= size(x) \le \ell$, as otherwise $x$ would have been designated as a cut vertex.
\\(2) Immediately after a cut vertex $v$ is inserted into $CV_\ell$, the procedure removes the subtree $T_v$
of $T$ rooted at $v$ from $T$, and so the required-size
of the tree $T$ is being decreased by $|R(T_v)|$ units. Define $\chi(v) = 1$ if $v \in R$,
and $\chi(v)= 0$ otherwise. By
the description of the procedure and Observation \ref{ob:si}, just before the removal of $T_v$ from $T$ we have 
$$|R(T_v)| ~=~ \chi(v) + \sum_{u \in Ch(v)}|R(T_u)| ~=~ \chi(v) + \sum_{u \in Ch(v)}size(u) 
~=~ size(v) ~>~ \ell,$$
implying that  the required-size of $T$ is being decreased by at least $\ell+1$
units. Hence, after $i$ cut vertices have been inserted into $CV_\ell$,
the required size of $T$ is at most $n - i (\ell+1)$.
Also, from the moment the required-size of $T$ becomes at most $\ell$,
the set $CV_\ell$ remains intact, and we are done. \QED
\end{proof}
{\bf Remark:} The tradeoff $\ell$ versus $\lfloor \frac{n}{\ell+1} \rfloor$ between the 
required-size of a subtree in $T \setminus CV_\ell$ and the size of the set $CV_\ell$ 
of cut-vertices is tight, and is
realized when $T$ is the unweighted path graph $P_n$. 


\subsection{Sparse 1-Spanners for Tree Metrics with Bounded Diameter}
\label{s:sol}
In this section we present an optimal 
time construction of sparse 1-spanners for Steiner tree metrics with bounded diameter.
Our spanners achieve a tight tradeoff between the diameter and number of edges.

Let $(T,rt)$ be a Steiner rooted tree. Notice that $T$ can be transformed in linear time into
a pruned $T$-monotone preserving tree $(T',rt')$ by invoking the procedure $Prune$ described in Section \ref{s:prune} on $T$.
Also, any 1-spanner for $T'$ provides a 1-spanner for the original tree $T$ with the same diameter.
We may henceforth assume that the original tree $T$ is pruned, i.e., $T = T'$. We also assume that for each vertex $v$ in $T$, it can be decided in constant time whether it is black or white, i.e., whether $v \in R(T)$ or $v \in S(T)$.

Next, we describe an algorithm $Tree1Spanner = Tree1Spanner((T,rt),n,k)$ that accepts as input a pruned tree $(T,rt)$, an integer $n \ge 0$ that designates the required-size of $T$, and an integer
$k \ge 2$, and returns
as output a 1-spanner for $T$.

If $0 \le n \le k$, return the edge set $E(T)$ of $T$.

If $n = k+1$, check whether $rt$ has exactly two children.
  If this is the case, return $E(T) \cup \{(c_1,c_2)\}$, where $c_1$ and $c_2$ designate the two children of $rt$.
  Otherwise, return $E(T)$.
  
We henceforth assume that $n \ge k+2$. The execution of the algorithm splits into six steps.

At the first step, set $\ell = \alpha'_{k-2}(n)$, and compute the set $CV_\ell$ of cut vertices
of $T$ by 
making the call $Decomp((T,rt),\ell)$.

At the second step, compute the edge set $E'$ that connect the cut vertices.
\\If $k=2$, set $E' = \emptyset$.
If $k=3$, set $E'$ as the edge set of the complete graph over $CV_\ell$.
\\For $k \ge 4$, proceed in the following way. First, compute a copy $\tau$ of $T$. Second, go over all the vertices of $\tau$ and 
color the vertices of 
  $CV_\ell$ in black, and the remaining vertices in white. 
  (Thus $R(\tau) = CV_\ell$, and $S(\tau) = V(T) \setminus CV_\ell$.)
  Third, compute the pruning $\tau'$ of $\tau$ by making the call $Prune((\tau,rt))$.
  Fourth, set $E'$ as the edge set returned by the recursive call $Tree1Spanner((\tau',rt(\tau')),|CV_\ell|,k-2)$. 
 
 At the third step, compute the subtrees $T_1,\ldots,T_g$ 
 in $T \setminus CV_\ell$.
 
 At the fourth step, compute the edge set $E''$ that connects the cut vertices of $CV_\ell$
 with the corresponding
 subtrees.
Specifically, the set of all cut vertices $u \in CV_\ell$ that are connected by an edge of $T$
to some vertex of $T_i$ is called the \emph{border} of $T_i$, for each $i \in [g]$.
The vertex $u$ is called a \emph{border vertex} of $T_i$. 
Compute the edge set $E''=\{(u,v) ~\vert~ v \in R(T) \setminus CV_\ell, u \in CV_\ell, u \mbox{ is a border vertex
of the subtree to which } v \mbox{ belongs}\}.$
\\(See Figure \ref{border} for an illustration.)

\begin{figure*}[htp]
\begin{center}
\begin{minipage}{\textwidth} 
\begin{center}
\setlength{\epsfxsize}{3.6in} \epsfbox{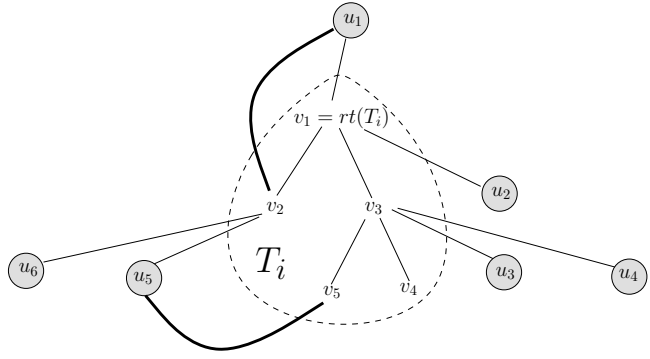}
\end{center}
\end{minipage}
\caption[]{ \label{border} \sf \footnotesize 
An illustration of a subtree $T_i \in T \setminus CV_\ell$ that contains the five vertices $rt = v_1,v_2,\ldots,v_5$,
with $v_3$ being the only Steiner vertex in $T_i$. The border of $T_i$
is comprised of the six vertices $u_1,u_2,\ldots,u_6$ that are depicted within filled circles.
Each required vertex $v_j$ in $T_i$, $j \in [5], j \ne 3$, is incident on the single upstream edge $(u_1,v_j)$,
and on the five downstream edges $(u_2,v_j),(u_3,v_j),(u_4,v_j),(u_5,v_j)$ and $(u_6,v_j)$. The upstream edge $(u_1,v_2)$ and the downstream edge $(u_5,v_5)$ are depicted by 
bold lines.}
\end{center}
\end{figure*}

At the fifth step we would like to proceed recursively for each of the subtrees $T_1,\ldots,T_g$.
To this end, first
compute the pruning $T'_i$ of the subtree $T_i$, for each $i \in [g]$, by making the call $Prune((T_i,rt(T_i))$. 
Then, set $E_i$ to be the edge set that is returned by the recursive call $Tree1Spanner((T'_i,rt(T'_i)),|R_i|,k)$,
for each $i \in [g]$, 
where $R_i = R(T_i)$.

Finally, at the sixth step, return the edge set 
$E = E' \cup E'' \cup \bigcup_{i=1}^g E_i$.

\ignore{
\begin{algorithm}[H]
\caption{$Tree1Spanner((T,rt),k)$:}
\label{tree1span}
\begin{algorithmic}
  \REQUIRE $T$ is pruned, $k \ge 2$. 
  \STATE Let $n = |R(T)|$.
  \STATE If $1 \le n \le k$, return the edge set $E(T)$ of $T$.  
  \STATE If $n = k+1$, check whether $rt$ has exactly two children.
  If this is the case, return $E(T) \cup \{c_1,c_2\}$, where $c_1$ and $c_2$ designate these children.
  Otherwise, return $E(T)$.
  \STATE Assume that $n \ge k+2$.
  \STATE {\bf Step 1:} Let $\ell = \alpha'_{k-2}(n)$. Compute the set $CV_\ell$ of cut vertices by calling
  the procedure $~~~~Decomp((T,rt),\ell)$.
  \STATE {\bf Step 2:} 
  Compute a copy $\tau$ of $T$. Go over all the vertices of $\tau$ and designate the vertices of 
  $CV_\ell$ \STATE ~~~~~as the required vertices of $\tau$, and the other vertices as the Steiner vertices. 
  \{Thus $R(\tau) = CV_\ell$,
  \STATE ~~~~and $S(T) = V(T) \setminus CV_\ell$.\}
  Compute the pruning $\tau'$ of $\tau$ by making the call $Prune((\tau,rt))$.
  \STATE {\bf Step 3:} If $k=2$, set $E' = \emptyset$.
  If $k=3$, set $E'$ as the edge set of the complete graph over $CV_\ell$.
  \STATE ~~~~For $k \ge 4$, let $E'$ be edge set returned by the recursive call $Tree1Spanner((\tau',rt),k-2)$. 
  \STATE {\bf Step 4:} Compute the subtrees $T_1,\ldots,T_g$ obtained by removing the cut vertices of $CV_\ell$ from $T$.
  \STATE ~~~For each $i \in [g]$, compute the pruning $T'_i$ of $T_i$ by making the call $Prune((T_i,rt(T_i))$. \{Thus 
  \STATE ~~~$T'_i$ is pruned, with $R(T'_i) = R(T) \cap V(T'_i), S(T'_i) = S(T) \cap V(T'_i).$\}
  ~For each $i \in [g]$, let $E_i$ \STATE~~~be the edge set that is returned by the recursive call $Tree1Spanner((T'_i,rt(T'_i)),k)$.
  \STATE {\bf Step 5:} Compute the edge set $E'' = \{(u,v) ~\vert~ v \in R(T) \setminus CV_\ell, u \in CV_\ell, u$  
  is a border vertex of 
  \STATE ~~~~~~~~~~~~~~~~~~~~~~~~~~~~~~~~~~~~~~~~~~~~~~~~~~~~~~~~~~~~~the subtree in $T \setminus CV_\ell$ that has $v$ as a vertex\}.
  \STATE {\bf Step 6:} Return the edge set $E = E' \cup E'' \cup \bigcup_{i=1}^g E_i$.
\end{algorithmic}
\end{algorithm}
}

The following theorem summarizes the properties of Algorithm $Tree1Spanner((T,rt),n,k)$.

\begin{theorem} \label{kequalk}
Let $k \ge 2$ and $n \ge 0$ be two arbitrary integers, and let $(T,rt)$ be a pruned tree with required-size $n$. 
Algorithm $Tree1Spanner((T,rt),n,k)$ computes in time $O(n \alpha_k(n))$ 
a 1-spanner $G_{T} = (V(T),E)$ for $T$, having diameter at most $k$ and $O(n \alpha_k (n))$ edges. 
\end{theorem}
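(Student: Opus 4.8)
The argument proceeds by a single induction on the required-size $n$, with the diameter bound $k$ tied to $n$ through the recursion depth. The base cases $0 \le n \le k+1$ are immediate: when $n \le k$, the tree $T$ is itself a 1-spanner, and by Lemma~\ref{hop} its diameter $\Lambda(T)$ is at most $|R| = n \le k$; when $n = k+1$, either $rt$ has exactly two children and Corollary~\ref{corhop} gives $\Lambda(\tilde T) \le |R|-1 = k$ for $\tilde T = T \cup \{(c_1,c_2)\}$, or else Lemma~\ref{hop} already guarantees $\Lambda(T) \le |R|-1 < |R| = k+1$ since the ``only if'' clause for $\Lambda(T)=|R|$ fails (it requires $rt$ to have exactly two children). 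In both base cases the edge count is $O(n)$, which is $O(n\alpha_k(n))$ since $\alpha_k(n) \ge 1$ for $n \ge 2$ in the relevant regime.

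\textbf{Inductive step -- correctness (diameter).} Assume $n \ge k+2$, so $\ell = \alpha'_{k-2}(n)$ is well-defined, and by Lemma~\ref{base'real}(2) we have $1 \le \ell < n$. One must show that for any pair $u,v$ of required vertices of $T$ there is a $T$-monotone path of at most $k$ edges in $E = E' \cup E'' \cup \bigcup_i E_i$. I would split on the location of $u$ and $v$ relative to the cut vertices $CV_\ell$. If both lie in the same subtree $T_i \in T \setminus CV_\ell$, then since $T'_i$ (the pruning of $T_i$) is $T_i$-monotone preserving and hence $T$-monotone preserving, the recursive spanner $E_i$ supplies a $T_i$-monotone path of at most $k$ edges, which is $T$-monotone; the recursion is legitimate because $|R_i| \le \ell < n$ by Lemma~\ref{cen}(1). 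If $u$ and $v$ both lie in $CV_\ell$, one uses the recursive 1-spanner $E'$ on the pruned copy $\tau'$: since $\tau'$ is $\tau$-monotone preserving with required set $CV_\ell$, and $\tau$ is just $T$, every $\tau'$-monotone path between required vertices is $T$-monotone, and by induction (with parameters $|CV_\ell| \le n/(\ell+1) < n$ and $k-2$) such a path has at most $k-2$ edges. The interesting cases are the mixed ones: if $u$ is in a subtree $T_i$ and $v \in CV_\ell$ (or $v$ lies in another subtree $T_j$), one routes from $u$ to a suitable border vertex of $T_i$ using $E_i$, hops via an edge of $E''$, then travels through $CV_\ell$ using $E'$, then symmetrically down into $T_j$; the budget $k = (k-2) + 2$ has to be spent as one edge of $E''$ at each end plus the at-most-$(k-2)$-edge path inside $CV_\ell$, \emph{with the subtree portions absorbed for free} — this is where the bookkeeping is delicate. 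Concretely, one needs that the first vertex of $CV_\ell \cup \{rt\}$ on the path from $u$ toward $rt$ is a border vertex of $T_i$ adjacent to $u$ in the spanner via $E''$ (by the definition of $E''$, every required vertex of a subtree is joined to \emph{every} border vertex of that subtree, which is exactly what makes the endpoints line up), and that $T$-monotonicity is preserved across the concatenation. I expect verifying this monotone-concatenation across the three segments — subtree, $CV_\ell$-path, subtree — to be the main obstacle, and it likely requires the full strength of Lemma~\ref{lca} (LCA's are preserved under pruning) together with the structure of $E''$ (both upstream and downstream edges, as in Figure~\ref{border}).

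\textbf{Inductive step -- edge count and running time.} For the edge count, let $f(n,k)$ denote the number of edges. The set $E''$ contributes $O(n)$ edges in total across the recursion at this level: each required vertex outside $CV_\ell$ contributes $|\text{border}|$ edges, and summing the border sizes over all subtrees telescopes because $|CV_\ell| \le n/(\ell+1)$ and each subtree has required-size $\le \ell$, giving $O(n)$ overall (here one uses that a subtree of required-size at most $\ell$ has at most $\ell+1$ border vertices, since each border vertex corresponds to a distinct ``attachment point''). This yields a recurrence of the shape $f(n,k) \le f(n/(\ell+1), k-2) + \sum_i f(|R_i|, k) + O(n)$ with $\sum_i |R_i| \le n$ and each $|R_i| \le \ell = \alpha'_{k-2}(n)$. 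Unrolling the inner recursion on the subtrees, the depth of ``$k$-recursion'' before the required-size drops below $k+2$ is $O(\alpha'_k(n)) = O(\alpha_k(n))$ by the defining relation $\alpha'_k(n) = 2 + \alpha'_k(\alpha'_{k-2}(n))$ together with Lemma~\ref{finallyreal}; meanwhile the ``$(k-2)$-recursion'' on $CV_\ell$ has a smaller parameter and contributes lower-order terms. Combining, $f(n,k) = O(n\alpha_k(n))$. The running time satisfies the same recurrence: $Prune$ and $Decomp$ are linear (as established in Sections~\ref{s:prune} and~\ref{decompose}), computing the subtrees and $E''$ is linear, so the $O(n)$ per level is maintained, and the total is again $O(n\alpha_k(n))$. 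The crucial quantitative point — and the reason $\alpha'$ rather than $\alpha$ is used for $\ell$ — is that $\alpha'_{k-2}$ shrinks $n$ fast enough that the recursion on $CV_\ell$ never dominates, so one saves the factor of $k$ present in the bound $O(nk\alpha_k(n))$ of~\cite{NS07}.
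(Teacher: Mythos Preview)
Your overall architecture matches the paper's: base cases via Lemma~\ref{hop} and Corollary~\ref{corhop}, diameter by case analysis on the position of $u,v$ relative to $CV_\ell$, and an edge-count recurrence driven by the identity $\alpha'_k(n) = 2 + \alpha'_k(\alpha'_{k-2}(n))$. But two points are genuinely broken.

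First, your bound on $|E''|$ rests on the claim that ``a subtree of required-size at most $\ell$ has at most $\ell+1$ border vertices.'' This is false. Take a Steiner root $rt$ with $m$ Steiner children $c_1,\ldots,c_m$, each $c_i$ having $\ell+1$ required leaves. Every $c_i$ becomes a cut vertex, and the subtree $\{rt\}$ has zero required vertices but $m$ border vertices, with $m$ unbounded. The paper's Lemma~\ref{bridges} instead partitions $E''$ into \emph{upstream} edges (at most one per required vertex, hence $\le n$ total) and \emph{downstream} edges (each cut vertex $u$ contributes at most $\ell$ such edges, one per required vertex in the subtree containing its parent $\pi_T(u)$; since $|CV_\ell| \le \lfloor n/(\ell+1)\rfloor$, this gives $\le n$). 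This upstream/downstream split, not a per-subtree border bound, is what yields $|E''| \le 2n$.

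Second, the induction cannot be ``a single induction on the required-size $n$'': the recursive call producing $E'$ is on the pair $(|CV_\ell|, k-2)$, so you need double induction on $(k,n)$, and you must supply independent base cases $k=2$ (where $E' = \emptyset$) and $k=3$ (where $E'$ is the complete graph on $CV_\ell$, of size $\binom{|CV_\ell|}{2} \le n/2$). The paper treats these separately as Lemmas~\ref{kequal2} and~\ref{kequal3}. Relatedly, in the mixed diameter case you write ``routes from $u$ to a suitable border vertex of $T_i$ using $E_i$, hops via an edge of $E''$'' --- there is no routing inside $T_i$: by construction of $E''$, the required vertex $u$ is already joined by a single edge to every border vertex of its subtree. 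One simply takes $w,w'$ to be the first and last vertices of $CV_\ell$ along $P_T(u,v)$; then $(u,w),(w',v)\in E''$, and the $\tau'$-monotone (hence $T$-monotone) $E'$-path from $w$ to $w'$ has at most $k-2$ edges. Monotonicity of the concatenation is immediate from the choice of $w,w'$ on $P_T(u,v)$; Lemma~\ref{lca} is not needed here.
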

{\bf Remarks:}
(1) If we set $\ell = \alpha_{k-2}(n)$ instead of $\ell = \alpha'_{k-2}(n)$ at the first step
of the algorithm, then both
the running time of the algorithm and the number of edges in the resulting spanner $G_T$ would increase by a factor of $k$,
i.e., from $O(n \alpha_k(n))$ to $O(n k \alpha_k(n))$.
~(2) In Section \ref{s:value} we saw that $\alpha_{2\alpha(n)+2}(n) \le 4$.
Hence, we can compute in $O(n)$ time a 1-spanner for $T$ having diameter
at most $2\alpha(n)+2$ and $O(n)$ edges. 

In what follows we prove Theorem \ref{kequalk}.

The next lemma bounds the size of the edge
set $E''$ that is computed at the fourth 
step of the algorithm and the time needed to compute it. This lemma was essentially proved in \cite{BTS94,NS07}.
\begin{lemma} \label{bridges}
The edge set $E''$ contains at most $2n$ edges. Also, it can be computed in $O(n)$ time.
\end{lemma}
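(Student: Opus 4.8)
\textbf{Proof proposal for Lemma \ref{bridges}.}

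The plan is to bound $|E''|$ and its computation time by charging the edges of $E''$ to the structure of the decomposition $T \setminus CV_\ell$. Recall that $E'' = \{(u,v) : v \in R(T) \setminus CV_\ell,\ u \in CV_\ell,\ u \text{ is a border vertex of the subtree to which } v \text{ belongs}\}$. Fix a subtree $T_i \in T \setminus CV_\ell$ with border $B_i \subseteq CV_\ell$. Each required vertex $v \in R(T_i)$ contributes exactly $|B_i|$ edges to $E''$, so the total contribution of $T_i$ is $|R(T_i)| \cdot |B_i|$, and $|E''| = \sum_{i=1}^g |R(T_i)| \cdot |B_i|$. The main point will be to show $\sum_{i=1}^g |B_i| \le 2|CV_\ell|$ and then combine this with $|R(T_i)| \le \ell$ from Lemma \ref{cen}(1) and $|CV_\ell| \le \lfloor n/(\ell+1)\rfloor$ from Lemma \ref{cen}(2).

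To see $\sum_{i=1}^g |B_i| \le 2|CV_\ell|$: classify each pair $(u, T_i)$ with $u \in B_i$ as an \emph{upstream} incidence if $u$ is the parent in $T$ of the root of $T_i$ (equivalently, $u$ lies above $T_i$), and a \emph{downstream} incidence otherwise (i.e., $u$ is a descendant of some vertex of $T_i$, so $u$ is the root of a subtree removed by $Decomp$ that hung off $T_i$). Each subtree $T_i$ has at most one upstream border vertex — namely the parent of its root, if that parent was designated a cut vertex — so the upstream incidences total at most $g \le |CV_\ell|$ (in fact at most $|CV_\ell|$, and one can even observe $g$ need not exceed a related bound, but $g \le |CV_\ell|$ suffices here once one checks every subtree other than the one containing $rt$ has an upstream cut vertex; more simply, count upstream incidences by the target $T_i$). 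For the downstream incidences, each cut vertex $u \in CV_\ell$ is the root of exactly one removed subtree $T_u$, and when $u$ was removed it was attached to exactly one vertex $p$ lying in whichever subtree $p$ ended up in; thus $u$ is a downstream border vertex of at most one subtree. Hence the downstream incidences total at most $|CV_\ell|$. Adding the two bounds gives $\sum_i |B_i| \le 2|CV_\ell|$. Therefore
\[
|E''| = \sum_{i=1}^g |R(T_i)| \cdot |B_i| \le \ell \cdot \sum_{i=1}^g |B_i| \le \ell \cdot 2|CV_\ell| \le 2\ell \cdot \frac{n}{\ell+1} < 2n.
\]

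For the running time: after $Decomp$ has been run (whose cost is already accounted for), one pass over $T$ identifies, for each cut vertex $u$, the unique subtree it borders from below (the subtree of its former parent) and the unique subtree it borders from above, and records the border lists $B_i$; this is $O(n)$ total since the lists have total length $\le 2|CV_\ell| \le 2n$. Then, traversing each subtree $T_i$ and emitting, for every $v \in R(T_i)$ and every $u \in B_i$, the edge $(u,v)$, costs $O(\sum_i |R(T_i)|\,|B_i|) = O(|E''|) = O(n)$. The main obstacle is purely bookkeeping: carefully verifying that the border of each $T_i$ decomposes into at most one upstream and at most $|\{u \in CV_\ell : u \text{ hung directly off } T_i\}|$ downstream vertices, with the downstream incidences summing to $|CV_\ell|$ across all $i$; once that charging is set up, both claims follow immediately. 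Since this counting argument is exactly the one used in \cite{BTS94,NS07}, I would cite it for the details and only spell out the charging at the level above. \QED
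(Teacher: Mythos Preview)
Your charging argument contains a genuine error: the claim $g \le |CV_\ell|$ is false, and with it the bound $\sum_i |B_i| \le 2|CV_\ell|$ collapses. Consider a star: a required center $c$ with $n-1$ required leaves. With any $\ell \in [1,n-1]$, the procedure $Decomp$ designates only $c$ as a cut vertex, so $|CV_\ell| = 1$, yet removing $c$ yields $g = n-1$ subtrees. Here $\sum_i |B_i| = n-1$, not $\le 2$. The flaw in your reasoning is the implicit assumption that the map ``subtree $\mapsto$ parent of its root'' is injective into $CV_\ell$; a single cut vertex can be the upstream border of arbitrarily many subtrees (one per non-cut child). Consequently the chain $|E''| \le \ell \sum_i |B_i| \le 2\ell\,|CV_\ell| < 2n$ breaks at the second inequality.

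The paper's proof avoids this by \emph{not} factoring out $|R(T_i)|$ uniformly. It keeps the upstream/downstream split but counts the two types of \emph{edges} (not border incidences) with different charging schemes: upstream edges are charged to the required vertex $v$ (each $v \in R \setminus CV_\ell$ has at most one upstream edge, giving $\le n$), while downstream edges are charged to the cut vertex $u$ (each $u$ is a downstream border of at most one subtree $T_i$ and is then incident on at most $|R(T_i)| \le \ell$ downstream edges, giving $\le |CV_\ell|\cdot\ell \le n$). Your downstream argument is essentially this second half and is fine; the fix is to handle the upstream contribution as $\sum_i |R(T_i)| \cdot [\text{$T_i$ has an upstream border}] \le \sum_i |R(T_i)| \le n$, rather than trying to bound the number of upstream incidences by $|CV_\ell|$.
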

\begin{proof}
Every edge of $E''$ is incident on exactly one cut vertex.
Consider such an edge $(u,v) \in E''$, where $u \in CV_\ell$ and $v \in R(T) \setminus CV_\ell$.
Then $v$ belongs to some subtree $T_i$ in $T \setminus CV_\ell$.
We say that the edge $(u,v)$ is \emph{upstream} if $u$ is the parent of the root $rt(T_i)$
of the subtree to which $v$ belongs. Otherwise, the edge $(u,v)$ is called \emph{downstream}.
(See Figure \ref{border} for an illustration.)

By definition, each vertex $v \in R \setminus CV_\ell$ is incident on at most one upstream edge. 
Hence, there are at most $|R \setminus CV_\ell| \le |R| = n$
upstream edges in total.
\\The downstream edges are counted per cut vertex. 
Each cut vertex $u \in CV_\ell \setminus \{rt\}$ has one parent in $T$, denoted $\pi_T(u)$.
If $\pi_T(u) \in CV_\ell$, then no downstream edge is incident on $u$.
Otherwise, $\pi_T(u)$ belongs to some subtree $T_i \in T\setminus CV_\ell$. Each downstream edge that is
incident on $u$ belongs to a distinct required vertex in $T_i$. Hence, the first assertion of Lemma \ref{cen}
implies that $u$ is incident on at most $\ell$ downstream edges. By the second assertion of Lemma \ref{cen},
$|CV_\ell| \le \lfloor \frac{n}{\ell+1}\rfloor$. Summing over all vertices in $CV_\ell \setminus \{rt\}$,
we get a total of at most $\lfloor \frac{n}{\ell+1}\rfloor \ell \le n$ downstream edges.
Hence, there are overall at most $2n$ edges in $E''$.

To verify that $E''$ can indeed be constructed within $O(n)$ time, we refer to Exercise 12.4 in \cite{NS07}. 
\QED
\end{proof}

Next, we prove Theorem \ref{kequalk} in the particular case of $k=2$.
\begin{lemma} \label{kequal2}
Let $(T,rt)$ be a pruned tree with required-size $n \ge 0$. 
Algorithm $Tree1Spanner((T,rt),n,2)$ computes in $O(n \alpha_2(n))$ time  
a 1-spanner $G_T = (V(T),E)$ for $T$, having diameter at most 2 and at most $n \alpha_2 (n)$ edges. 
\end{lemma}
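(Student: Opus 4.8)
The plan is to establish the three assertions --- $T$-monotone diameter at most $2$, at most $n\alpha_2(n)$ edges, and running time $O(n\alpha_2(n))$ --- simultaneously by induction on $n=|R(T)|$. For the base cases $0\le n\le 3$ there is no recursion: when $n\le 2$ the algorithm outputs $E(T)$, and since $T$ is pruned Lemma~\ref{hop} gives $\Lambda(T)\le|R|\le 2$; when $n=3$, if $rt$ has exactly two children then Corollary~\ref{corhop} gives $\Lambda(E(T)\cup\{(c_1,c_2)\})\le|R|-1=2$, while if it does not then Lemma~\ref{hop} forbids $\Lambda(T)=|R|=3$, so again $\Lambda(T)\le 2$. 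In each of these cases a direct count using the compactness of pruned trees (Lemma~\ref{compa}, so $|S(T)|\le|R|-1$) shows the output has at most $n\alpha_2(n)=n\lceil\log n\rceil$ edges, and the running time is $O(1)$.

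For the inductive step $n\ge 4$ I would first pin down the structure produced by the algorithm. Since $k=2$, Step~1 uses $\ell=\alpha'_0(n)=\lceil n/2\rceil$, so Lemma~\ref{cen}(2) gives $|CV_\ell|\le\lfloor n/(\ell+1)\rfloor\le 1$; on the other hand $Decomp$ must designate at least one cut vertex, since otherwise the root would have $size(rt)=n>\lceil n/2\rceil$ and hence be designated. So $CV_\ell=\{c\}$ for a single vertex $c$. Deleting $c$ splits $T$ into $g$ subtrees $T_1,\dots,T_g$, and $c$ has exactly one neighbour in each, so $c$ is a border vertex of every $T_i$ and the set $E''$ computed at Step~4 is exactly $\{(c,v):v\in R(T),\ v\ne c\}$, i.e.\ $c$ is joined to every required vertex. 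Each $T_i$ contains a leaf of $T$, which is required because $T$ is pruned, so $1\le|R(T_i)|\le\ell=\lceil n/2\rceil<n$ by Lemma~\ref{cen}(1); hence the recursive calls on the prunings $T'_i$ --- which are pruned and $T_i$-monotone preserving by Section~\ref{s:prune} --- are on strictly smaller instances.

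Given this, the three bounds follow. Diameter: for required vertices $u,v$, if one equals $c$ the single $E''$-edge between them is a one-edge $T$-monotone path; if $u,v$ lie in a common $T_i$ then $P_T(u,v)$ avoids $c$ and coincides with $P_{T_i}(u,v)$, and the inductive hypothesis gives a $T'_i$-monotone (hence $T_i$-monotone, hence $T$-monotone) path of at most two edges between them inside $E_i\subseteq E$; if $u\in T_i$, $v\in T_j$ with $i\ne j$ then $P_T(u,v)$ passes through $c$, so $(u,c,v)$ via two $E''$-edges is a two-edge $T$-monotone path --- thus $\Lambda(G_T)\le 2$. Edge count: $E'=\emptyset$ for $k=2$, so $|E|=|E''|+\sum_{i=1}^g|E_i|$; since $|E''|\le|R(T)\setminus\{c\}|$, $\sum_i|R_i|\le|R(T)\setminus\{c\}|\le n$, and $\lceil\log|R_i|\rceil\le\lceil\log\lceil n/2\rceil\rceil=\lceil\log n\rceil-1$, the inductive hypothesis yields $|E|\le n+(\lceil\log n\rceil-1)n=n\lceil\log n\rceil=n\alpha_2(n)$. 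Running time: each call does $O(|T|)=O(|R(T)|)$ work (the $Decomp$ scan, extracting the $T_i$'s, the linear-time construction of $E''$ from Lemma~\ref{bridges}, and the prunings, all on pruned and thus $O(|R(T)|)$-sized trees); the recursion has depth $O(\log n)$ because required-sizes at least halve each level, and the required-sizes at any fixed depth sum to at most $n$, so the total time is $O(n\log n)=O(n\alpha_2(n))$.

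The step I expect to be the main obstacle is the diameter argument, and within it two points: that the lone cut vertex $c$ is a border vertex of every component of $T\setminus\{c\}$, so that $E''$ indeed joins $c$ to all required vertices (this is where the single-cut-vertex structure and the definition of ``border'' are used), and that a monotone path produced inside a pruned subtree $T'_i$ remains monotone with respect to the original tree $T$ (this rests on the pruning procedure being $T$-monotone preserving, from Section~\ref{s:prune}). The edge count also needs the arithmetic identity $\lceil\log\lceil n/2\rceil\rceil=\lceil\log n\rceil-1$ (valid for $n\ge 2$) to get the clean bound $n\alpha_2(n)$ rather than merely $O(n\log n)$.
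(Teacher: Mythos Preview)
Your proposal is correct and follows essentially the same approach as the paper: exploit that for $k=2$ the cut-vertex set $CV_\ell$ is a singleton $\{w\}$, so $E''$ is the star joining $w$ to all other required vertices, and recurse on the subtrees of $T\setminus\{w\}$; the edge count then collapses via the identity $\alpha_2(n)=1+\alpha_2(\lceil n/2\rceil)$, and the diameter argument is the same three-case analysis. The only cosmetic differences are that you run the three inductions (edges, diameter, time) in parallel rather than sequentially, you make explicit the lower bound $|CV_\ell|\ge 1$ and the fact that $c$ borders every component (the paper simply asserts both), and you bound the running time by depth $\times$ width rather than by solving the recurrence---all equivalent.
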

\begin{proof}
We denote by $F_{2}(n)$ the maximum number
of edges in the graph computed by Algorithm $Tree1Spanner((T,rt),n,2)$, where
$T$ ranges over all pruned trees having required-size $n$.
We next prove by induction on $n$ that $F_2(n) \le n \alpha_2(n)$.  
Let $T$ be a pruned tree with required-size $n$ for which the edge set $E$ that is computed
by Algorithm $Tree1Spanner((T,rt),n,2)$ has $F_2(n)$ edges.
\\The case $n=0$ is trivial. We henceforth assume that $n \ge 1$.
\\By Lemma \ref{compa}, any non-empty pruned tree is compact. Hence, $|V(T)| \le 2|R(T)|-1 = 2n-1$, and so $|E(T)| = |V(T)|-1 \le 2n-2$.
\\If $n \le 2$, then $F_2(n) ~=~ |E| ~=~ |E(T)| ~\le~ 2n-2$.
If $n=1$, then $\alpha_2(n) = 0$, yielding 
$F_2(n)  \le 2n-2 = 0 = n \alpha_{2}(n)$. 
If $n=2$, then 
$\alpha_{2}(n) = 1$, yielding $F_2(n) \le 2n-2 = 2 = n \alpha_2(n)$.
\\Suppose next that $n=3$. In this case the edge set $E$ returned by the algorithm 
contains at most one more edge in addition to the edge set $E(T)$ of the input tree $T$.
Hence, $F_2(n) ~=~ |E| ~\le~ |E(T)| + 1 ~\le~ 2n-1 ~=~ 5$. Notice that $\alpha_2(3) = 2$,
yielding
$F_2(n) \le 5 \le n \alpha_{2}(n)$.
\\\emph{Induction Step:} We assume the correctness of the statement for all smaller
values of $n$, $n\ge 4$, and prove it for $n$.
Note that $\ell = \alpha'_0(n) = \alpha_0(n) = \lceil n/2 \rceil$.
By the second assertion of Lemma \ref{cen}, $|CV_\ell| \le \left\lfloor \frac{n}{\ell+1} \right\rfloor \le \left\lfloor \frac{n}{n/2 +1}\right\rfloor =1$,
implying that $CV_\ell$ consists of a single vertex, denoted $w$.
\\Observe that for $k=2$, the edge set $E'$ is empty.
\\Since $CV_\ell$ consists of a single vertex $w$, the edge set $E''$ that is computed at the fourth step of the algorithm is comprised of all edges that
connect $w$ to the required vertices in $R(T) \setminus \{w\}$. Hence, $|E''| = |R(T) \setminus \{w\}| \le |R(T)| = n$.
\\Let $i$ be an index in  $[g]$, and consider the edge set $E_i$ that is computed at the fifth
step of the algorithm. 
We have $|E_{i}| \le F_{2}(|R_i|)$. 
By the first assertion of Lemma \ref{cen}, the required-size of each subtree in $T \setminus CV_\ell = T \setminus \{w\}$
is at most $\ell$, and so 
$|R_i| \le \ell  = \lceil n/2 \rceil < n$.
Since the function $\alpha_{2}$ is monotone non-decreasing, the induction hypothesis implies that $|E_i|
\le |R_i| \alpha_{2}(\ell)$. Since $n \ge 4$, we have $\alpha_{2}(n) = 1+\alpha_{2}(\alpha_{0}(n))
= 1 + \alpha_{2}(\ell)$. Also, notice that $\sum_{i=1}^g |R_i| \le |R| = n$. 
It follows that $$\sum_{i=1}^g |E_i| ~\le~ \sum_{i=1}^g |R_i| \alpha_{2}(\ell)
~=~ \sum_{i=1}^g |R_i| (\alpha_{2}(n) -1) ~\le~ n(\alpha_{2}(n) -1).$$
Altogether,
\begin{eqnarray*} F_2(n) ~=~ |E| ~=~ |E'| + |E''| + \sum_{i=1}^g |E_i| ~\le~
0 + n + n(\alpha_{2}(n) -1) ~=~ n \alpha_{2}(n). 
\end{eqnarray*}

Next, we prove that $G_T$ is a 1-spanner for $T$ with diameter at most 2.
The proof is, again, by induction on $n$.
The case $n \le 3$ follows from Lemma \ref{hop} and Corollary 
\ref{corhop}.
\\\emph{Induction Step:} We assume the correctness of the statement for all smaller
values of $n$, $n \ge 4$, and prove it for $n$. 
We show that for an arbitrary pair $u,v$ of required vertices, there is a $T$-monotone
path in $G_T$ that consists of at most two edges.
Consider the single vertex $w$ in $CV_\ell$. If either $u$ or $v$ is equal to $w$, then $u$ and $v$ are connected by an edge of $E''$,
and so this edge forms a $T$-monotone path between $u$ and $v$.
If $u$ and $v$ are in different subtrees of $T \setminus \{w\}$, then both edges $(u,w)$ and $(w,v)$ belong to $G_T$,
and so $u$ and $v$ are connected by the 
path $P = (u,w,v)$ in $G_T$. 
Notice that the unique path $P_T(u,v)$ between $u$ and $v$ must traverse
$w$, implying that $P$ is $T$-monotone. Finally, if $u$ and $v$ belong to the same
subtree $T_i$ in $T \setminus \{w\}$, then by the induction hypothesis they are connected by a $T_i$-monotone
path $P_i$ that consists of at most two edges. However, $P_i$ is also a path in $G_T$, and it is $T$-monotone.

Denote by $C_2(n)$ the worst-case running time
of Algorithm 
$Tree1Spanner((T,rt),n,2)$, where $T$ ranges over all pruned trees $T$ with required-size $n$.
We next show that $C_2(n) =  O(n \alpha_2(n))$. 
\\Clearly, if $n \le 3$, then $C_2(n) = O(1)$. We may henceforth assume that
$n \ge 4$.
\\Computing the set $CV_\ell$ of cut vertices at the first step of the algorithm
takes $O(n)$ time. 
Also, $E' = \emptyset$, and so the second step of the algorithm requires only $O(1)$ time.
It takes $O(n)$ time to compute the subtrees $T_1,\ldots,T_g$ and the corresponding pruned subtrees $T'_1,\ldots,T'_g$ at the third and fifth steps
of the algorithm, respectively. Recall that $CV_\ell$ consists of a single vertex $w$, and so one can compute 
the edge set $E'' = \{(w,v) ~\vert~ v \in R(T) \setminus \{w\}\}$ directly within $O(n)$ time
as well.
Finally, the time needed to compute the edge
sets $E_1,E_2,\ldots,E_g$ at the fifth step of the algorithm is at most $\sum_{i=1}^g C_2(|R_i|)$.
We obtain the recurrence $C_2(n) = O(n) + \sum_{i=1}^g C_2(|R_i|)$, where $|R_i| \le \ell = \lceil n/2 \rceil$,
for each index $i \in [g]$, and $\sum_{i=1}^g |R_i| \le n$.
Hence, as in the above argument for bounding $F_2(n)$, it can be shown that
$C_2(n) = O(n \alpha_2(n))$. 
\QED
\end{proof}

Next, we prove Theorem \ref{kequalk} in the particular case of $k=3$.
\begin{lemma} \label{kequal3}
Let $(T,rt)$ be a pruned tree with required-size $n \ge 0$. 
Algorithm $Tree1Spanner((T,rt),n,3)$ computes in $O(n \alpha_3(n))$ time 
a 1-spanner $G_T = (V(T),E)$ for $T$, having diameter at most 3 and $\frac{5}{2} n \alpha_3 (n)+2$ edges. 
\end{lemma}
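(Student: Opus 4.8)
The plan is to follow the scheme of the proof of Lemma \ref{kequal2}, carrying out three inductions on $n$: one for the edge count, one for the $T$-monotone diameter, and one for the running time. For $k=3$ the algorithm sets $\ell=\alpha'_1(n)=\alpha_1(n)=\lceil\sqrt n\rceil$, takes $E'$ to be the edge set of the complete graph on $CV_\ell$, and keeps the parameter $k=3$ in the recursive calls of Step 5. The base cases are $n\le 4$: for $n\le 3$ the algorithm returns $E(T)$, and for $n=4$ it returns $E(T)$ or $E(T)\cup\{(c_1,c_2)\}$. In all of them, compactness of the pruned input (Lemma \ref{compa}, so $|V(T)|\le 2n-1$) bounds the returned edge set by $2n-1\le\tfrac52 n\alpha_3(n)+2$; the diameter is at most $3$ by Lemma \ref{hop} in the case $G_T=T$ (which forces $\Lambda(T)=|R|$ only when $rt$ has exactly two children) and by Corollary \ref{corhop} in the complementary case where the extra edge $(c_1,c_2)$ is inserted.

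For the inductive step ($n\ge 5$, so $\ell\ge 3$ and $\alpha_3(\ell)\ge 1$), I would actually prove the sharper edge bound $|E|\le\tfrac52 n\alpha_3(n)$ for $n\ge 3$ (together with $|E|\le 2$ for $n\le 2$, this yields the stated $\tfrac52 n\alpha_3(n)+2$). Bound the three pieces of $E=E'\cup E''\cup\bigcup_i E_i$ separately. Since $(\ell+1)^2>n$ and $|CV_\ell|\le\lfloor n/(\ell+1)\rfloor$ by Lemma \ref{cen}(2), we get $|CV_\ell|^2<n$, hence $|E'|=\binom{|CV_\ell|}{2}\le n/2$. By Lemma \ref{bridges}, $|E''|\le 2n$. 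For the subtrees, the crucial identity is $\alpha_3(\ell)=\alpha_3(\alpha_1(n))=\alpha_3(n)-1$ (Lemma \ref{NSreal}, valid since $n\ge 3$); using $|R_i|\le\ell$ (Lemma \ref{cen}(1)) and monotonicity of $\alpha_3$, together with the sharper inductive hypothesis for subtrees with $|R_i|\ge 5$ and the direct estimate $|E_i|\le|E(T'_i)|+1\le 2|R_i|$ for subtrees with $|R_i|\le 4$, one checks that $|E_i|\le\tfrac52|R_i|\,\alpha_3(\ell)$ holds for \emph{every} subtree — it is exactly the hypothesis $\alpha_3(\ell)\ge 1$ that lets the crude $O(1)$ edge count of a tiny subtree be dominated here. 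Since $\sum_i|R_i|\le n$, this gives $\sum_i|E_i|\le\tfrac52 n\alpha_3(\ell)=\tfrac52 n(\alpha_3(n)-1)$, and summing, $|E|\le n/2+2n+\tfrac52 n(\alpha_3(n)-1)=\tfrac52 n\alpha_3(n)$.

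For the diameter in the inductive step, take an arbitrary pair $u,v$ of required vertices of $T$ and exhibit a $T$-monotone path in $G_T$ of at most three edges. If $u,v$ lie in the same subtree $T_i$ of $T\setminus CV_\ell$ then, since a connected subtree of a tree is convex, $P_T(u,v)=P_{T_i}(u,v)$, and the inductive hypothesis applied to the pruned, $T_i$-monotone preserving tree $T'_i$ supplies a $T$-monotone path of $\le 3$ edges inside $E_i$. Otherwise let $x$ (resp. $y$) equal $u$ (resp. $v$) itself if it lies in $CV_\ell$, and otherwise let $x$ (resp. $y$) be the border vertex of the subtree containing $u$ (resp. $v$) that lies on $P_T(u,v)$ — such a cut vertex exists because any subtree is separated from the rest of $T$ by cut vertices, so $P_T(u,v)$ must pass through one when it leaves that subtree. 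Then $u,x,y,v$ occur in this order along $P_T(u,v)$, and the edges $(u,x)\in E''$ (or $u=x$), $(x,y)\in E'$ (both in $CV_\ell$), $(y,v)\in E''$ (or $y=v$) form a $T$-monotone path of at most three edges. For the running time, the recurrence is $C_3(n)=O(n)+\sum_i C_3(|R_i|)$, where the $O(n)$ absorbs the call to $Decomp$ (linear in $|V(T)|=O(n)$), the construction of $E'$ and $E''$ (Lemma \ref{bridges}), all prunings (total pruned vertex count $O(n)$), and all recursive calls on subtrees with $|R_i|\le 4$; solving it exactly as for the edge bound, with the same identity $\alpha_3(\ell)=\alpha_3(n)-1$ and $O(n)$ work per recursion level over $O(\alpha_3(n))$ levels, yields $C_3(n)=O(n\alpha_3(n))$.

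The step I expect to be the main obstacle is the edge-count bookkeeping. A naive induction would carry an additive $+2$ down from each of the $\Omega(n)$ possible subtrees and an extra $O(n)$ from $|E'|+|E''|$, yielding only $\tfrac52 n\alpha_3(n)+O(n)$ rather than the stated $+2$. Obtaining the clean additive constant requires (i) sharpening the inductive hypothesis to $|E|\le\tfrac52 n\alpha_3(n)$ for $n\ge 3$, so that no additive slack is generated except at the $n\le 2$ leaves, and (ii) the observation that in the inductive step, because $\alpha_3(\ell)\ge 1$, the crude bound $|E_i|\le 2|R_i|$ for small subtrees is already $\le\tfrac52|R_i|\,\alpha_3(\ell)$, so that the small subtrees cost nothing beyond what the $\tfrac52 n\alpha_3(\ell)$ budget already pays for; the remaining slack $n/2+2n-\tfrac52 n=0$ in $|E'|+|E''|$ then closes the recursion exactly. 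The diameter verification is routine once the structure of $E'$ (a clique on $CV_\ell$) and $E''$ (connecting each required non-cut vertex to all border vertices of its subtree) is combined with the fact that border vertices of a subtree lie on the relevant tree paths.
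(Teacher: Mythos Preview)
Your proposal is correct and follows essentially the same approach as the paper: both prove the sharpened bound $F_3(n)\le\max\{2,\tfrac52 n\alpha_3(n)\}$ by induction on $n$, with the same base cases, the same estimates $|E'|\le n/2$ and $|E''|\le 2n$, the same identity $\alpha_3(\ell)=\alpha_3(n)-1$, and the same case analysis for the diameter. The only difference is in bounding $\sum_i|E_i|$: the paper splits the subtrees into $|R_i|\le 1$, $|R_i|=2$, $|R_i|\ge 3$ and uses $\alpha_3(n)\ge 2$ to absorb the $|R_i|=2$ contribution, whereas you obtain the uniform bound $|E_i|\le\tfrac52|R_i|\,\alpha_3(\ell)$ for every $i$ directly from $\alpha_3(\ell)\ge 1$ --- a slightly cleaner bookkeeping that reaches the identical conclusion.
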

\begin{proof}
We denote by $F_{3}(n)$ the maximum number
of edges in the graph computed by Algorithm $Tree1Spanner((T,rt),n,3)$, where
$T$ ranges over all pruned trees having required-size $n$.
We next prove by induction on $n$ that $F_3(n)$ is no greater than $\max\{2,\frac{5}{2} n \alpha_3(n)\}$, 
which provides the required result.
Let $T$ be a pruned tree with required-size $n$ for which the edge set $E$ that is computed
by Algorithm $Tree1Spanner((T,rt),n,3)$ has $F_3(n)$ edges.
\\The case $n=0$ is trivial. We henceforth assume that $n \ge 1$.
\\Notice that $\max\{2,\frac{5}{2} n \alpha_3 (n)\} = \frac{5}{2} n \alpha_3 (n)$, for all $n \ge 3$.
Also, by Lemma \ref{compa}, every non-empty pruned tree is compact. Hence, $|V(T)| \le 2|R(T)|-1 = 2n-1$, and so $|E(T)| = |V(T)|-1 \le 2n-2$.
\\If $n \le 3$, then $F_3(n) ~=~ |E| ~=~ |E(T)| ~\le~ 2n-2$.
For $n \le 2$, we have $F_3(n) \le 2n-2 \le 2$.
For $n =3$, we have $\alpha_3(3) = 1$, and so $F_3(n) \le 2n-2 = 4 \le \frac{5}{2} n \alpha_3(n)$. 
In both cases, we have $F_3(n)  \le \max\{2,\frac{5}{2} n \alpha_3 (n)\}$.
\\Suppose next that $n=4$. In this case the edge set $E$ returned by the algorithm 
contains at most one more edge in addition to the edge set $E(T)$ of the input tree $T$.
Hence, $F_3(n) ~=~ |E| ~\le~ |E(T)| + 1 ~\le~ 2n-1 ~=~ 7$. Notice that $\alpha_3(4) = 1$,
and so
$F_3(n) \le 7 \le \frac{5}{2} n \alpha_{3}(n) = \max\{2,\frac{5}{2} n \alpha_3 (n)\}$.
\\\emph{Induction Step:} We assume the correctness of the statement for all smaller
values of $n$, $n \ge 5$, and prove it for $n$. 
Observe that $\ell = \alpha'_1(n) = \alpha_1(n) = \lceil \sqrt{n} \rceil$.
By the second assertion of Lemma \ref{cen}, $|CV_\ell| \le \left\lfloor \frac{n}{\ell+1} \right\rfloor \le \sqrt{n}$.
\\Observe that for $k=3$, the edge set $E'$ consists of all ${|CV_\ell| \choose 2}$ edges of the complete graph
over $CV_\ell$, and so $|E'| = {|CV_\ell| \choose 2} \le  {\sqrt{n} \choose 2} \le \frac{n}2$.
\\By Lemma \ref{bridges}, 
the number of edges in the edge set $E''$ that is computed at the fourth step of
the algorithm is less than or equal to $2n$. 
\\Let $I^-_1$, $I_2$, and $I^+_3$ be the sets of all indices $i \in [g]$ for which $|R_i| \le 1$, $|R_i| = 2$,
and $|R_i| \ge 3$, respectively. Clearly, $I^-_1 \cup I_2 \cup I^+_3 = [g]$. 
Observe that $$n ~=~ |R| ~\ge~ \sum_{i=1}^g |R_i| ~=~ \sum_{i \in I^-_1} |R_i| + \sum_{i \in I_2} |R_i|
+ \sum_{i \in I^+_3} |R_i| ~\ge~ 2|I_2| + \sum_{i \in I^+_3} |R_i|,$$ implying that
$\sum_{i \in I^+_3} |R_i| \le n- 2|I_2|$. 
Let $i$ be an index in $[g]$, and consider the edge set $E_i$ that is computed at the fifth step
of the algorithm. 
We have $|E_{i}| \le F_{3}(|R_i|)$. Observe that if $i \in I^-_1$,
then   $|E_i| = E(T'_i) = 0$, and if $i \in I_2$, then  $|E_i| = E(T'_i) \le 2$.
Suppose next that $i \in I^+_3$. 
By the first assertion of Lemma \ref{cen}, the required-size of each subtree in $T \setminus CV_\ell$
is at most $\ell$, and so
$3 \le |R_i| \le \ell =  \lceil \sqrt{n} \rceil < n$. Since the function $\alpha_{3}$ is monotone non-decreasing, the induction hypothesis implies that $|E_i|
\le \max\{2,\frac{5}{2}|R_i| \alpha_{3}(\ell)\} = \frac{5}{2}|R_i| \alpha_{3}(\ell)$. 
Since $n \ge 5$, we have $2 \le \alpha_{3}(n) = 1+\alpha_{3}(\alpha_{1}(n)) = 1 + \alpha_{3}(\ell)$. 
It follows that \begin{eqnarray*} \sum_{i=1}^g |E_i| &=& \sum_{i \in I^-_1} |E_i| + \sum_{i \in I_2} |E_i|
+ \sum_{i \in I^+_3} |E_i| ~\le~ 2|I_2| +   \sum_{i \in I^+_3} |E_i|
 ~\le~ 2|I_2| + \sum_{i \in I^+_3} \frac{5}{2}|R_i| \alpha_{3}(\ell) \\ &=& 2|I_2| + \sum_{i \in I^+_3} \frac{5}{2}|R_i| (\alpha_{3}(n)-1) ~\le~ 2|I_2| + \frac{5}{2}(n-2|I_2|) (\alpha_{3}(n)-1) \\ &=& 2|I_2| + \frac{5}{2} n (\alpha_{3}(n)-1)
- 5|I_2|(\alpha_3(n)-1) ~\le~ \frac{5}{2} n (\alpha_{3}(n)-1).
\end{eqnarray*}
(The last inequality holds since $\alpha_3(n) \ge 2$.)
Altogether, \begin{eqnarray*} F_3(n) ~=~ |E| ~=~ |E'| + |E''| + \sum_{i=1}^g |E_i| ~\le~
\frac{n}{2} + 2n + \frac{5}{2}n(\alpha_{3}(n) -1) ~=~ \frac{5}{2}n \alpha_{3}(n) = \max\left\{2,\frac{5}{2} n \alpha_{3}(n)\right\}. 
\end{eqnarray*}

Next, we prove that $G_T$ is a 1-spanner for $T$ with diameter at most 3.
The proof is, again, by induction on $n$.
The case $n \le 4$ follows from Lemma \ref{hop} and Corollary \ref{corhop}.
\\\emph{Induction Step:} We assume the correctness of the statement for all smaller
values of $n$, $n \ge 5$, and prove it for $n$. 
Observe that for $k=3$, the edge set $E'$ is equal to the edge set of the complete graph over $CV_\ell$,
and so there is an edge in $G_T$ between any pair of vertices in $CV_\ell$.
\\ Next, we show that for an arbitrary pair $u,v$ of required vertices, there is a $T$-monotone
path in $G_T$ that consists of at most three edges. 
The analysis splits into five cases.
\\\emph{Case 1: $u,v \in CV_\ell$.} 
In this case there is an edge in $G_T$ between $u$ and $v$, which forms a $T$-monotone path.
\\\emph{Case 2: $u \in CV_\ell$ and $v \in R(T) \setminus CV_\ell$}. 
Let $w$ be the first vertex of $CV_\ell$ on the path in $T$ from $v$ to $u$.
Note that $w$ is a border vertex of the subtree in $T \setminus CV_\ell$ that has $v$ as a vertex,
and so the edge $(w,v)$ belongs to $E''$, and thus also to $G_T$.
If $u=w$, then $(v,u)$ is an edge in $G_T$, which forms a $T$-monotone path. Otherwise $u \ne w$.
Note that both $w$ and $u$ belong to $CV_\ell$, and so there is an edge in $G_T$ between $u$ and $w$.
Hence the two edges $(u,w)$ and $(w,v)$ form a $T$-monotone path $(u,w,v)$ between $u$ and $v$ that consists of
two edges.
\\\emph{Case 3: $v \in CV_\ell$ and $u \in R(T) \setminus CV_\ell$}. This case is symmetrical to case 2.
\\\emph{Case 4: $u \in T_i$, $v \in T_j$, for two distinct subtrees $T_i$ and $T_j$ in $T \setminus CV_\ell$.}
Let $w$ and $w'$ be the first and last vertices of $CV_\ell$ on the path in $T$ from $u$ to $v$,
respectively. Note that $w$ is a border vertex of $T_i$ and $w'$ is a border vertex of $T_j$,
and so both edges $(u,w)$ and $(w',v)$ belong to $E''$, and thus also to $G_T$. If $w = w'$, then
$(u,w,v)$ is a $T$-monotone path between $u$ and $v$ in $G_T$ that consists of two edges.
Otherwise, $w \ne w'$. Note that both $w$ and $w'$ belong to $CV_\ell$, and so they are connected by the edge $(w,w')$ in $G_T$.
Hence the three edges $(u,w)$, $(w,w')$, and $(w',v)$ form a $T$-monotone path $(u,w,w',v)$ between $u$ and $v$
that consists of three edges.
\\\emph{Case 5: $u,v \in T_i$, for some subtree $T_i$ in $T \setminus CV_\ell$.}
The first assertion of Lemma \ref{cen} implies that 
the required-size $|R_i| = |R(T_i)|$ of $T_i$ is at most
$\ell = \alpha'_1(n) = \lceil \sqrt{n} \rceil < n$. By Lemma \ref{description}, the tree $T'_i$ that is computed at the fifth step of the algorithm
satisfies $R(T'_i) = R(T_i)$.
Hence, by the induction hypothesis for $T'_i$, the $T'_i$-monotone diameter of the 
graph $G_{T'_i} = (V(T'_i),E_i)$ that is computed at the fifth step of the algorithm is at most
$3$. It follows that
$u$ and $v$ are connected in $G_T$ by a $T'_i$-monotone path that consists of at most three edges.
However, since $T'_i$ is $T_i$-monotone preserving, this path is also $T_i$-monotone, and thus also $T$-monotone.

Denote by $C_3(n)$ the worst-case running time
of Algorithm 
$Tree1Spanner((T,rt),n,3)$, where $T$ ranges over all pruned trees $T$ with required-size $n$.
We next show that $C_3(n) =  O(n \alpha_3(n))$. 
\\Clearly, if $n \le 4$, then $C_3(n) = O(1)$. We may henceforth assume that
$n \ge 5$.
\\Computing the set $CV_\ell$ of cut vertices at the first step of the algorithm
takes $O(n)$ time. 
Also, computing the edge set $E'$ of the complete
graph over $CV_\ell$ at the second step of the algorithm can be carried out in $O(|E'|) = O(n)$ time.
It takes $O(n)$ time to compute the subtrees $T_1,\ldots,T_g$ and the corresponding pruned subtrees $T'_1,\ldots,T'_g$ at the third and fifth steps
of the algorithm, respectively. By Lemma \ref{bridges},  
computing the edge set $E''$ at the fourth step of the algorithm takes  $O(n)$ time
as well.
Finally, the time needed to compute the edge
sets $E_1,E_2,\ldots,E_g$ at the fifth step of the algorithm is at most $\sum_{i=1}^g C_3(|R_i|)$.
We obtain the recurrence $C_3(n) = O(n) + \sum_{i=1}^g C_3(|R_i|)$, where $|R_i| \le \ell = \lceil \sqrt{n} \rceil$,
for each index $i \in [g]$, and $\sum_{i=1}^g |R_i| \le n$.
Hence, as in the above argument for bounding $F_3(n)$, it can be shown that
$C_3(n) = O(n \alpha_3(n))$. 
\QED
\end{proof}

We turn to prove Theorem \ref{kequalk} for a general $k$, $k \ge 2$.

The following lemma establishes an upper bound on the number of edges in the spanner $G_T$.
\begin{lemma} \label{edgebound}
Let $k \ge 2$ and $n \ge 0$ be two arbitrary integers,
and denote by $F_{k}(n)$ the maximum number
of edges in the graph computed by Algorithm $Tree1Spanner((T,rt),n,k)$, where
$T$ ranges over all pruned trees having required-size $n$.
Then $F_{k}(n) \le 2 n\alpha'_{k}(n)$ if $k$ is even, and $F_{k}(n) \le 3 n \alpha'_{k}(n) + 2$ otherwise.
\end{lemma}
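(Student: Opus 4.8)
The plan is to prove the bound on $F_k(n)$ by induction on $k$ (in steps of two, to match the recursive structure of $\alpha'_k$ and of the algorithm), with an inner induction on $n$ for each fixed $k$. The base cases $k=2$ and $k=3$ are exactly Lemmas~\ref{kequal2} and \ref{kequal3}, which already give $F_2(n) \le n\alpha_2(n) \le 2n\alpha'_2(n)$ and $F_3(n) \le \frac52 n\alpha_3(n)+2 \le 3n\alpha'_3(n)+2$ (using $\alpha_k \le \alpha'_k$). So I may assume $k \ge 4$ and that the claimed bound holds for $k-2$ and all $n$, and I prove it for $k$ by induction on $n$.

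For the inner induction on $n$, the cases $n \le k+1$ are handled directly: when $n \le k$ the algorithm returns $E(T)$, and since $T$ is pruned it is compact (Lemma~\ref{compa}), so $|E(T)| \le 2n-2$; when $n=k+1$ the algorithm adds at most one edge, so $|E| \le 2n-1$. In both cases one checks this is below the claimed bound using the small-value behavior of $\alpha'_k$ (here $\alpha'_k(n)=\alpha_k(n)$, and $\alpha_k(n)\ge 1$ for $n \ge 2$ when $k$ is even, $\alpha_k(n) \ge 1$ for $n$ large enough when $k$ is odd — the constant $2$ in the odd case absorbs the remaining slack). For $n \ge k+2$, I use $|E| = |E'| + |E''| + \sum_i |E_i|$. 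Here $\ell = \alpha'_{k-2}(n)$ and $|CV_\ell| \le \lfloor n/(\ell+1)\rfloor \le n/\ell$ by Lemma~\ref{cen}(2). The term $|E'|$ is the number of edges produced by the recursive call on a pruned tree of required-size $|CV_\ell| \le n/\ell$ with parameter $k-2$, so by the outer induction hypothesis $|E'| \le 2(n/\ell)\alpha'_{k-2}(n/\ell) + (\text{0 or 2})$; since $\alpha'_{k-2}$ is monotone (Lemma~\ref{base'real}(1)) and $n/\ell \le n$, this is at most $2(n/\ell)\cdot \ell + O(1) = 2n + O(1)$ — i.e.\ $|E'| = O(n)$. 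The term $|E''| \le 2n$ by Lemma~\ref{bridges}. For $\sum_i |E_i|$, each subtree $T'_i$ is pruned (after the call to $\mathit{Prune}$) with required-size $|R_i| \le \ell$ (Lemma~\ref{cen}(1)) and $\sum_i |R_i| \le n$, so by the inner induction hypothesis (applicable since $\ell = \alpha'_{k-2}(n) < n$ by Lemma~\ref{base'real}(2), so $|R_i| < n$), $|E_i| \le 2|R_i|\alpha'_k(\ell) + (\text{0 or 2})$, and one must be slightly careful to absorb the additive constants: group the indices $i$ by whether $|R_i|$ is small or large exactly as in the proof of Lemma~\ref{kequal3}, so that the $+2$ terms only arise for subtrees with $|R_i| \ge 3$ (resp.\ $\ge 2$) and can be charged against the main term.

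The crux is the arithmetic that folds all of this back into the claimed bound, and this is where the definition of $\alpha'_k$ does the work. The key identity is that for $n \ge k+2$ we have $\alpha'_k(n) = 2 + \alpha'_k(\alpha'_{k-2}(n)) = 2 + \alpha'_k(\ell)$, hence $\alpha'_k(\ell) = \alpha'_k(n) - 2$. Therefore $\sum_i |E_i| \le 2n\,\alpha'_k(\ell) + O(1) = 2n(\alpha'_k(n) - 2) + O(1) = 2n\,\alpha'_k(n) - 4n + O(1)$, and adding $|E'| + |E''| = O(n) + 2n \le$ (something like) $4n$ for $n$ large — more precisely, one must verify the constants actually close, which is the one place requiring genuine care — yields $|E| \le 2n\,\alpha'_k(n)$ in the even case. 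The "$-4n$" is precisely the budget that pays for $|E'|$ and $|E''|$; this is why the step size of $2$ in the definition $\alpha'_k(n) = 2 + \alpha'_k(\alpha'_{k-2}(n))$ is exactly right, whereas using $\alpha_k$ (step size $1$) would only buy "$-2n$", forcing the extra factor of $k$ noted in the Remark after Theorem~\ref{kequalk}. The odd case is identical except one tracks the standing additive constant $2$ and checks it survives the recursion (it does, since the recursion only adds it once at the bottom).

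I expect the main obstacle to be bookkeeping the additive constants carefully enough that the induction genuinely closes — in particular verifying that $|E'| + |E''| \le 4n$ holds once $n$ is large enough relative to $k$, handling the transitional range $k+2 \le n \le$ (some small multiple of $k$) possibly by hand, and making sure that when $|E_i|$ carries a $+2$ (odd $k$) these constants sum to something dominated by $2n\alpha'_k(n)$ rather than accumulating. The grouping-by-size trick from Lemma~\ref{kequal3}, together with the fact that $\alpha'_k(n) \ge 1$ (indeed grows) for the relevant range, is what makes this go through; once the constants are pinned down the rest is the clean application of the $\alpha'_k$ recursion described above.
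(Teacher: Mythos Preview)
Your proposal is correct and follows essentially the same route as the paper: a double induction on $(k,n)$, base cases $k=2,3$ via Lemmas~\ref{kequal2} and~\ref{kequal3}, the bound $|E'| \le 2n$ (resp.\ $3n+2$) from the outer hypothesis combined with $|CV_\ell|\cdot \alpha'_{k-2}(|CV_\ell|) \le \lfloor n/(\ell+1)\rfloor \cdot \ell \le n$, the bound $|E''|\le 2n$ from Lemma~\ref{bridges}, and the crucial identity $\alpha'_k(\ell)=\alpha'_k(n)-2$ to fold $\sum_i |E_i|$ back into the target. Your hedging about the constants is unnecessary: in the even case one gets exactly $2n+2n+2n(\alpha'_k(n)-2)=2n\alpha'_k(n)$ with no slack needed; in the odd case the paper uses the simpler two-way split $|R_i|\le 1$ versus $|R_i|\ge 2$ (rather than the three-way split of Lemma~\ref{kequal3}), which gives $\sum_i |E_i| \le 3n(\alpha'_k(n)-2)+n$ and hence exactly $(3n+2)+2n+3n(\alpha'_k(n)-2)+n = 3n\alpha'_k(n)+2$, so no transitional range needs to be handled by hand.
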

{\bf Remark:} By Lemma \ref{finallyreal}, for all $k \ge 2$ and $n \ge 0$, $\alpha'_k(n) \le 2\alpha_k(n) + 4$. Hence,
$F_k(n)= O(n \alpha_k(n))$.
\begin{proof}
We first give the proof for even values of $k$. 
The proof is by double induction on $k$ and $n$.  
Let $T$ be a pruned tree with required-size $n$ for which the edge set
$E$ that is computed by algorithm $Tree1Spanner((T,rt),n,k)$ has $F_{k}(n)$ edges.
\\The case $n=0$ is trivial. Also,
the case $k=2$ follows from Lemma \ref{kequal2}. \\We henceforth assume that $n \ge 1$ and $k \ge 4$.
By Lemma \ref{compa}, every non-empty pruned tree is compact. Hence, $|V(T)| \le 2|R(T)|-1 = 2n-1$, and so $|E(T)| = |V(T)|-1 \le 2n-2$.
\\If $n=1$, then $F_{k}(n) ~=~ |E| ~=~ |E(T)| ~\le~ 2n-2 ~=~ 0$.
Also, $\alpha'_{k}(1) = 0$.
Hence, $F_{k}(n) = 0 = 2n \alpha'_{k}(n)$. 
\\Suppose next that $2 \le n \le k+1$. In this case the edge set $E$ returned by the algorithm 
contains at most one more edge in addition to the edge set $E(T)$ of the input tree $T$,
and so $F_{k}(n) ~=~|E| ~\le~ |E(T)|+1 ~\le~ 2n-1$.
Hence, $F_{k}(n) \le 2n-1 \le  2 n \alpha'_{k}(n)$,
as $\alpha'_{k}(n) \ge \alpha_{k}(n) \ge 1$, for all $n \ge 2$ and $k \ge 4$.
\\\emph{Induction Step:} We assume that for an arbitrary pair $(k,n)$, $n \ge k + 2  \ge 6$,
the statement holds for all pairs $(k',n')$,
with either $k' < k$ or both $k' = k$ and $n' < n$, and prove it for the pair $(k,n)$.
\\The number of edges in the set $E'$ that is computed at the second step of the algorithm is less
than or equal to $F_{k-2}(|CV_\ell|)$. Since $n \ge 6$, it holds that $\ell = \alpha'_{k-2}(n) \ge \alpha_{k-2}(n) \ge 1$.
By the second assertion of Lemma \ref{cen}, we have $|CV_{\ell}| \le \lfloor \frac{n}{\ell+1} \rfloor < n$. 
Since the function $\alpha'_{k-2}$ is monotone non-decreasing, $\alpha'_{k-2}(|CV_{\ell}|) \le \alpha'_{k-2}(n) = \ell$.
Therefore, by the induction hypothesis for the pair $(k-2,|CV_\ell|)$, 
$$|E'| ~\le~ F_{k-2}(|CV_\ell|) ~\le~ 2 |CV_{\ell}|\alpha'_{k-2}(|CV_{\ell}|) ~\le~
2\left\lfloor \frac{n}{\ell+1} \right\rfloor \ell ~\le~ 2n.$$
\\By Lemma \ref{bridges}, the number of edges in the set $E''$ that is computed at the fourth step of
the algorithm is less than or equal to $2n$. 
\\Let $i$ be an index in $[g] $, and consider the edge set $E_i$ that is computed at the fifth step
of the algorithm. 
We have $|E_{i}| \le F_{k}(|R_i|)$. 
The first assertion of Lemma \ref{cen} and the second assertion of Lemma \ref{base'real} imply that 
$|R_i| \le \ell = \alpha'_{k-2}(n) < n$. Since the function $\alpha'_{k}$ is monotone non-decreasing, 
the induction hypothesis for the pair $(k,|R_i|)$ implies that
$|E_i| \le 2 |R_i| \alpha'_{k}(\ell)$. Since $n \ge k+2$, we have $\alpha'_{k}(n) = 2+\alpha'_{k}(\alpha'_{k-2}(n))
= 2 + \alpha'_{k}(\ell)$. Also, notice that $\sum_{i=1}^g |R_i| \le |R| = n$. 
It follows that $$\sum_{i=1}^g |E_i| ~\le~ \sum_{i=1}^g 2|R_i| \alpha'_{k}(\ell)
~=~ \sum_{i=1}^g 2|R_i| (\alpha'_{k}(n) -2) \le 2n(\alpha'_{k}(n) -2).$$
Altogether,
\begin{eqnarray*}F_{k}(n) &=& |E| ~=~ |E'| + |E''| + \sum_{i=1}^g |E_i| ~\le~
2n + 2n + 2n(\alpha'_{k}(n) -2) ~=~ 2n \alpha'_{k}(n). 
\end{eqnarray*}

We next prove the lemma for odd values of $k$. 
The proof is, again, by double induction on $k$ and $n$. 
Let $T$ be a pruned tree with required-size $n$ for which the edge set
$E$ that is computed by algorithm $Tree1Spanner((T,rt),n,k)$ has $F_{k}(n)$ edges.
\\The case $n=0$ is trivial. Also,
the case $k=3$ follows from Lemma \ref{kequal3}. 
\\We henceforth assume that $n \ge 1$ and $k \ge 5$. 
By Lemma \ref{compa}, every non-empty pruned tree is compact. Hence, $|V(T)| \le 2|R(T)|-1 = 2n-1$, and so $|E(T)| = |V(T)|-1 \le 2n-2$.
\\If $n\le 2$, then $F_{k}(n) ~=~ |E| ~=~ |E(T)| ~\le~ 2n-2 ~\le~ 2 ~\le~ 3n \alpha'_{k}(n) + 2$.
\\Suppose next that $3 \le n \le k+1$. In this case the edge set returned by the algorithm 
consists of at most one more edge in addition to the edge set $E(T)$ of the input tree $T$,
and so $F_{k}(n) ~=~ |E| ~\le~ |E(T)|+1 ~\le~ 2n-1$.
Hence, $F_{k}(n) \le 2n-1 \le  3 n \alpha'_{k}(n) + 2$,
as $\alpha'_{k}(n) \ge \alpha_{k}(n) \ge 1$, for all $n \ge 3$ and $k \ge 5$.
\\\emph{Induction Step:} We assume that for an arbitrary pair $(k,n)$, $n \ge k+2 \ge 7$,
the statement holds for all pairs $(k',n')$,
with either $k' < k$ or both $k' = k$ and $n' < n$, and prove it for the pair $(k,n)$.
\\The number of edges
in the set $E'$ that is computed at the second step of the algorithm is less than or equal to
$F_{k-2}(|CV_\ell|)$. Since $n \ge 7$, it holds that $\ell = \alpha'_{k-2}(n) \ge \alpha_{k-2}(n) \ge 1$.
By the second assertion of Lemma \ref{cen}, we have $|CV_{\ell}| \le \lfloor \frac{n}{\ell+1} \rfloor < n$.  
Since the function $\alpha'_{k-2}$ is monotone non-decreasing, $\alpha'_{k-2}(|CV_{\ell}|) \le \alpha'_{k-2}(n) = \ell$.
Therefore, by the induction hypothesis for the pair $(k-2,|CV_\ell|)$,  
$$|E'| ~\le~ F_{k-2}(|CV_\ell|) ~\le~ 3|CV_{\ell}| \alpha'_{k-2}(|CV_{\ell}|) + 2
~\le~ 3\left\lfloor \frac{n}{\ell+1} \right\rfloor \ell + 2 ~\le~ 3n+2.$$
\\By Lemma \ref{bridges}, the number of edges in the set $E''$ that is computed at the fourth step of
the algorithm is less than or equal to $2n$.
\\Let $I^-_1$ (respectively, $I^+_2$) be the set of all indices $i$, such that $i \in [g]$ and $|R_i| \le 1$ (resp., $|R_i| \ge 2$).
Clearly, $I^-_1 \cup I^+_2 = [g]$.
Observe that $$n ~=~ |R| ~\ge~ \sum_{i=1}^g |R_i| ~=~ \sum_{i \in I^-_1} |R_i| + \sum_{i \in I^+_2} |R_i| ~\ge~ 2|I^+_2|.$$
Let $i$ be an index in $[g]$, and consider the edge set $E_i$ that is computed at the fifth step
of the algorithm.   
We have $|E_{i}| \le F_{k}(|R_i|)$. 
Observe that if $i \in I^-_1$, we have $|E_i| = |E(T'_i)| = 0$.
Suppose next that $i \in I^+_2$. 
The first assertion of Lemma \ref{cen} and the second assertion of Lemma \ref{base'real} imply that 
$|R_i| \le \ell = \alpha'_{k-2}(n) < n$.
Since the function $\alpha'_{k}$ is monotone non-decreasing, the induction hypothesis for the pair $(k,|R_i|)$ 
implies that $|E_i|
\le 3|R_i| \alpha'_{k}(\ell) + 2$. 
Since $n \ge k+2$, we have $\alpha'_{k}(n) = 2+\alpha'_{k}(\alpha'_{k-2}(n))
= 2 + \alpha'_{k}(\ell)$. 
It follows that
\begin{eqnarray*}\sum_{i=1}^g |E_i| ~=~ \sum_{i \in I^+_2} |E_i| &\le& \sum_{i \in I^+_2} \left(3|R_i| \alpha'_{k}(\ell) + 2\right)
~=~ \sum_{i \in I^+_2} 3|R_i| (\alpha'_{k}(n) -2) + 2|I^+_2| \\ &\le& 3n (\alpha'_{k}(n) -2) + n.
\end{eqnarray*}
Altogether,
\begin{eqnarray*}F_{k}(n) &=& |E| ~=~ |E'| + |E''| + \sum_{i=1}^g |E_i| ~\le~
(3n +2) + 2n + 3n (\alpha'_{k}(n) -2) + n ~=~ 3n \alpha'_{k}(n) +2. 
\end{eqnarray*}
\QED
\end{proof}

Next, we demonstrate that $G_T$ is a 1-spanner for $T$ with diameter at most $k$.
\begin{lemma} \label{thm2}
Let $k \ge 2$ and $n \ge 0$ be two arbitrary integers. For any pruned tree $(T,rt)$ with required-size $n$,
the $T$-monotone diameter $\Lambda(G_T)$ of 
the graph $G_T = (V(T),E)$ that is computed by Algorithm $Tree1Spanner((T,rt),n,k)$
is at most $k$.
\end{lemma}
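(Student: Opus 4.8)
The plan is to argue by a double induction: the outer induction is on $k$, and for each fixed $k$ the inner induction is on the required-size $n$. The outer base cases $k=2$ and $k=3$ are precisely the diameter statements of Lemma~\ref{kequal2} and Lemma~\ref{kequal3}, so it suffices to handle $k\ge 4$ assuming the claim for $k-2$. The inner base cases $0\le n\le k+1$ follow directly from the opening lines of the algorithm. When $n\le k$ the algorithm returns $E(T)$, and Lemma~\ref{hop} gives $\Lambda(G_T)=\Lambda(T)\le |R|=n\le k$. When $n=k+1$: if $rt$ has exactly two children the algorithm adds the edge $(c_1,c_2)$, and Corollary~\ref{corhop} gives $\Lambda(G_T)\le |R|-1=k$; if $rt$ does not have exactly two children the algorithm returns $E(T)$, and Lemma~\ref{hop} forbids $\Lambda(T)=|R|$ in this case, hence $\Lambda(T)\le k$.

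For the inner step, assume $n\ge k+2$ and $k\ge 4$, set $\ell=\alpha'_{k-2}(n)$, and fix an arbitrary pair $u,v$ of required vertices of $T$; the goal is to produce a $T$-monotone $u$--$v$ path in $G_T$ with at most $k$ edges. Each required vertex lies either in the cut set $CV_\ell$ or in exactly one subtree $T_i\in T\setminus CV_\ell$, and the case analysis follows this split. The main routing tool, used in every case, is the claim that \emph{any two vertices of $CV_\ell$ are joined in $G_T$ by a $T$-monotone path of at most $k-2$ edges, all drawn from $E'$}: indeed $R(\tau')=CV_\ell$, the recursive call producing $E'$ has parameter $k-2$ (so by the outer induction hypothesis, or by Lemma~\ref{kequal2}/\ref{kequal3} when $k-2\in\{2,3\}$, the $\tau'$-monotone diameter of $G_{\tau'}$ is at most $k-2$), and $\tau'$ is $\tau$-monotone preserving with $\tau$ a copy of $T$, so a $\tau'$-monotone path between cut vertices is a subpath of the corresponding path in $T$. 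The second tool is that if a required vertex $x\notin CV_\ell$ lies in $T_i$, then for any vertex $y\notin V(T_i)$ the last cut vertex on $P_T(y,x)$ before $x$ is a border vertex of $T_i$, and hence is joined to $x$ by an edge of $E''$.

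With these tools the four cases resolve as follows. If $u,v\in CV_\ell$, the routing tool gives a path with at most $k-2$ edges. If $u,v$ lie in a common subtree $T_i$, then $R(T'_i)=R(T_i)\ni u,v$ by Lemma~\ref{description} and $|R_i|\le\ell<n$ by Lemma~\ref{cen}(1) and Lemma~\ref{base'real}(2), so the inner induction hypothesis supplies a $T'_i$-monotone $u$--$v$ path in $G_{T'_i}\subseteq G_T$ with at most $k$ edges, which is $T$-monotone because $T'_i$ is $T_i$-monotone preserving and $T_i$ is a subtree of $T$. If $u\in CV_\ell$ and $v\in T_i$ (and symmetrically), let $w$ be the border vertex of $T_i$ on $P_T(u,v)$ closest to $v$; connect $u$ and $w$ inside $E'$ using at most $k-2$ edges (or zero edges if $u=w$) and append $(w,v)\in E''$, for a total of at most $k-1$ edges. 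If $u\in T_i$ and $v\in T_j$ with $i\ne j$, let $w$ and $w'$ be the border vertices of $T_i$ and $T_j$ on $P_T(u,v)$ closest to $u$ and to $v$ respectively; take $(u,w)\in E''$, connect $w$ and $w'$ inside $E'$ with at most $k-2$ edges (or zero if $w=w'$), and append $(w',v)\in E''$, for a total of at most $k$ edges. In every case the intermediate vertices $w,w'$ together with the vertices of the $E'$-subpath all lie on a common interval of $P_T(u,v)$ and occur in the order induced by $P_T(u,v)$, so the assembled path is a subpath of $P_T(u,v)$, i.e.\ $T$-monotone, and the edge counts never exceed $k$.

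The part that needs genuine care is precisely this last verification of $T$-monotonicity, as opposed to the edge-count bookkeeping (``at most $k-2$ interior edges plus at most two attaching edges''), which follows the same pattern already carried out in Lemmas~\ref{kequal2} and~\ref{kequal3}. The subtlety is that the $E'$-edges live in the pruned copy $\tau'$ rather than in $T$, and the $E''$-edges are shortcut edges absent from $T$; $T$-monotonicity is recovered only by combining the monotone-preserving property of $\tau'$ and of each $T'_i$ (established in Section~\ref{s:prune}) with the observation that the border vertex through which we route is exactly the cut vertex where $P_T(u,v)$ crosses out of the relevant subtree, which is what keeps the vertex sequence of the constructed path a subsequence of $P_T(u,v)$.
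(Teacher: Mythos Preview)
Your proof is correct and follows essentially the same approach as the paper's: double induction on $k$ and $n$, base cases $k\in\{2,3\}$ via Lemmas~\ref{kequal2} and~\ref{kequal3}, base cases $n\le k+1$ via Lemma~\ref{hop} and Corollary~\ref{corhop}, the key routing claim that any two cut vertices are joined by a $T$-monotone path of at most $k-2$ edges in $E'$ (using the induction hypothesis on $\tau'$ and the monotone-preserving property), and then the same case split according to which of $u,v$ lie in $CV_\ell$ or in a common/different subtree. Your presentation differs only cosmetically (you merge the symmetric cases into four rather than five, and you spell out the $n=k+1$ base case and the final $T$-monotonicity verification more explicitly), but the argument is the same.
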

\begin{proof}
The proof is by double induction on $k$ and $n$.
\\The cases $k=2$ and $k=3$ follow from Lemmas \ref{kequal2} and \ref{kequal3}, respectively.
\\We henceforth assume that $k \ge 4$.
\\For $0 \le n \le k+1$, the correctness of the statement
follows from Lemma \ref{hop} and Corollary \ref{corhop}.
\\\emph{Induction Step:} We assume that for an arbitrary pair $(k,n)$, $n \ge k+2 \ge 6$,
the statement holds for all pairs $(k',n')$,
with either $k' < k$ or both $k' = k$ and $n' < n$, and prove it for the pair $(k,n)$.
\\By Lemma \ref{description}, 
the tree $\tau'$ that is constructed at the second step of the algorithm satisfies $R(\tau') = R(\tau) = CV_\ell$.
By the induction hypothesis for the pair $(k-2,|CV_\ell|)$, the $\tau'$-monotone diameter of
the graph $G_{\tau'} = (V(\tau'),E')$ that is computed at the second step of the algorithm is at most $k-2$.
Since $\tau'$ is $\tau$-monotone-preserving
 and $\tau$ is a copy of $T$, it follows that 
 there is a $T$-monotone path in $G_T$ between any pair of vertices of $CV_\ell$ that consists of at most $k-2$ edges.
\\Next, we show that for an arbitrary pair $u,v$ of required vertices, there is a $T$-monotone
path in $G_T$ that consists of at most $k$ edges. The analysis splits into five cases.
\\\emph{Case 1: $u,v \in CV_\ell$.} 
In this case there is a $T$-monotone path in $G_T$ between $u$ and $v$
that consists of at most $k-2$ edges.
\\\emph{Case 2: $u \in CV_\ell$ and $v \in R(T) \setminus CV_\ell$}. 
Let $w$ be the first vertex of $CV_\ell$ on the path in $T$ from $v$ to $u$.
Note that $w$ is a border vertex of the subtree in $T \setminus CV_\ell$ that has $v$ as a vertex,
and so the edge $(w,v)$ belongs to $E''$, and thus also to $G_T$.
If $u=w$, then $(u,v)$ is an edge in $G_T$, which forms a $T$-monotone path. Otherwise $u \ne w$.
Since both $w$ and $u$ belong to $CV_\ell$, there is a $T$-monotone path in $G_T$ between $u$ and $w$ that consists of at most $k-2$
edges. Together with the edge $(w,v)$, we get a $T$-monotone path between $u$ and $v$ that consists of
at most $k-1$ edges.
\\\emph{Case 3: $v \in CV_\ell$ and $u \in R(T) \setminus CV_\ell$}. This case is symmetrical to case 2.
\\\emph{Case 4: $u \in T_i$, $v \in T_j$, for two distinct subtrees $T_i$ and $T_j$ in $T \setminus CV_\ell$.}
Let $w$ and $w'$ be the first and last vertices of $CV_\ell$ on the path in $T$ from $u$ to $v$,
respectively. Note that $w$ is a border vertex of $T_i$ and $w'$ is a border vertex of $T_j$,
and so both edges $(u,w)$ and $(w',v)$ belong to $E''$, and thus also to $G_T$. If $w = w'$, then
$(u,w,v)$ is a $T$-monotone path between $u$ and $v$ in $G_T$ that consists of two edges.
Otherwise, $w \ne w'$. Since both $w$ and $w'$ belong to $CV_\ell$, the graph $G_T$ contains a $T$-monotone path between $w$ and $w'$
that consists of at most $k-2$ edges.
Together with the edges $(u,w)$ and $(w',v)$, we
get a $T$-monotone path between $u$ and $v$ that consists of at most $k$ edges. 
\\\emph{Case 5: $u,v \in T_i$, for some subtree $T_i$ in $T \setminus CV_\ell$.}
The first assertion of Lemma \ref{cen} and the second assertion of Lemma \ref{base'real} imply that 
the required size $|R_i| = |R(T_i)|$ of $T_i$ is at most $\ell = \alpha'_{k-2}(n) < n$. By Lemma \ref{description}, the tree $T'_i$ that is computed at the fifth step of the algorithm
satisfies $R(T'_i) = R(T_i)$.
Hence, by the induction hypothesis for the pair $(k,|R_i|)$, the $T'_i$-monotone diameter of the 
graph $G_{T'_i} = (V(T'_i),E_i)$ that is computed at the fifth step of the algorithm is at most
$k$. It follows that
$u$ and $v$ are connected in $G_T$ by a $T'_i$-monotone path that consists of at most $k$ edges.
However, since $T'_i$ is $T_i$-monotone preserving, this path is also $T_i$-monotone, and thus also $T$-monotone.
\QED
\end{proof}

Finally, we bound the running time of the algorithm $Tree1Spanner((T,rt),n,k)$.
\begin{lemma} \label{thm3}
Let $k \ge 2$ and $n \ge 0$ be two arbitrary integers, and denote by $C_k(n)$ the worst-case running time
of Algorithm 
$Tree1Spanner((T,rt),n,k)$, where $T$ ranges over all pruned trees $T$ with required-size $n$.
Then $C_k(n) = O(n \alpha_k(n))$.
\end{lemma}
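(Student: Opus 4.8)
The plan is to set up and solve a recurrence for $C_k(n)$ that runs in exact parallel to the edge-count recurrence analyzed in Lemma~\ref{edgebound}, with running times in place of edge counts. First I would go through the six steps of Algorithm $Tree1Spanner((T,rt),n,k)$ and check that, apart from the recursive calls it spawns, every step costs $O(n)$. Since $(T,rt)$ is pruned it is compact (Lemma~\ref{compa}), so $|V(T)|\le 2n-1=O(n)$; hence $Decomp$ at Step~1 runs in $O(n)$ time. At Step~2, forming the copy $\tau$, recolouring it, and running $Prune((\tau,rt))$ all cost $O(|V(T)|)=O(n)$, so the only super-linear contribution of this step (for $k\ge 4$) is the recursive call, costing $C_{k-2}(|CV_\ell|)$; for $k=2,3$ this step costs $O(1)$ and $O(|CV_\ell|^2)=O(n)$ respectively. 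Step~3 extracts $T_1,\dots,T_g$ in $O(n)$ time; Step~4 builds $E''$ in $O(n)$ time by Lemma~\ref{bridges}; at Step~5 the prunings $Prune((T_i,rt(T_i)))$ cost $\sum_i O(|V(T_i)|)=O(n)$ in total, leaving the recursive calls, which cost $\sum_{i=1}^g C_k(|R_i|)$; and Step~6 assembles $E=E'\cup E''\cup\bigcup_i E_i$, which costs only $O(g)=O(n)$ provided the edge lists $E',E'',E_1,\dots,E_g$ are spliced together rather than recopied. Collecting these estimates, for $k\ge 4$ and $n\ge k+2$ we obtain
\[
C_k(n) \;\le\; A\,n \;+\; C_{k-2}(|CV_\ell|) \;+\; \sum_{i=1}^{g} C_k(|R_i|),
\]
for an absolute constant $A$, where $\ell=\alpha'_{k-2}(n)$, $|CV_\ell|\le\lfloor n/(\ell+1)\rfloor$, and $|R_i|\le\ell<n$ with $\sum_i|R_i|\le n$ (Lemmas~\ref{cen} and~\ref{base'real}). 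The base cases are: $C_k(n)=O(n)$ whenever $n\le k+1$ (the algorithm returns $E(T)$, possibly plus one edge, and $|E(T)|=O(n)$), while the cases $k=2$ and $k=3$ are exactly Lemmas~\ref{kequal2} and~\ref{kequal3}.

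I would then solve the recurrence by double induction on the pair $(k,n)$, proving $C_k(n)\le D\,n\bigl(\alpha'_k(n)+1\bigr)$ for a constant $D$ chosen large enough to dominate $A$ and the implicit constants in the base cases. In the inductive step ($k\ge 4$, $n\ge k+2$) one has $\ell\ge\alpha_{k-2}(n)\ge 1$, so $|CV_\ell|\le n/(\ell+1)$, and by monotonicity $\alpha'_{k-2}(|CV_\ell|)\le\alpha'_{k-2}(n)=\ell$; hence the induction hypothesis for $(k-2,|CV_\ell|)$ gives
\[
C_{k-2}(|CV_\ell|) \;\le\; D\,|CV_\ell|\bigl(\alpha'_{k-2}(|CV_\ell|)+1\bigr) \;\le\; D\cdot\frac{n}{\ell+1}\cdot(\ell+1) \;=\; D\,n .
\]
For the subtrees, $|R_i|\le\ell$ and the identity $\alpha'_k(n)=2+\alpha'_k(\ell)$ (valid since $n\ge k+2$) give $\alpha'_k(|R_i|)+1\le\alpha'_k(\ell)+1=\alpha'_k(n)-1$, so the induction hypothesis for the pairs $(k,|R_i|)$ yields $\sum_i C_k(|R_i|)\le D\bigl(\alpha'_k(n)-1\bigr)\sum_i|R_i|\le D\,n\bigl(\alpha'_k(n)-1\bigr)$. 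Substituting,
\[
C_k(n) \;\le\; A\,n + D\,n + D\,n\bigl(\alpha'_k(n)-1\bigr) \;=\; A\,n + D\,n\,\alpha'_k(n) \;\le\; D\,n\bigl(\alpha'_k(n)+1\bigr),
\]
using $D\ge A$. Finally, $\alpha'_k(n)\le 2\alpha_k(n)+4$ by Lemma~\ref{finallyreal}, so $C_k(n)=O\bigl(n\,\alpha_k(n)\bigr)$, as claimed.

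The per-step $O(n)$ bounds and the arithmetic of the induction are routine; the points that actually need care are two. First, the ``horizontal'' recursion on $k$ must not accumulate: this is exactly where the sharp bound $|CV_\ell|\le\lfloor n/(\ell+1)\rfloor$ of Lemma~\ref{cen}, together with the choice $\ell=\alpha'_{k-2}(n)$, is used to make $C_{k-2}(|CV_\ell|)$ contribute only $O(n)$ per invocation rather than the much larger $O\bigl(n\,\alpha'_{k-2}(n)\bigr)$. Second, Step~6 must be implemented so that returning the edge set takes $O(n)$ time and not $O(|E|)$ time --- the caller splices the already-built lists $E',E'',E_1,\dots,E_g$ without touching their elements --- since otherwise the $O(n\,\alpha_k(n))$-size output alone would already exhaust the time budget at the top level, leaving nothing for the recursion. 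With these two observations in hand, the double induction goes through exactly as above.
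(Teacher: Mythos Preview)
Your proposal is correct and follows essentially the same route as the paper: derive the recurrence
\[
C_k(n) \;=\; O(n) + C_{k-2}(|CV_\ell|) + \sum_{i=1}^{g} C_k(|R_i|)
\]
and solve it by the same double induction used for $F_k(n)$ in Lemma~\ref{edgebound}. Your explicit ansatz $C_k(n)\le D\,n(\alpha'_k(n)+1)$ and the verification are fine; the paper simply says ``as in the proof of Lemma~\ref{edgebound}'' and omits the details you spell out.

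One point you overlook that the paper does address: Step~1 must compute $\ell=\alpha'_{k-2}(n)$ before calling $Decomp$, and it is not obvious that a single value $\alpha'_{k-2}(n)$ can be produced in $O(n)$ time, since the definition of $\alpha'_{k-2}$ unwinds through $\alpha'_{k-4},\alpha'_{k-6},\ldots$ down to $\alpha'_0$ or $\alpha'_1$. The paper handles this by remarking (with a reference to \cite{LaPo90} and Exercise~12.7 in \cite{NS07}) that the needed values of $\alpha'_k$ can be computed in $O(n)$ time; you should at least flag this as a nontrivial ingredient rather than fold it silently into ``Step~1 runs in $O(n)$ time.'' Conversely, your observation that Step~6 must splice the lists $E',E'',E_1,\dots,E_g$ rather than recopy them is a valid implementation detail that the paper leaves implicit.
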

\begin{proof}
Clearly, if $n \le k+1$, then $C_k(n) = O(1)$. We may henceforth assume that $n \ge k+2$. 
\\We remark that one can compute the values of the function $\alpha'_k = \alpha'_{k}(n)$ in $O(n)$ time, for all $k \ge 2$ and $n \ge k+2$.
These values can be computed similarly to the way the values of the function $\alpha_k = \alpha_k(n)$
were computed in \cite{LaPo90}. (See also Exercise 12.7 in \cite{NS07}; further details on this technical argument are omitted.) 
In particular, computing the value of $\alpha'_{k-2}(n)$ with which $\ell$ is assigned at the first step of the algorithm
can be carried out in $O(n)$ time. 
Also, an additional time of $O(n)$ suffices to compute the set $CV_\ell$ of cut vertices at the first step of the algorithm.
The computation of the edge set $E'$ at the second step of the algorithm starts by computing a copy $\tau$ of $T$,
which can be carried out in $O(n)$ time. Another $O(n)$ time is required to go over all the vertices of $\tau$
and color the vertices of 
$CV_\ell$ in black, and the remaining vertices in white.
Computing the pruning $\tau'$ of $\tau$ also requires $O(n)$ time.
Finally, the recursive call $Tree1Spanner((\tau',rt(\tau')),|CV_\ell|,k-2)$ requires at most $C_{k-2}(|CV_\ell|)$ time.
Overall, the time needed to compute the edge set $E'$ is bounded above by $O(n) + C_{k-2}(|CV_\ell|)$.
An additional amount of $O(n)$ time is needed to compute the subtrees $T_1,\ldots,T_g$ and the corresponding pruned subtrees $T'_1,\ldots,T'_g$ at the
third and fifth steps of the algorithm, respectively. 
By Lemma \ref{bridges},  computing 
the edge set $E''$ at the fourth step of the algorithm takes another $O(n)$ time.
Finally, the time needed to compute the 
edge sets $E_1,E_2,\ldots,E_g$ at the fifth step of the algorithm is
at most $\sum_{i=1}^g C_k(|R_i|)$. We obtain the recurrence
$C_k(n) = O(n) + C_{k-2}(|CV_\ell|) + \sum_{i=1}^g C_k(|R_i|),$
where $|CV_\ell| \le \left\lfloor \frac{n}{\ell+1} \right\rfloor$, $|R_i| \le \ell = \alpha'_{k-2}(n)$,
for each index $i \in [g]$, and $\sum_{i=1}^g |R_i| \le n$. 
Hence, as in the proof of 
Lemma \ref{edgebound}, it can be
shown that $C_k(n) = O(n \alpha'_k(n)) = O(n \alpha_k(n))$.
\QED
\end{proof}

Lemmas \ref{kequal2}, \ref{kequal3}, \ref{edgebound}, \ref{thm2} and \ref{thm3} imply Theorem \ref{kequalk}.


\section{Euclidean Sparse Spanners with Bounded Diameter} \label{s:upper} 
In this section we plug the 1-spanners for tree metrics from Section \ref{1span}
on top of the dumbbell trees of \cite{ADMSS95,NS07} to obtain our construction
of Euclidean spanners.  
\begin{theorem} (``Dumbbell Theorem'', Theorem 2 in \cite{ADMSS95}, Theorem 11.9.1 in \cite{NS07}) 
Given a set $S$ of $n$ points in
$\mathbb R^d$ and a parameter $\eps > 0$, a forest
$\mathcal F$ consisting of $O(1)$ rooted trees of size $O(n)$ each can be built in $O(n \log n)$ time, having
the following properties:
1) For each tree in $\mathcal F$, there is a 1-1 correspondence
between the leaves of this tree and the points of $S$. 2) Each
internal vertex in the tree has a unique representative point, which
can be selected arbitrarily from the points in any of its descendant
leaves. 3) For any two points $u,v \in S$, there is a tree in
$\mathcal F$, so that the path formed by walking from representative
to representative along the unique path in that tree between $u$ and $v$, is a $(1+\eps)$-spanner path.
\end{theorem}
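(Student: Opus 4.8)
The plan is to construct $\mathcal F$ from a \emph{well-separated pair decomposition} (WSPD) of $S$, following Arya et al. First I would fix a separation parameter $s = s(\eps,d)$ (of order $1/\eps$, tuned at the end) and compute, via a fair-split tree, a WSPD $\{(A_1,B_1),\dots,(A_m,B_m)\}$ of $S$ with separation $s$; this takes $O(n\log n + s^d n) = O(n\log n)$ time and produces $m = O(s^d n) = O(n)$ pairs. I would then replace each pair $(A_i,B_i)$ by a \emph{dumbbell}: two balls $H_i \supseteq A_i$ and $K_i \supseteq B_i$ of radii $\Theta(\mathrm{diam}(A_i))$ and $\Theta(\mathrm{diam}(B_i))$ joined by a segment (the handle), together with a designated representative point inside each head chosen from $A_i$ (resp.\ $B_i$).

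The combinatorial core is to partition the $m$ dumbbells into $O(1)$ groups, each of which supports a tree. Declare two dumbbells \emph{comparable} when a head of one lies deeply nested inside a head or handle region of the other; standard packing bounds in $\mathbb R^d$ show that, among dumbbells of comparable size meeting a fixed region, only $O(1)$ are pairwise incomparable, which lets a constant number of coloring rounds split the dumbbells into $c = c(\eps,d) = O(1)$ groups in which the containment relation is essentially laminar. Within one group I form the \emph{dumbbell tree}: put a node for each head, make a head the child of the smallest head-or-handle region strictly containing it, hang each point of $S$ as a leaf under the smallest head covering it, add a root, and let every internal node inherit a representative from one of its descendant leaves. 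Properties (1) and (2) then hold by construction; since each such tree has one internal node per dumbbell in its group plus $n$ leaves and $m = O(n)$, each tree has size $O(n)$, and the total running time is dominated by the $O(n\log n)$ WSPD step.

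The main obstacle is property (3). Given $u,v \in S$, pick the (essentially unique) WSPD pair $(A_i,B_i)$ with $u \in A_i$, $v \in B_i$, and let $T$ be the dumbbell tree of the group containing dumbbell $i$; the candidate path walks from $u$ up $T$, representative to representative, to the representative of $H_i$, crosses the handle to the representative of $K_i$, and descends to $v$. I must show its total length is at most $(1+\eps)\|u-v\|$. This is a telescoping/charging argument: each ``up'' or ``down'' step between a node and its parent contributes a detour bounded by a constant times the head diameter, consecutive nested heads shrink by at least a fixed factor, so the detours along the $u$-to-$H_i$ leg and the $K_i$-to-$v$ leg each form a geometric series summing to $O(\mathrm{diam}(A_i) + \mathrm{diam}(B_i))$; the separation property gives $\mathrm{diam}(A_i) + \mathrm{diam}(B_i) \le (2/s)\|u-v\|$, while the handle length is within a $1 + O(1/s)$ factor of $\|u-v\|$, so everything is $\le \eps\|u-v\|$ once $s$ is large enough. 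Two auxiliary facts are needed and are where I expect the friction: (a) the walk is monotone along the tree path, so no edge is traversed twice (this is exactly why the deep-nesting relation must be chosen to match the detour estimate), and (b) the shared representatives never cause the charged detours to double-count — both follow from the laminar structure of each group, but pinning down the precise nesting threshold that makes the grouping bound, the geometric decay, and the monotonicity hold simultaneously is the delicate part; the remaining constants are routine.
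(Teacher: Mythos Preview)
The paper does not prove this theorem at all: it is quoted verbatim as a known result from \cite{ADMSS95} and \cite{NS07} and used as a black box to feed the $1$-spanner construction of Section~\ref{1span}. There is therefore no ``paper's own proof'' to compare your proposal against.

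That said, your sketch does follow the architecture of the original proof in \cite{ADMSS95,NS07}: build a WSPD via a fair-split tree, associate a dumbbell with each well-separated pair, use a packing argument to color the dumbbells with $O(1)$ colors so that each color class has a laminar nesting structure, and then hang the points as leaves of the resulting dumbbell trees. The telescoping/geometric-series argument you outline for property~(3) is also the right idea. The places you flag as ``friction'' --- getting the nesting threshold to simultaneously yield the $O(1)$ coloring bound and the geometric decay of head diameters along a root-to-leaf path --- are indeed where the work lies in \cite{ADMSS95,NS07}, and your sketch does not actually discharge them; in particular, the claim that consecutive nested heads shrink by a fixed factor is not automatic from the WSPD alone and requires the length-grouping step (bucketing dumbbells by handle length into geometrically spaced classes) that you have folded into the coloring without making explicit. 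If you were writing this proof out in full rather than citing it, that step would need to be spelled out.
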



Let $S$ be a set of $n$ points in $\mathbb R^d$, let $\mathcal F$ be the forest of dumbbell trees given by the Dumbbell Theorem,
and let $k \ge 2$ be an arbitrary integer.
For each
dumbbell tree $T \in \mathcal F$, 
let $G_T$ be the 1-spanner for $T$ from Theorem \ref{kequalk} with diameter at most $k$ and $O(n \alpha_k(n))$ edges.
Our construction of Euclidean spanners is defined to be the geometric graph $\mathcal G_k(n)$  
implied by the collection of all the graphs 
$G_T$, $T \in \mathcal F$.

Since each graph $G_T$ has only $O(n \alpha_k(n))$ edges, 
the collection of $O(1)$ such graphs will also have at most $O(n \alpha_k(n))$ edges.

By the Dumbbell Theorem, the forest $\mathcal F$ of dumbbell trees can be built in $O(n \log n)$ time.
(In particular, the dumbbell trees of \cite{NS07} can be built within time $O(n \log n)$ in the
algebraic computation-tree model.)
By Theorem \ref{kequalk}, we can compute each of the graphs $G_T$ within time $O(n \alpha_k(n)) = O(n \log n)$.
Since there is a constant number of such graphs,  we get that the overall time needed to
compute our construction $\mathcal G_k(n)$ of Euclidean spanners is $O(n \log n)$.

Finally, we show that $\mathcal G_k(n)$ is a $(1+\eps)$-spanner for $S$ with diameter at most $k$. 
Consider an arbitrary pair of points $u,v \in S$.
By the Dumbbell Theorem, there is a dumbbell tree $T \in \mathcal F$, so that the geometric
path $\mathcal P_T(u,v)$ implied by the unique
path $P_T(u,v)$ between $u$ and $v$ in $T$ is a
$(1+\eps)$-spanner path. Theorem \ref{kequalk} implies that
there is a 1-spanner path $P$ for $T$ between $u$ and $v$ in $G_T$ 
that consists of at most $k$ edges. By the
triangle inequality, the weight of the corresponding geometric path
$\mathcal P$ in
 $\mathcal G_k(n)$ is no greater than the weight of $\mathcal P_T(u,v)$.
Hence, $\mathcal P$ is  a $(1+\eps)$-spanner path for $u$ and $v$ that consists of at most $k$ edges.

\begin{corollary}
For any set of $n$ points in $\mathbb R^d$, any integer $k \ge 2$
and a number $\eps
> 0$, we can compute in $O(n \log n)$ time
a $(1+\eps)$-spanner with diameter at most $k$ and $O(n \alpha_k(n))$ edges.
\end{corollary}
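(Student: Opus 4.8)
The plan is to assemble the Corollary from Theorem~\ref{kequalk} and the Dumbbell Theorem, following the four-step scheme outlined in Section~\ref{overview}. First I would invoke the Dumbbell Theorem to build, in $O(n \log n)$ time, a forest $\mathcal F$ of $O(1)$ rooted trees, each of size $O(n)$, with the property that for every pair $u,v \in S$ there is a tree $T \in \mathcal F$ for which the geometric path obtained by walking from representative point to representative point along $P_T(u,v)$ is a $(1+\eps)$-spanner path.

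Next, for each dumbbell tree $T \in \mathcal F$ I would designate its leaves --- which by property~(1) of the Dumbbell Theorem are in bijection with the points of $S$ --- as the required vertices and its internal vertices as Steiner vertices. Invoking the procedure $Prune$ of Section~\ref{s:prune} transforms $T$ in linear time into a pruned, $T$-monotone preserving tree $T'$ with the same set of required vertices, and then Theorem~\ref{kequalk}, run with parameter $k$, produces in $O(n \alpha_k(n))$ time a $1$-spanner $G_T$ for $T$ whose $T$-monotone diameter is at most $k$ and which has $O(n \alpha_k(n))$ edges. I would then realize each edge $(x,y)$ of each $G_T$ geometrically, using the representative points guaranteed by property~(2), and define $\mathcal G_k(n)$ to be the geometric graph on $S$ that is the union over all $T \in \mathcal F$ of these realized edge sets.

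For the counting, each $G_T$ contributes $O(n \alpha_k(n))$ edges and $|\mathcal F| = O(1)$, so $\mathcal G_k(n)$ has $O(n \alpha_k(n))$ edges. For the running time, building the forest costs $O(n \log n)$, and since $k \ge 2$ gives $\alpha_k(n) \le \alpha_2(n) = \lceil \log n \rceil$ by Lemma~\ref{basereal}, each of the $O(1)$ calls to Theorem~\ref{kequalk} costs $O(n \alpha_k(n)) = O(n \log n)$; hence the total is $O(n \log n)$. For correctness, fix $u,v \in S$ and let $T \in \mathcal F$ be a tree whose representative walk along $P_T(u,v)$ is a $(1+\eps)$-spanner path. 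Since $\Lambda_T(G_T) \le k$, the graph $G_T$ contains a $T$-monotone path $P$ between $u$ and $v$ with at most $k$ edges; being $T$-monotone, $P$ is a subsequence of the vertices of $P_T(u,v)$, so by the triangle inequality the geometric path obtained by joining consecutive representative points along $P$ has weight no greater than that of the representative walk along $P_T(u,v)$, i.e. at most $(1+\eps)\|u-v\|$, and uses at most $k$ edges of $\mathcal G_k(n)$.

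I do not expect a serious obstacle here: given Theorem~\ref{kequalk} and the Dumbbell Theorem, the result is essentially a bookkeeping assembly. The only points that need a little care are that Theorem~\ref{kequalk} is stated for pruned trees, so the $Prune$ step must precede it; that it is the $T$-monotone diameter bound, rather than the mere weighted $1$-spanner property, that yields a bounded-hop geometric path; and that shortcutting a representative walk to a $T$-monotone sub-walk cannot increase its Euclidean weight, which is exactly the triangle inequality applied to the representative points.
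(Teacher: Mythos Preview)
Your proposal is correct and follows essentially the same approach as the paper: build the dumbbell forest, apply Theorem~\ref{kequalk} to each tree, and take the union, with the triangle inequality giving correctness and the $O(1)$ tree count giving the edge and time bounds. The only difference is that you make explicit the designation of leaves as required vertices and the preliminary call to $Prune$, both of which the paper leaves implicit in its appeal to Theorem~\ref{kequalk}.
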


\section{Lower Bounds for Euclidean Steiner Spanners} \label{s:lower}
In this section we extend the lower bound of \cite{CG06} to Euclidean Steiner spanners.
\begin{theorem} \label{mainii}
Let $X$ be a set of $n$ points on the $x$-axis,
and let $H = (V,E)$, $X \subseteq V$, be a Euclidean Steiner $t$-spanner for $X$, with $t \ge 1$, having
diameter $\Lambda$ and $m$ edges.
Then $H$ can be transformed 
into a Euclidean $t$-spanner $H' = (X,E')$
with diameter at most $\Lambda$ and at most $4m$ edges.
\end{theorem}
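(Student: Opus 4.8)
The plan is to project each Steiner point of $H$ onto (at most) two nearby points of $X$ and to replace each edge of $H$ by the (at most) four edges joining the projections of its endpoints. Since $X$ lies on the $x$-axis I identify each point of $V$ with its real coordinate and write $X=\{x_1<x_2<\dots<x_n\}$ (the cases $n\le 1$ and $p=q$ being trivial). For a point $z\in V$ let $\mathrm{cl}(z)$ denote its clamp to the interval $[x_1,x_n]$, and let $f(z)$ (resp.\ $g(z)$) be the largest point of $X$ that is $\le\mathrm{cl}(z)$ (resp.\ the smallest point of $X$ that is $\ge\mathrm{cl}(z)$); set $\rho(z)=\{f(z),g(z)\}$, so that $|\rho(z)|\le 2$ and $\rho(z)=\{z\}$ whenever $z\in X$. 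Define
\[
E'\;=\;\bigcup_{(u,v)\in E}\{\,(a,b)\;:\;a\in\rho(u),\ b\in\rho(v),\ a\neq b\,\}.
\]
Each edge of $E$ contributes at most four edges, so $|E'|\le 4m$, and $H'=(X,E')$ is a geometric graph on $X$. It remains to verify that $H'$ is a $t$-spanner with diameter at most $\Lambda$.

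Fix $p,q\in X$. Because $H$ is a $t$-spanner with diameter $\Lambda$ there is a path $P=(v_0=p,v_1,\dots,v_\ell=q)$ in $H$ with $\ell\le\Lambda$ and $w(P)\le t\|p-q\|$; it suffices to extract from $P$ a path $P'$ in $H'$ between $p$ and $q$ with $|P'|\le\ell$ and $w(P')\le w(P)$. First I replace each $v_i$ by $\mathrm{cl}(v_i)$; since $p,q\in X\subseteq[x_1,x_n]$ the endpoints are unchanged, and since clamping to an interval is $1$-Lipschitz the total weight does not increase. So I may assume every $v_i$ lies in $[x_1,x_n]$, whence $f(v_i)\le v_i\le g(v_i)$.

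The heart of the argument is the following one-step rounding fact: for every $w\in X$ and all reals $a,b$ there is a choice $z\in\{f(a),g(a)\}$ with $|z-w|+|z-b|\le|a-w|+|a-b|$. Granting this, I choose surrogates $z_i\in\rho(v_i)$ recursively from left to right: put $z_0=v_0$, and having fixed $z_{i-1}\in X$ apply the fact with $w=z_{i-1}$, $a=v_i$, $b=v_{i+1}$ to select $z_i$ (at the last step $v_\ell\in X$ forces $z_\ell=v_\ell$). A straightforward induction on $\ell$ using these inequalities yields $\sum_{i=1}^{\ell}|z_i-z_{i-1}|\le\sum_{i=1}^{\ell}|v_i-v_{i-1}|=w(P)$. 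Moreover each consecutive pair $(z_{i-1},z_i)$ is either a single vertex or an edge of $E'$, since $z_{i-1}\in\rho(v_{i-1})$, $z_i\in\rho(v_i)$ and $(v_{i-1},v_i)\in E$. Deleting consecutive repetitions turns $z_0,\dots,z_\ell$ into a walk in $H'$ from $p$ to $q$, and extracting a simple sub-path only lowers the number of edges and the weight; this gives the desired $P'$.

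The main obstacle is the one-step rounding fact. Its proof is short but uses two features of the $1$-dimensional setting: the map $t\mapsto|t-w|+|t-b|$ is convex with a flat minimum on $[\min(w,b),\max(w,b)]$, and $f(a),g(a)$ sandwich $a$ with no point of $X$ strictly between them. If $w\le a$ then, because $w\in X$, one has $f(a)\ge w$, so $f(a)\in[w,a]$ sits on the non-increasing branch or on the flat part of this convex function and hence is no worse than $a$; the case $w>a$ is symmetric, using $g(a)$. The remaining points are routine: Steiner points outside $[x_1,x_n]$ are handled uniformly by the clamp (there $\rho$ is a singleton), no hypothesis on $t$ is used, and one should note that the weight of a path in a graph on the $x$-axis is exactly the total variation of its coordinate sequence, which is precisely what makes both the $1$-Lipschitz reduction and the convexity step legitimate.
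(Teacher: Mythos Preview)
Your proof is correct and follows essentially the same construction as the paper: both replace each edge $(u,v)$ of $H$ by the at most four edges joining the nearest $X$-points to the left and right of $u$ and of $v$, and then argue that any $t$-spanner path of length $\ell$ in $H$ can be ``rounded'' to a path of length at most $\ell$ and no greater weight in $H'$. Your convexity-based one-step rounding fact, combined with the telescoping choice of $z_i$'s, is a slightly slicker packaging of the paper's inductive Lemma (which instead carries slack terms $\|u'-u\|+\|v'-v\|$ through the induction), but the content is the same. One small point you gloss over: Steiner points of $V$ need not lie on the $x$-axis, so before ``identifying each point of $V$ with its real coordinate'' you must first project $V$ onto the $x$-axis---the paper does this explicitly and notes it is $1$-Lipschitz, hence harmless; your clamp handles only points already on the line.
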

\begin{proof}
For every point $p \in \mathbb R^d$, denote by $p(x)$ its projection onto the $x$-axis. Let $S = V\setminus X$
be the set of Steiner points of $H$, and let
$\tilde S$ be the set of all projections of the points in $S$ onto the $x$-axis, i.e., 
$\tilde S = \{v(x) ~\vert~ v \in S\}$. Also, define $\tilde V = X \cup \tilde S$, and let $\tilde H = (\tilde V,\tilde E)$
be the graph obtained from $H$ by replacing each edge $e = (u,v)$ with its projection $\tilde e = (u(x),v(x))$ onto
the $x$-axis.
Clearly, $\tilde H$ is a spanning subgraph over a superset $\tilde V$ of $X$ of points that lie on the $x$-axis, having 
$|\tilde E| = |E| = m$ edges. Also, it is easy to see that for every pair $u,v$ of points in $V$, and
every path $P = (u = v_0,v_1,\ldots,v=v_m)$ in $H$ between $u$ and $v$, the weight $w(\tilde P)$ of the corresponding path 
$\tilde P = (u(x) = v_0(x),v_1(x),\ldots,v(x) = v_m(x))$ in $\tilde H$ is no greater than the weight $w(P)$ of $P$. Hence, $\tilde H$ is a 
Euclidean Steiner $t$-spanner for $X$ over a superset $\tilde V$ of $X$ of points on the $x$-axis, having diameter at most $\Lambda$ and $m$ edges.

For every point $v \in \tilde V$, denote by $v_L$ (respectively, $v_R$) 
the point closest to $v$ among all points in $X$ that are located left (resp., right) to $v$ on the $x$-axis,
including $v$ itself.
If $v \in X$, then $v_L = v_R = v$.
If there is no point in $X$ to the left (respectively, right) of $v$, then we write $v_L = NULL$ (resp., $v_R = NULL$).
Let $\hat H$ be the graph obtained from $\tilde H$ by replacing each edge $(u,v) \in \tilde E$
with the four edges $(u_L,v_L)$, $(u_L,v_R)$, $(u_R,v_L)$,
and $(u_R,v_R)$. Notice that the resulting graph $\hat H$ may contain multiple copies of the same edge
as well as self loops, and so $\hat H$ is, in fact, a multigraph.
In addition, $\hat H$ may contain edges with one or two NULL endpoints. 
Next, we transform $\hat H$ into a simple graph $H'$ by removing from it all the multiple edges, self loops,
and edges with either one or two NULL endpoints.   
It is easy to see that no edge in the resulting graph $H'$ is incident on a Steiner point. Moreover, $H'$ contains at most $4m$
edges.  
To complete the proof of Theorem \ref{mainii}, we employ the following lemma.
\begin{lemma} \label{stl}
Let $u,v$ be an arbitrary pair of distinct points in $\tilde V$, let $u'$ be either $u_L$ or $u_R$,
and let $v'$ be either $v_L$ or $v_R$, with $u',v' \ne NULL$. Then for any path $\tilde P$ between $u$ and $v$ in $\tilde H$,
there exists a path $P'$ between $u'$ and $v'$ in $H'$, such that $|P'| \le |\tilde P|$ and $w(P') \le w(\tilde P) + \|u',u\|
+ \|v',v\|$.
\end{lemma}   
\begin{proof}
The proof is by induction on the number of edges $q = |\tilde P|$ in the path $\tilde P$.
\\\emph{Basis: $|\tilde P| = 1$.} In this case $\tilde P = (u,v)$. The proof is immediate if $u' = v'$,
and so we may assume that $u' \ne v'$. By construction, $H'$ contains
the edge $(u',v')$. Set $P' = (u',v')$. Clearly, $|P'| = |\tilde P| = 1$. Also, by the triangle inequality, 
$$w(P') ~=~  \|u',v'\| ~\le~ \|u,v\| + \|u',u\|
+ \|v',v\| ~=~ w(\tilde P) + \|u',u\|
+ \|v',v\|.$$ 
\\\emph{Induction Step:} We assume the correctness of the statement for all smaller values of $q$,
and prove it for $q$. Consider the second point $w$ on the path $\tilde P = (u,w,\ldots,v)$ between $u$ and $v$ in $\tilde H$.
Observe that either $w_L$ or $w_R$ is located on the line segment between $u'$ and $w$. (In the case $u'= w$,
we have $w_L = w_R = u' = w$.)
Denote this vertex by $w'$, and note that it is possible to have $u' = w'$, e.g., if $u' = w$.
Since $(u,w)$ is an edge in $\tilde P \in \tilde H$, it holds by construction that $(u',w')$ is an edge in $H'$. Consider the
sub-path $\tilde P_{w,v}$ of $\tilde P$ between $w$ and $v$
obtained 
by removing the first edge $(u,w)$ from $\tilde P$. It consists of $q-1$ edges, and so $|\tilde P_{w,v}| = q-1$.
Hence, by the induction hypothesis, there exists a path $P'_{w',v'}$ between $w'$ and $v'$ in $H'$,
such that $|P'_{w',v'}| \le |\tilde P_{w,v}| = q-1$ and $w(P'_{w',v'}) \le w(\tilde P_{w,v}) + \|w',w\| + \|v',v\|$. 
Since $w'$ is located on the line segment between $u'$ and $w$, it follows that $\|u',w'\| + \|w',w\| = \|u',w\|$. 
By the triangle inequality,
$\|u',w\| \le \|u',u\| + \|u,w\|$. 
Let $P'$ be the path obtained by concatenating the edge $(u',w')$ with the path $P'_{w',v'}$, i.e., $P' = (u',w') \circ P'_{w',v'}$.
Notice that $P'$ is a path in $H'$ between $u'$ and $v'$, and $|P'| = 1 + |P'_{w',v'}| \le q$.
Also, we have 
\begin{eqnarray*}w(P') &=& \|u',w'\| + w(P'_{w',v'}) ~\le~ \|u',w'\| + w(\tilde P_{w,v}) + \|w',w\| + \|v',v\|
\\ &=& \|u',w\| + w(\tilde P_{w,v})  + \|v',v\|  ~\le~ \|u',u\| + \|u,w\| + w(\tilde P_{w,v}) + \|v',v\|
\\ &=& w(\tilde P) +  \|u',u\| + \|v',v\|. \inQED
\end{eqnarray*}
\end{proof}
Lemma \ref{stl} implies that for any two points in $X$, there is a $t$-spanner path in $H'$ that consists of at most
$\Lambda$ edges. Thus $H'$ is a Euclidean $t$-spanner for $X$ with diameter
at most $\Lambda$ and at most $4m$ edges.
\QED
\end{proof}
Chan and Gupta \cite{CG06} proved that for any $\eps > 0$, there exists a set $S_\eps$ of $n$ points on the $x$-axis,
where $n$ is an arbitrary power of two, 
for which any Euclidean $(1+\eps)$-spanner with at most
$m$ edges has diameter at least $\Omega(\alpha(m,n))$. Theorem \ref{mainii} enables us to extend
the lower bound of \cite{CG06} to Euclidean Steiner spanners. 
\begin{corollary}
For any $\eps > 0$, there exists a
set of $n$ points on the $x$-axis, for which any Euclidean (possibly Steiner) $(1+\eps)$-spanner with at most $m$ edges has 
diameter at least $\Omega(\alpha(m,n))$.
\end{corollary}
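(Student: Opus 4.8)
The plan is to obtain the corollary as an immediate consequence of Theorem \ref{mainii} together with the (non-Steiner) lower bound of Chan and Gupta. Fix $\eps > 0$, and let $S_\eps$ be the $n$-point subset of the $x$-axis produced by Chan and Gupta ($n$ being an arbitrary power of two), for which every non-Steiner Euclidean $(1+\eps)$-spanner with at most $m'$ edges has diameter $\Omega(\alpha(m',n))$. I claim the very same point set $S_\eps$ witnesses the corollary. Given an arbitrary Euclidean, possibly Steiner, $(1+\eps)$-spanner $H$ for $S_\eps$ with at most $m$ edges and diameter $\Lambda$, I would first apply Theorem \ref{mainii} with $X = S_\eps$ and $t = 1+\eps$ to produce a non-Steiner Euclidean $(1+\eps)$-spanner $H'=(S_\eps,E')$ with diameter at most $\Lambda$ and at most $4m$ edges.

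Next I would invoke the Chan--Gupta bound on $H'$: since $H'$ is a non-Steiner $(1+\eps)$-spanner for $S_\eps$ with at most $4m$ edges, its diameter, and hence $\Lambda$, is $\Omega(\alpha(4m,n))$. The only remaining point is to absorb the constant factor $4$ in the edge count, i.e.\ to verify that $\alpha(4m,n) = \Omega(\alpha(m,n))$. This I would derive from the robustness of the two-parameter inverse-Ackermann function: since $\lceil 4m/n\rceil \le 4\lceil m/n\rceil$ and, by the rapid growth of $A(s,\cdot)$ in its second coordinate, $A(s+O(1),4\lceil m/n\rceil) \ge A(s,16\lceil m/n\rceil)$, the definition $\alpha(m,n) = \min\{s\ge 1 : A(s,4\lceil m/n\rceil)\ge \log n\}$ yields $\alpha(m,n) \le \alpha(4m,n)+O(1)$. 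If $\alpha(m,n)$ stays bounded by an absolute constant the corollary is vacuous, since any spanner has diameter at least $1$; otherwise $\alpha(4m,n) \ge \alpha(m,n) - O(1) = \Omega(\alpha(m,n))$. Chaining the two estimates gives $\Lambda = \Omega(\alpha(m,n))$, which is exactly the assertion.

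There is essentially no serious obstacle here: Theorem \ref{mainii} does all the real work of showing that Steiner points can be removed at the cost of only a constant-factor blow-up in the number of edges and no increase in diameter, and the corollary then follows by a one-line reduction to the known non-Steiner lower bound. The single mildly delicate point is the bookkeeping in the last paragraph --- confirming that the factor-$4$ inflation of the edge budget degrades the diameter lower bound by at most an additive constant --- and this is routine given how fast $A(s,\cdot)$ grows in its second argument.
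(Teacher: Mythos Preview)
Your proposal is correct and follows essentially the same route as the paper's own proof: both take the Chan--Gupta point set $S_\eps$, apply Theorem~\ref{mainii} to replace any Steiner spanner by a non-Steiner one with at most $4m$ edges and no larger diameter, invoke the Chan--Gupta lower bound on the result, and then absorb the factor~$4$ via $\alpha(4m,n)\ge \alpha(m,n)-O(1)$ (the paper states this with the explicit constant $4$). The only cosmetic difference is that the paper phrases it as a proof by contradiction while you argue directly.
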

\begin{proof}
The statement is trivial if $\alpha(m,n) = O(1)$. We henceforth assume that $\alpha(m,n)$ is super-constant.

Let $S_\eps$ be the aforementioned set of $n$ points for which the lower bound of \cite{CG06} holds,
and suppose for contradiction that there exists a Euclidean Steiner $(1+\eps)$-spanner $H$ for $S_\eps$
with at most $m$ edges and diameter $\Lambda = o(\alpha(m,n))$. By Theorem \ref{mainii}, we can transform
$H$ into a Euclidean $(1+\eps)$-spanner $H'$ for $S_\eps$, having diameter $\Lambda' \le \Lambda = o(\alpha(m,n))$
and at most $4m$ edges. 
However, the lower bound of \cite{CG06}
implies that the diameter $\Lambda'$ of $H'$ is at least $\Omega(\alpha(4m,n))$.
Using the observation that $\alpha(4m,n) \ge \alpha(m,n)-4$, for all $m \ge n$, we conclude that
$\Lambda' = \Omega(\alpha(m,n)-4) = \Omega(\alpha(m,n))$, 
yielding a contradiction. 
\QED
\end{proof}

\section{Acknowledgments}
The author is grateful to Michael Elkin and Michiel Smid for helpful discussions.

\bibliographystyle{latex8}
\bibliography{latex8}

\clearpage
\pagenumbering{roman}
\appendix
\centerline{\LARGE\bf Appendix}

\section{Proof of Lemma \ref{base'real}} \label{acker}
This section is devoted to the proof of Lemma \ref{base'real}.

We start with proving the following claim.

\begin{claim} \label{claimii}
(1) The function $\alpha'_2 = \alpha'_2(n)$ is monotone non-decreasing with $n$.
~(2) For all $n \ge 1$, $\alpha'_2(n) \le n-1$. Moreover, if $n \ge 6$, then $\alpha'_2(n) \le n-2$.
~(3) For all $n > 10$, $\alpha'_2(n) \le \alpha'_0(n)$.
~(4) For all $n \ge 0$, $\alpha'_4(n) \le \alpha'_2(n)$.
\end{claim}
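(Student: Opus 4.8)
The four parts are established in order, since (3) relies on (2) and (4) relies on both (1) and (2). Throughout I use only the defining recursions $\alpha'_0(n)=\lceil n/2\rceil$; $\alpha'_2(n)=2+\alpha'_2(\lceil n/2\rceil)$ for $n\ge 4$, with $\alpha'_2(n)=\alpha_2(n)=\lceil\log n\rceil$ for $n\le 3$; $\alpha'_4(n)=2+\alpha'_4(\alpha'_2(n))$ for $n\ge 6$, with $\alpha'_4(n)=\alpha_4(n)$ for $n\le 5$; together with the elementary facts about $\alpha_k$ collected in Lemma~\ref{basereal}. For part (1) I would prove $\alpha'_2(m)\le\alpha'_2(m+1)$ for every $m\ge 0$ by strong induction on $m$. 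The cases $m\le 3$ follow from the values $\alpha'_2(0),\dots,\alpha'_2(4)=0,0,1,2,3$. For $m\ge 4$ both $\alpha'_2(m)$ and $\alpha'_2(m+1)$ are given by the recursion, so it suffices to check $\alpha'_2(\lceil m/2\rceil)\le\alpha'_2(\lceil (m+1)/2\rceil)$; the two ceilings coincide when $m$ is odd, and for $m=2t$ they are $t$ and $t+1$ with $2\le t<m$, so the desired inequality is exactly the induction hypothesis. Monotonicity of $\alpha'_2$ follows.

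For part (2) I would first prove the stronger bound $\alpha'_2(n)\le n-2$ for $n\ge 6$ by strong induction: the cases $6\le n\le 10$ are checked directly (the values of $\alpha'_2$ being $4,5,5,6,6$), and for $n\ge 11$ we have $\lceil n/2\rceil\ge 6$, so the induction hypothesis gives $\alpha'_2(\lceil n/2\rceil)\le\lceil n/2\rceil-2$ and hence $\alpha'_2(n)=2+\alpha'_2(\lceil n/2\rceil)\le\lceil n/2\rceil\le (n+1)/2\le n-2$. The weaker bound $\alpha'_2(n)\le n-1$ for $1\le n\le 5$ is immediate, since $\alpha'_2(n)=n-1$ there. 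Part (3) is then almost a corollary: for $n>10$ we have $\lceil n/2\rceil\ge 6$, so part (2) applied at $\lceil n/2\rceil$ gives $\alpha'_2(\lceil n/2\rceil)\le\lceil n/2\rceil-2$, and the recursion (valid as $n\ge 4$) yields $\alpha'_2(n)=2+\alpha'_2(\lceil n/2\rceil)\le\lceil n/2\rceil=\alpha'_0(n)$.

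For part (4) I would prove $\alpha'_4(n)\le\alpha'_2(n)$ for all $n\ge 0$ by strong induction on $n$. For $n\le 5$ we have $\alpha'_4(n)=\alpha_4(n)\le\alpha_2(n)\le\alpha'_2(n)$, using Lemma~\ref{basereal}(3) for the first inequality and noting that $\alpha_2(n)=\alpha'_2(n)$ for $n\le 3$ while $\alpha_2(4)=2\le 3$ and $\alpha_2(5)=3\le 4$. For $6\le n\le 8$ one checks directly that $\alpha'_4(n)=2+\alpha_4(\alpha'_2(n))=\alpha'_2(n)$ (both sides equal $4,5,5$). For $n\ge 9$, part (1) gives $\alpha'_2(n)\ge\alpha'_2(9)=6$ and part (2) gives $\alpha'_2(n)\le n-2<n$; hence the induction hypothesis applies at the argument $\alpha'_2(n)$ and yields $\alpha'_4(\alpha'_2(n))\le\alpha'_2(\alpha'_2(n))$, and applying part (2) once more at $\alpha'_2(n)\ge 6$ we obtain $\alpha'_4(n)=2+\alpha'_4(\alpha'_2(n))\le 2+\alpha'_2(\alpha'_2(n))\le 2+(\alpha'_2(n)-2)=\alpha'_2(n)$.

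All four statements are routine once the inductions are set up; the only point that needs care is the bookkeeping at the boundary between the small-$n$ clause $\alpha'_k(n)=\alpha_k(n)$ and the recursive clause. Concretely, in part (4) the argument $\alpha'_2(n)$ fed into the recursion for $\alpha'_4$ must be simultaneously strictly smaller than $n$ (so that strong induction can be invoked) and at least $6$ (so that the $n-2$ bound of part (2) applies to it), which forces the residual cases $6\le n\le 8$ to be verified by direct evaluation; everything else reduces to short arithmetic with ceilings.
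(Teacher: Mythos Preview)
Your proof is correct, and for parts (1)--(3) it is essentially the paper's argument with only cosmetic differences (you induct on $m\mapsto m+1$ in (1) rather than on the larger argument, and in (2) you push more into the base cases so as to use the sharper bound $\lceil n/2\rceil-2$ in the step).

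The one genuine deviation is in part (4). The paper's induction step (for $n\ge 11$) runs
\[
\alpha'_4(n)=2+\alpha'_4(\alpha'_2(n))\le 2+\alpha'_2(\alpha'_2(n))\le 2+\alpha'_2(\alpha'_0(n))=\alpha'_2(n),
\]
where the middle inequality comes from part (3) ($\alpha'_2(n)\le\alpha'_0(n)$) together with monotonicity, and the last equality is the recursion for $\alpha'_2$. You instead argue, for $n\ge 9$,
\[
\alpha'_4(n)=2+\alpha'_4(\alpha'_2(n))\le 2+\alpha'_2(\alpha'_2(n))\le 2+\bigl(\alpha'_2(n)-2\bigr)=\alpha'_2(n),
\]
applying the $n-2$ bound of part (2) directly at the argument $\alpha'_2(n)\ge 6$. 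This is a legitimate shortcut: it removes the dependence of (4) on (3), and it lets the induction kick in two steps earlier (so only $n\le 8$ need direct verification rather than $n\le 10$). The trade-off is that the paper's route generalizes cleanly to the double induction on $(k,n)$ used later in Lemma~\ref{base'real}, where one really does compare $\alpha'_k$ with $\alpha'_{k-2}$; your direct use of (2) is specific to the pair $(k,k+2)=(2,4)$.
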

\begin{proof}
We first prove that $\alpha'_2 = \alpha'_2(n)$ is monotone non-decreasing with $n$. Specifically, we show 
that for all $n \ge 0$: $\alpha'_2(m) \le \alpha'_2(n)$, for any $m \le n$.
The proof is by induction on $n$. The basis $n \le 3$ can be easily verified.
\\\emph{Induction Step:} We assume the correctness of the statement for all
smaller values of $n$, $n \ge 4$, and prove it for $n$.
By definition, $\alpha'_2(n) = 2 + \alpha'_2(\alpha'_0(n)) = 2 + \alpha'_2(\lceil n/2 \rceil)$.
It is easy to see that for $m \le 3$, $\alpha'_2(m) \le 2$, and so $\alpha'_2(m) \le 2 \le 2 + \alpha'_2(\lceil n/2 \rceil) = \alpha'_2(n)$. 
\\We  henceforth
assume that $4 \le m \le n$. Thus, by definition, $\alpha'_2(m) = 2 + \alpha'_2(\alpha'_0(m)) =   2 + \alpha'_2(\lceil m/2 \rceil)$.
Since $4 \le m \le n$, we have $\lceil m/2 \rceil \le \lceil n/2 \rceil < n$. By the induction hypothesis for $\lceil n/2 \rceil$,
 $\alpha'_2(\lceil m/2 \rceil) \le \alpha'_2(\lceil n/2 \rceil)$. It follows that
 $$\alpha'_2(m) ~=~ 2 + \alpha'_2(\lceil m/2 \rceil) ~\le~ 2 + \alpha'_2(\lceil n/2 \rceil) ~=~ \alpha'_2(n).$$

We proceed by proving
the second assertion.
It is easy to verify that $\alpha'_2(n) = n-1$, for all $1 \le n \le 5$. 
\\Next, we prove by induction on $n$ that for all $n \ge 6$, it holds that $\alpha'_2(n) \le n-2$.
The basis $n = 6$ can be easily verified.
\\\emph{Induction Step:} We assume the correctness of the statement for all smaller
values of $n$, $n \ge 7$, and prove it for $n$.
By definition, $\alpha'_2(n) = 2 + \alpha'_2(\alpha'_0(n)) = 2 + \alpha'_2(\lceil n/2 \rceil)$.
Since $n \ge 7$, $4 \le \lceil n/2 \rceil < n$. If $\lceil n/2 \rceil \le 5$, then 
$\alpha'_2(\lceil n/2 \rceil) \le \lceil n/2 \rceil -1$. Otherwise, $\lceil n/2 \rceil \ge 6$, and
by the induction hypothesis for $\lceil n/2 \rceil$, we have $\alpha'_2(\lceil n/2 \rceil) \le \lceil n/2 \rceil -2$.
In any case, it holds that $\alpha'_2(\lceil n/2 \rceil) \le \lceil n/2 \rceil -1$.
Consequently, $$\alpha'_2(n) ~=~ 2 + \alpha'_2(\lceil n/2 \rceil) ~\le~ 1+ \lceil n/2 \rceil ~\le~ n-2.$$
(The last inequality holds for all $n \ge 6$.) 

To prove the third assertion,
consider an arbitrary integer $n \ge 11$.
By definition, $\alpha'_2(n) = 2 + \alpha'_2(\alpha'_0(n)) = 2 + \alpha'_2(\lceil n/2 \rceil)$.
Since $\lceil n/2 \rceil \ge 6$, the second assertion of this claim yields $\alpha'_2(\lceil n/2 \rceil)
\le \lceil n/2 \rceil -2$, and so $$\alpha'_2(n) ~=~  2 + \alpha'_2(\lceil n/2 \rceil) ~\le~ \lceil n/2 \rceil ~=~ \alpha'_0(n).$$

The proof of the fourth assertion is by induction on $n$. For all $0 \le n \le 10$ the statement can be verified by brute force.
\\\emph{Induction Step:} We assume the correctness of the statement for all smaller
values of $n$, $n \ge 11$, and prove it for $n$.
By definition, for all $n \ge 11$, 
$\alpha'_4(n) = 2 + \alpha'_4(\alpha'_2(n))$
and $\alpha'_2(n) = 2 + \alpha'_2(\alpha'_0(n))$. By the third assertion of this claim, we know that
$\alpha'_2(n) \le \alpha'_0(n)$. Hence, by the first assertion of this claim, $\alpha'_2(\alpha'_0(n)) \ge \alpha'_2(\alpha'_2(n))$.
The second assertion of this claim implies that $\alpha'_2(n) < n$, and so by the induction hypothesis for $\alpha'_2(n)$, 
we have $\alpha'_4(\alpha'_2(n)) \le  \alpha'_2(\alpha'_2(n))$.
Altogether, $$\alpha'_4(n) ~=~ 2 + \alpha'_4(\alpha'_2(n))
~\le~ 2 + \alpha'_2(\alpha'_2(n)) ~\le~ 2 + \alpha'_2(\alpha'_0(n)) ~=~ \alpha'_2(n).$$ 
\QED
\end{proof}

We now turn to the proof of Lemma \ref{base'real}.
\ignore{
The next lemma is analogous to Lemma \ref{base}.
\begin{lemma} \label{base'}
\begin{enumerate}
\item For all $k \ge 2$, the function $\alpha'_k$ is monotone, i.e., for all $n \ge 0$:
$\alpha'_{k}(m) \le \alpha_{k}(n)$, for any $m \le n$.
\item For all $k \ge 2$ and $n \ge 0$, $\alpha'_{k+2}(n) \le \alpha'_{k}(n)$.
\item For all $k \ge 2$ and $n \ge 1$, $\alpha'_{k}(n) < n$.
\end{enumerate}
\end{lemma}} 
\\The three assertions of the lemma are proved by double induction on $k$ and $n$.
We restrict the attention to even values of $k$.
The argument for odd values of $k$ is similar, and is thus omitted.
\\For technical convenience,  we will prove the first assertion of the lemma in the sequel by showing that
$\alpha'_{k}(m) \le \alpha'_{k}(n)$, for an arbitrary integer $m \le n$.  
\\The case $k=2$ follows from Claim \ref{claimii}. 
\\Suppose next that $n \le k+1$. By definition, $\alpha'_{k}(n) = \alpha_{k}(n)$, for any $n \le k+1$, 
and so the three assertions of the lemma follow from Lemma \ref{basereal}.
\\\emph{Induction Step:} We assume that for an arbitrary pair $(k,n)$, $n \ge k+2$, $k \ge 4$, 
the three assertions of the lemma hold for all pairs $(k',n')$,
with either $k' < k$ or both $k' = k$ and $n' < n$, and prove it for the pair $(k,n)$.

Consider an arbitrary integer $m \le n$.
We first show that $\alpha'_{k}(m) \le \alpha'_{k}(n)$, thus proving the first assertion of the lemma. 
If $m \le k+1$, then $\alpha'_{k}(m) = \alpha_{k}(m)$.
Clearly, $\alpha_{k}(n) \le \alpha'_{k}(n)$. By the first assertion of Lemma \ref{basereal},
$\alpha_{k}(m) \le \alpha_{k}(n)$, and so $$\alpha'_{k}(m) ~=~ \alpha_{k}(m) ~\le~ \alpha_{k}(n) ~\le~ \alpha'_{k}(n).$$
Otherwise, we have $k+2 \le m \le n$. Hence, by definition, $\alpha'_{k}(n) = 2 + \alpha'_{k}(\alpha'_{k-2}(n))$
and $\alpha'_{k}(m) = 2 + \alpha'_{k}(\alpha'_{k-2}(m))$. By the first and second assertions of the induction hypothesis for the pair $(k-2,n)$, we have
$\alpha'_{k-2}(m) \le \alpha'_{k-2}(n)$
and $\alpha'_{k-2}(n) < n$, respectively. Hence, the first assertion of the induction hypothesis for the pair $(k,\alpha'_{k-2}(n))$
implies that
$\alpha'_{k}(\alpha'_{k-2}(m)) \le \alpha'_{k}(\alpha'_{k-2}(n))$.
 Altogether, $$\alpha'_{k}(m) ~=~ 2 + \alpha'_{k}(\alpha'_{k-2}(m))
 ~\le~ 2 + \alpha'_{k}(\alpha'_{k-2}(n)) ~=~ \alpha'_{k}(n).$$ 

The second and third assertions of the induction hypothesis for the pair $(k-2,n)$ imply that  
\begin{equation} \label{dah} \alpha'_{k}(n) ~\le~ \alpha'_{k-2}(n) ~<~ n,
\end{equation} thus proving the second assertion of the lemma.

We next prove the third assertion of the lemma.
Suppose first that $k+2 \le n \le k+3$. In this case $\alpha'_{k+2}(n) = \alpha_{k+2}(n)$.
Also, we have $\alpha'_{k}(n) \ge \alpha_{k}(n)$. By the third assertion of Lemma \ref{basereal},
$\alpha_{k+2}(n) \le \alpha_{k}(n)$, yielding $$\alpha'_{k+2}(n) ~=~ \alpha_{k+2}(n)
~\le~ \alpha_{k}(n) ~\le~ \alpha'_{k}(n).$$
Otherwise, $n \ge k+4$. In this case, $\alpha'_{k+2}(n) = 2+\alpha'_{k+2}(\alpha'_{k}(n))$
and $\alpha'_{k}(n) = 2+\alpha'_{k}(\alpha'_{k-2}(n))$.
Equation (\ref{dah}) and the first assertion of the induction hypothesis for the pair $(k,\alpha'_{k-2}(n))$ imply that 
$\alpha'_{k}(\alpha'_{k}(n)) \le \alpha'_{k}(\alpha'_{k-2}(n))$.
Also, equation (\ref{dah}) and the third assertion of the induction hypothesis for the pair $(k,\alpha'_{k}(n))$ imply
that $\alpha'_{k+2}(\alpha'_{k}(n)) \le \alpha'_{k}(\alpha'_{k}(n))$,
yielding $\alpha'_{k+2}(\alpha'_{k}(n)) \le \alpha'_{k}(\alpha'_{k}(n)) \le \alpha'_{k}(\alpha'_{k-2}(n))$.
Altogether, $$\alpha'_{k+2}(n) ~=~ 2+\alpha'_{k+2}(\alpha'_{k}(n)) ~\le~ 2 + \alpha'_{k}(\alpha'_{k-2}(n)) ~=~ \alpha'_{k}(n).$$


\section{Proof of Lemma \ref{finallyreal}} \label{acker2}
This section is devoted to the proof of Lemma \ref{finallyreal}
\ignore{
The following lemma can be easily verified.
\begin{lemma} \label{base}
\begin{enumerate}
\item For all $k \ge 0$,
the function $\alpha_k = \alpha_k(n)$ is monotone non-decreasing with $n$,
i.e., for all $n \ge 0$:
$\alpha_k(m) \le \alpha_k(n)$, for any $m \le n$.
\item For all $k \ge 0$ and $n \ge 0$, $\alpha_{k+2}(n) \le \alpha_{k}(n)$.
\item \begin{itemize} \item For all $n \ge 2$, $\alpha_{0}(n) < n$.
\item For all $n \ge 3$, $\alpha_{1}(n) < n$.
\item For all $k \ge 2$ and $n \ge 1$, $\alpha_{k}(n) < n$.
\end{itemize}
\end{enumerate}
\end{lemma} 
}

We start with proving the following claim.
\begin{claim} \label{auxim}
For all $k \ge 0$ and $n \ge 0$, 
$\alpha_{k}(2(n+2)) <  2 (\alpha_k(n) + 2)$. 
\end{claim}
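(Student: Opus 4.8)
The plan is to prove Claim \ref{auxim} by a double induction, on $k$ in the outer layer and on $n$ in the inner layer, using Lemmas \ref{basereal} and \ref{NSreal} to reduce $\alpha_k$ either to the explicit base functions $\alpha_0,\alpha_1$ or to the recursion $\alpha_k(n)=1+\alpha_k(\alpha_{k-2}(n))$, which holds for $k\ge 2$ and $n\ge 3$.

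For the base cases I would argue directly from the closed forms. When $k=0$ we have $\alpha_0(2(n+2))=n+2$ while $2(\alpha_0(n)+2)=2\lceil n/2\rceil+4\ge n+4$, so the strict inequality is immediate. When $k=1$, the estimate $\sqrt{2n+4}\le\sqrt{2n}+2\le 2\sqrt n+2$ gives $\alpha_1(2(n+2))=\lceil\sqrt{2n+4}\rceil\le 2\lceil\sqrt n\rceil+2<2(\lceil\sqrt n\rceil+2)$.

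For the inductive step $k\ge 2$ I would first dispose of the small arguments $n\in\{0,1,2\}$ by hand: there the right-hand side is $4$, $4$, and (depending on the parity of $k$, via the value of $\alpha_k(2)$ from Lemma \ref{NSreal}) either $4$ or $6$, while the left-hand side is $\alpha_k(4)\le 3$, $\alpha_k(6)\le 3$, $\alpha_k(8)\le 4$ — the first bound from Lemma \ref{basereal}(2) and the others because $\alpha_k$ at such a point is at most $\alpha_0$ (for even $k$) or $\alpha_1$ (for odd $k$) of the same point by iterating Lemma \ref{basereal}(3), and $\alpha_0(6)=\alpha_1(6)=3$, $\alpha_0(8)=4$, $\alpha_1(8)=3$. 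For $n\ge 3$ I would unwind the recursion of Lemma \ref{NSreal} on both sides. Writing $m=\alpha_{k-2}(n)$, we get $\alpha_k(n)=1+\alpha_k(m)$ and $\alpha_k(2(n+2))=1+\alpha_k\bigl(\alpha_{k-2}(2(n+2))\bigr)$. The outer hypothesis (the claim at $k-2$) gives $\alpha_{k-2}(2(n+2))\le 2\alpha_{k-2}(n)+3\le 2(m+2)$, so by monotonicity (Lemma \ref{basereal}(1)) it suffices to bound $\alpha_k(2(m+2))$. Since $m=\alpha_{k-2}(n)<n$ — by Lemma \ref{basereal}(2) when $k\ge 4$, and by the explicit inequalities $\lceil n/2\rceil<n$, $\lceil\sqrt n\rceil<n$ for $n\ge 3$ when $k\in\{2,3\}$ — the inner hypothesis applies at $(k,m)$ and yields $\alpha_k(2(m+2))\le 2\alpha_k(m)+3$. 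Combining, $\alpha_k(2(n+2))\le 1+2\alpha_k(m)+3=2\alpha_k(m)+4=2(\alpha_k(n)-1)+4=2(\alpha_k(n)+2)-2<2(\alpha_k(n)+2)$.

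The step I expect to be the main obstacle is exactly this last combination: a naive induction keeps only a non-strict inequality and appears to lose the slack at each level. The resolution is the observation that for $n\ge 3$ the recursion forces $\alpha_k(m)=\alpha_k(n)-1$ \emph{exactly}, so the stray $+1$ produced by peeling one layer off $\alpha_k(2(n+2))$ is absorbed and one in fact recovers a margin of $2$. The remaining care is purely bookkeeping — the finitely many small $n$, and the fact that for $k\in\{2,3\}$ the function $\alpha_{k-2}$ is one of $\alpha_0,\alpha_1$, so the bound $\alpha_{k-2}(n)<n$ has to be verified for the least relevant values of $n$.
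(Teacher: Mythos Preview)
Your proof is correct and follows essentially the same double-induction approach as the paper: reduce to the recursion $\alpha_k(n)=1+\alpha_k(\alpha_{k-2}(n))$ from Lemma~\ref{NSreal}, apply the hypothesis at $(k-2,n)$ to push $\alpha_{k-2}(2(n+2))$ below $2(m+2)$, then the hypothesis at $(k,m)$ together with the exact identity $\alpha_k(m)=\alpha_k(n)-1$ to recover the required slack. Your handling of the base cases is in fact slightly more careful than the paper's---you treat $n=2$ separately and both parities explicitly, whereas the paper starts the induction step at $n\ge 2$ and tacitly uses that the recursion of Lemma~\ref{NSreal} extends to $n=2$.
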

\begin{proof}
The proof is by double induction on $k$ and $n$. 
We restrict the attention to even values of $k$. 
 The argument for odd values of $k$ is similar,
and is thus omitted. 
\\Consider first the case $k = 0$.
By definition, $\alpha_0(n) = \lceil n/2 \rceil$, for all $n \ge 0$.
Hence, $\alpha_0(2(n+2)) =  \lceil (2(n +2))/2 \rceil <  2(\lceil n/2 \rceil + 2) = 2 (\alpha_0(n) + 2)$.
\\Suppose next that $n < 2$. Notice that $2(n+2) \le 6$ and $\alpha_0(6) = \lceil 6/2 \rceil = 3$. By the third assertion of Lemma \ref{basereal},  $\alpha_{k}(6) \le \alpha_0(6) = 3$. Hence, by the first assertion of Lemma \ref{basereal},
$\alpha_{k}(2(n +2)) \le \alpha_{k}(6) \le 3 < 2 (\alpha_{k}(n) + 2)$. 
\\\emph{Induction Step:} We assume that for an arbitrary pair $(k,n)$, $n \ge 2$, $k \ge 2$,
the statement holds for all pairs $(k',n')$,
with either $k' < k$ or both $k' = k$ and $n' < n$, and prove it for the pair $(k,n)$.
\\Since $n \ge 2$, we have $2(n+2) \ge 8$,
and so Lemma \ref{NSreal} implies that $\alpha_{k}(2(n+2)) = 1 + \alpha_{k}(\alpha_{k-2}(2(n+2)))$. 
By the induction hypothesis for the pair $(k-2,n)$, we have
$\alpha_{k-2}(2(n+2)) < 2(\alpha_{k-2}(n)+2)$. Hence, the first assertion of Lemma \ref{basereal} implies
that
$\alpha_{k}(\alpha_{k-2}(2(n+2))) \le \alpha_{k}(2(\alpha_{k-2}(n) + 2))$.
Since $n \ge 2$, the second assertion of Lemma \ref{basereal} implies that
 $\alpha_{k-2}(n) < n$. Hence, by the induction hypothesis for the pair $(k,\alpha_{k-2}(n))$,
 it follows that $\alpha_{k}(2(\alpha_{k-2}(n)+2)) < 2(\alpha_{k}(\alpha_{k-2}(n))+2)$.
 Altogether, \begin{eqnarray*}
 \alpha_{k}(2(n+2)) &=& 1 + \alpha_{k}(\alpha_{k-2}(2(n+2)))
 ~\le~ 1 + \alpha_{k}(2(\alpha_{k-2}(n) + 2)) \\ &<& 1 + 2(\alpha_{k}(\alpha_{k-2}(n)) + 2)
 ~=~ 1+2(\alpha_{k}(n)+1) ~<~ 2(\alpha_{k}(n)+2). \inQED
 \end{eqnarray*} 
\end{proof}

We are now ready to prove Lemma \ref{finallyreal}.
\\The first assertion of Lemma \ref{base'real} implies that the function $\alpha'_k = \alpha'_k(n)$ is monotone non-decreasing with $n$,
for all $k \ge 2$ and $n \ge 0$. The functions $\alpha'_0 = \lceil n/2 \rceil$ and $\alpha'_1 = \lceil \sqrt{n} \rceil$ are monotone non-decreasing with $n$ as well.
Consequently, $\alpha'_k(n) \le \alpha'_{k}(2(n+2))$, for all $k \ge 0$ and $n \ge 0$.
Hence, Lemma \ref{finallyreal} follows as a corollary of the following lemma.
\begin{lemma} \label{finally}
For all $k \ge 0$ and $n \ge 0$, 
$\alpha'_{k}(2(n+2)) \le  2 (\alpha_k(n) + 2)$. 
\end{lemma}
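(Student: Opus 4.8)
The plan is to prove Lemma~\ref{finally} by double induction on $k$ and $n$, mirroring the structure of the proof of Claim~\ref{auxim}, and to deduce Lemma~\ref{finallyreal} from it. As usual I will carry out the argument for even $k$ only, the odd case being entirely analogous (replacing $\alpha'_0,\alpha_0$ by $\alpha'_1,\alpha_1$ in the base and using the odd-index instances of Lemmas~\ref{basereal}, \ref{NSreal}, \ref{base'real}). The base case $k=0$ is immediate: $\alpha'_0(2(n+2)) = \lceil (2(n+2))/2 \rceil = n+2$, while $2(\alpha_0(n)+2) = 2\lceil n/2\rceil + 4 \ge n+4 > n+2$.

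For the induction step I fix an even $k \ge 2$ and assume the statement for all pairs $(k',n')$ with either $k' < k$ (and arbitrary $n'$) or $k' = k$ and $n' < n$. I first peel off the small cases $n \le 2$: if $2(n+2) \le k+1$ then $\alpha'_k(2(n+2)) = \alpha_k(2(n+2))$ by the definition of $\alpha'_k$, and Claim~\ref{auxim} gives $\alpha_k(2(n+2)) < 2(\alpha_k(n)+2)$; otherwise $2(n+2) \ge k+2$ forces $k \le 2n+2$, leaving only the finitely many pairs $(k,n) \in \{(2,0),(2,1),(2,2),(4,1),(4,2),(6,2)\}$, each of which is checked directly by unwinding the recursive definition of $\alpha'_k$ and using $\alpha'_k(m) = \alpha_k(m)$ for $m \le k+1$.

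For $n \ge 3$ I again split on whether $2(n+2) \le k+1$ or $2(n+2) \ge k+2$. In the first case $\alpha'_k(2(n+2)) = \alpha_k(2(n+2)) < 2(\alpha_k(n)+2)$ by Claim~\ref{auxim}. In the second case the definition gives $\alpha'_k(2(n+2)) = 2 + \alpha'_k(\alpha'_{k-2}(2(n+2)))$, and I chain four facts: (i) the induction hypothesis for $(k-2,n)$ yields $\alpha'_{k-2}(2(n+2)) \le 2(\alpha_{k-2}(n)+2)$; (ii) monotonicity of $\alpha'_k$ (the first assertion of Lemma~\ref{base'real}) gives $\alpha'_k(\alpha'_{k-2}(2(n+2))) \le \alpha'_k(2(\alpha_{k-2}(n)+2))$; (iii) since $\alpha_{k-2}(n) < n$ (the second assertion of Lemma~\ref{basereal}, which also covers $\alpha_0(n)<n$ for $n\ge 3$ when $k=2$), the induction hypothesis for $(k,\alpha_{k-2}(n))$ yields $\alpha'_k(2(\alpha_{k-2}(n)+2)) \le 2(\alpha_k(\alpha_{k-2}(n))+2)$; and (iv) since $n \ge 3$, Lemma~\ref{NSreal} gives $\alpha_k(\alpha_{k-2}(n)) = \alpha_k(n)-1$. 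Combining,
\[
\alpha'_k(2(n+2)) \;\le\; 2 + 2\bigl(\alpha_k(\alpha_{k-2}(n)) + 2\bigr) \;=\; 2 + 2\bigl(\alpha_k(n) - 1 + 2\bigr) \;=\; 2(\alpha_k(n)+2).
\]
Finally, Lemma~\ref{finallyreal} follows from Lemma~\ref{finally}: since $\alpha'_k$ (and $\alpha'_0,\alpha'_1$) is monotone non-decreasing, $\alpha'_k(n) \le \alpha'_k(2(n+2)) \le 2\alpha_k(n)+4$.

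The main obstacle is organizational rather than deep. One must check that the two recursive invocations in the induction step --- one lowering $k$ by two on the unchanged argument $2(n+2)$, and one keeping $k$ fixed on the strictly smaller argument $2(\alpha_{k-2}(n)+2)$ --- are both licensed by the double-induction ordering, and that Lemma~\ref{NSreal} is used only when $n \ge 3$; this last point is precisely why the cases $n \le 2$ must be handled separately by direct computation. A minor subtlety is that when $k=2$ the ``$k-2$'' instance in step (i) is the $k=0$ base case, so the explicit formula $\alpha'_0 = \lceil n/2 \rceil$ has to be invoked there rather than the generic recursion.
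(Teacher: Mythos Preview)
Your proof is correct and follows essentially the same double-induction argument as the paper. The only organizational difference is in the base cases: the paper handles $n<2$ uniformly via the monotonicity bound $\alpha'_k(2(n+2)) \le \alpha'_k(6) \le \alpha'_2(6) = 4 = 2(\alpha_k(n)+2)$ and then runs the induction step from $n\ge 2$, whereas you enumerate the six small pairs directly and start the induction at $n\ge 3$ --- your version is slightly more explicit about the applicability of Lemma~\ref{NSreal}, but the core argument is identical.
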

\begin{proof}
The proof is by double induction on $k$ and $n$.
We restrict the attention to even values of $k$.
The argument for odd values of $k$ is similar,
and is thus omitted. 
\\For the case $k=0$, we have
by definition $\alpha'_{0}(n) = \alpha_0(n)$, for all $n \ge 0$, and so the statement follows from Claim \ref{auxim}.
We henceforth assume that $k \ge 2$.
\\Next, consider the case $n < 2$.
In this case, $2(n+2) \le 6$ and $\alpha_{k}(n) = 0$.
Also, notice that $\alpha'_2(6) = 2+ \alpha'_2(3) = 2 + \alpha_2(3) = 4$.
The third assertion of Lemma \ref{base'real} implies that $\alpha'_{k}(6) \le \alpha'_{2}(6)$.
Hence, by the first assertion of Lemma \ref{base'real}, $$\alpha'_{k}(2(n+2)) ~\le~ \alpha'_{k}(6)
~\le~  \alpha'_2(6) ~=~ 4 ~=~ 2 (\alpha_{k}(n) + 2).$$
Suppose next that $2(n+2) \le k+1$.
In this case, by definition $\alpha'_{k}(2(n+2)) = \alpha_{k}(2(n+2))$,
and so the statement follows from Claim \ref{auxim}.
\\\emph{Induction Step:} We assume that for an arbitrary pair $(k,n)$, $n \ge 2$, $k \ge 2$, $2(n+2) \ge k+2$,
the statement holds for all pairs $(k',n')$,
with either $k' < k$ or both $k' = k$ and $n' < n$, and prove it for the pair $(k,n)$.
Since $2(n+2) \ge k+2$,   
we have by definition $\alpha'_{k}(2(n+2)) = 2 + \alpha'_{k}(\alpha'_{k-2}(2(n+2)))$. 
By the induction hypothesis for the pair $(k-2,n)$, we have
$\alpha'_{k-2}(2(n+2)) \le 2(\alpha_{k-2}(n)+2)$. Hence, by the first assertion of Lemma \ref{base'real},
$\alpha'_{k}(\alpha'_{k-2}(2(n+2))) \le \alpha'_{k}(2(\alpha_{k-2}(n) + 2))$.
Since $n \ge 2$, the second assertion of Lemma \ref{basereal} yields
 $\alpha_{k-2}(n) < n$. Hence, the induction hypothesis for the pair $(k,\alpha_{k-2}(n))$
 implies that $\alpha'_{k}(2(\alpha_{k-2}(n)+2)) \le 2(\alpha_{k}(\alpha_{k-2}(n))+2)$.
 Altogether, \begin{eqnarray*}
 \alpha'_{k}(2(n+2)) &=& 2 + \alpha'_{k}(\alpha'_{k-2}(2(n+2)))
 ~\le~ 2 + \alpha'_{k}(2(\alpha_{k-2}(n) + 2)) \\ &\le& 2 + 2(\alpha_{k}(\alpha_{k-2}(n)) + 2)
 ~=~ 2+2(\alpha_{k}(n)+1) ~=~ 2(\alpha_{k}(n)+2). \inQED
 \end{eqnarray*} 
\end{proof}

\end{document}